\newcounter{steplistcount}
\newcounter{steplistcounti}
\newcounter{caselistcount}
\newcounter{caselistcounti}
\newcounter{caselistcountii}
\renewcommand*\thesteplistcount{\arabic{steplistcount}}
\renewcommand*\thesteplistcounti{\arabic{steplistcounti}}
\renewcommand*\thecaselistcount{\arabic{caselistcount}}
\renewcommand*\thecaselistcounti{\arabic{caselistcounti}}
\renewcommand*\thecaselistcountii{\arabic{caselistcountii}}
\newlist{stepList}{description}{2}
\setlist[stepList,1]{%
	before={\setcounter{steplistcount}{0}},
	leftmargin=0em,labelindent=0\parindent,labelsep = \parindent, topsep=0.2\baselineskip, itemsep=\topsep, listparindent=\parindent,
	,font=\normalfont\normalsize\itshape Step~\stepcounter{steplistcount}\thesteplistcount:~ 
}
\setlist[stepList,2]{%
	before={\setcounter{steplistcounti}{0}},%
	leftmargin=0em,labelindent=0\parindent,labelsep = \parindent, itemsep=0.0\baselineskip,topsep=0.0\baselineskip,itemsep =\topsep,  listparindent=\parindent,%
	,font=\normalfont\normalsize\itshape Step~ \stepcounter{steplistcounti}\thesteplistcount.\thesteplistcounti:~ 
}
\newlist{caseList}{description}{3}
\setlist[caseList,1]{%
	before={\setcounter{caselistcount}{0}},%
	leftmargin=0em,labelindent=0\parindent,labelsep = 1em, itemsep=0.1\baselineskip,topsep=0.1\baselineskip,itemsep =\topsep, listparindent=\parindent,%
	,font=\normalfont\normalsize\itshape Case~\stepcounter{caselistcount}\thecaselistcount:~
}
\setlist[caseList,2]{%
	before={\setcounter{caselistcounti}{0}},%
	leftmargin=0em,labelindent=0\parindent,labelsep = 1em, itemsep=0.1\baselineskip,topsep=0.1\baselineskip,itemsep =\topsep,  listparindent=\parindent,%
	,font=\normalfont\normalsize\itshape Case~ \stepcounter{caselistcounti}\thecaselistcount.\thecaselistcounti:~ %
}
\setlist[caseList,3]{%
	before={\setcounter{caselistcountii}{0}},%
	leftmargin=0em,labelindent=0\parindent,labelsep = 1em, itemsep=0.1\baselineskip,topsep=0.1\baselineskip,itemsep =\topsep,  listparindent=\parindent,%
	,font=\normalfont\normalsize\itshape Case~ \stepcounter{caselistcountii}\thecaselistcount.\thecaselistcounti.\thecaselistcountii:~ %
}
\newcommand{\drawVerticalItem}[5][$ $]{%
	\ifthenelse{\boolean{BlackAndWhite}}{%
		\draw[fill = white!85!black, fill opacity = 0.7] (#2,#3) rectangle node[midway, opacity = 1]{#1} (#4,#5)}{%
		\draw[fill = white!50!vi, fill opacity = 0.7] (#2,#3) rectangle node[midway, opacity = 1]{#1} (#4,#5)}}
\newcommand{\drawTallItem}[5][$ $]{%
	\ifthenelse{\boolean{BlackAndWhite}}{%
		\draw[fill = white!65!black, fill opacity = 0.7] (#2,#3) rectangle node[midway]{#1} (#4,#5)}{%
		\draw[fill = white!50!ti, fill opacity = 0.7] (#2,#3) rectangle node[midway, opacity = 1]{#1} (#4,#5)}}
\newcommand{\drawLargeItem}[5][$ $]{	\ifthenelse{\boolean{BlackAndWhite}}{%
		\draw[fill = white!55!black, fill opacity = 0.7] (#2,#3) rectangle node[midway]{#1} (#4,#5)}{%
		\draw[fill = white!50!li, fill opacity = 0.7] (#2,#3) rectangle node[midway, opacity = 1]{#1} (#4,#5)}}
\newcommand{\drawSmallItem}[5][$ $]{	\ifthenelse{\boolean{BlackAndWhite}}{%
		\draw[fill = white!95!black, fill opacity = 0.7] (#2,#3) rectangle node[midway]{#1} (#4,#5)}{%
		\draw[fill = white!50!si, fill opacity = 0.7] (#2,#3) rectangle node[midway, opacity = 1]{#1} (#4,#5)}}
\newcommand{\drawHorizontalItem}[5][$ $]{	\ifthenelse{\boolean{BlackAndWhite}}{%
		\draw[fill = white!75!black, fill opacity = 0.7] (#2,#3) rectangle node[midway]{#1} (#4,#5)}{%
		\draw[fill = white!50!hi, fill opacity = 0.7] (#2,#3) rectangle node[midway, opacity = 1]{#1} (#4,#5)}}
\newcommand{\drawBlueItem}[5][$ $]{%
	\ifthenelse{\boolean{BlackAndWhite}}{%
		\draw[fill = white!85!black, fill opacity = 0.7] (#2,#3) rectangle node[midway]{#1} (#4,#5);%
		\draw[pattern = north west lines] (#2,#3) rectangle node[midway]{#1} (#4,#5);
	}{%
		\draw[fill = white!50!ti, fill opacity = 0.7] (#2,#3) rectangle node[midway, opacity = 1]{#1} (#4,#5)}}
\newcommand\xqed[1]{%
	\leavevmode\unskip\penalty9999 \hbox{}\nobreak\hfill
	\quad\hbox{#1}}
\newcommand\demo{\xqed{$\lhd$}}
\newenvironment{proofClaim}{\noindent\textit{Proof.~}}{\demo \par%
	\addvspace{\baselineskip}
}
\newcommand{\eps}{\varepsilon}
\newcommand{\OPT}{\mathrm{OPT}}
\newcommand{\jobs}{\mathcal{J}}
\newcommand{\conf}{\mathcal{C}}
\newcommand{\sched}{\sigma}
\newcommand{\NN}{\mathbb{N}}
\newcommand{\QQ}{\mathbb{Q}}
\newcommand{\RR}{\mathbb{R}}
\newcommand{\Oh}{\mathcal{O}}
\newcommand{\NP}{\mathrm{NP}}
\newcommand{\Poly}{\mathrm{P}}
\newcommand{\pT}[1]{p(#1)}
\newcommand{\pTs}[1]{p_s(#1)}
\newcommand{\areaI}[1]{\mathrm{area}(#1)}
\newcommand{\epss}{\varepsilon'}
\newcommand{\wind}{\mathcal{W}}
\newcommand{\sset}[2]{\{#1\,|\,#2\}}
\newcommand{\LP}{\mathrm{LP}}
\newcommand{\pre}{\mathrm{split}}
\newcommand{\OPTpre}{\OPT_{\pre}}
\newcommand{\pmax}{p_{\mathrm{max}}}
\newcommand{\res}[1]{R(#1)}
\newcommand{\greedyMakespan}{\makespan}
\newcommand{\jR}[1]{r(#1)}
\newcommand{\makespan}{T}
\newcommand{\post}{\mathrm{post}}
\newcommand{\addTop}{
\geq
}
\newcommand{\intersectingJobs}{
>
}
\newcommand{\jobsT}[1]{\jobs_{#1,\addTop,\mathrm{pre}}} 
\newcommand{\jobsTInter}[1]{\jobs_{#1,\intersectingJobs,\mathrm{pre}}} 
\newcommand{\jobsLTNS}[1]{\jobs_{L,#1,\intersectingJobs,\mathrm{pre}}} 
\newcommand{\jobsLTS}[1]{\jobs_{L,#1,\intersectingJobs,\mathrm{post}}} 
\newcommand{\LLS}{\top} 
\newcommand{\addT}{\lceil T'/2 \rceil_{\eps\delta \greedyMakespan}}
\newcommand{\startPoints}{\mathcal{S}}
\newcommand{\REntry}[2]{R_{#1}^{(#2)}}
\newcommand{\mEntry}[2]{m_{#1}^{(#2)}}
\newcommand{\REntryS}[1]{\REntry{#1}{S}}
\newcommand{\mEntryS}[1]{\mEntry{#1}{S}}
\newcommand{\REntryL}[1]{\REntry{#1}{L}}
\newcommand{\mEntryL}[1]{\mEntry{#1}{L}}
\newcommand{\mRvectorLarge}{(\mEntryL{s},\REntryL{s})_{s \in \mathcal{S}}}
\def\DEBUG{true} 
  \def\rem#1{{\marginpar{\raggedright\scriptsize #1}}}
  \newcommand{\malr}[1]{\rem{\textcolor{OliveGreen}{$\bullet$ #1}}}
  \newcommand{\malr}[1]{}
\title{Closing the gap for single resource constraint scheduling} 
\titlerunning{Closing the gap for single resource constraint scheduling} 
\author{Klaus Jansen}{Kiel University, Germany}{kj@informatik.uni-kiel.de}{https://orcid.org/0000-0001-8358-6796}{}
\author{Malin Rau}{Hamburg University, Germany}{malin.rau@uni-hamburg.de}{https://orcid.org/0000-0002-5710-560X}{}
\authorrunning{K. Jansen and M. Rau} 
\keywords{resource constraint scheduling, approximation algorithm} 
\begin{document}

\maketitle

\begin{abstract}
In the problem \acl{resource}, we are given $m$ identical machines and a set of jobs, each needing one machine to be processed as well as a share of a limited renewable resource $R$.
A schedule of these jobs is feasible if, at each point in the schedule, the number of machines and resources required by jobs processed at this time is not exceeded.
It is $\NP$-hard to approximate this problem with a ratio better than $3/2$. 
On the other hand, the best algorithm so far has an absolute approximation ratio of $2+\eps$.
In this paper, we present an algorithm with absolute approximation ratio~$(3/2+\eps)$, which closes the gap between inapproximability and best algorithm with exception of a negligible small~$\eps$.
\end{abstract}

\begin{acronym}
	\acro{BP}[BP]{bin packing}
	\acro{parallelTSP}[PTS]{parallel task scheduling}
	\acro{moldablePTS}[MTS]{moldable task scheduling}
	\acro{StripPacking}[SP]{strip packing}
	\acro{SPRotations}[SPR]{strip packing with rotations}
	\acro{countiguousPTS}[CTS]{contiguous task scheduling}
	\acro{countiguousMPTS}[CMTS]{contiguous moldable task scheduling}
	\acro{RCS}[RCS]{resource constraint scheduling}
	\acro{resource}[SRCS]{single resource constraint scheduling}
	\acro{MSRCS}[MSRCS]{moldable single resource constraint scheduling}
	\acro{BPCardinality}[BPCC]{bin packing with cardinality constraint}
	\acro{SPCC}[SPCC]{strip packing with cardinality constraint}
	\acro{KKP}[kKP]{knapsack with cardinality constraint}
	\acro{UKKP}[UkKP]{unbounded knapsack with cardinality constraint}
	\acro{MMRS}[MMRS]{max-min-resource-sharing}
\end{acronym}

\newpage

\section{Introduction}
In the \acl{resource} problem, we are given $m$ identical machines, a discrete renewable resource with a fixed size $R \in \NN$ and a set of $n$ jobs $\jobs$. 
Each job has a processing time $\pT{j} \in \QQ$. 
We define the total processing time of a set of jobs $\jobs' \subseteq \jobs$ as $\pT{\jobs'} := \sum_{j \in \jobs'}\pT{j}$.
To be scheduled, each job $j \in \jobs$ needs one of the machines as well as a fix amount $\jR{j} \in \NN$ of the resource which it will allocate during the complete processing time $\pT{j}$ and which it deallocates as soon as it has finished its processing. 
Neither machine nor any part of the resource can be allocated by two different jobs at the same time. 
We define the area of a job as $\areaI{j} := \jR{j} \cdot \pT{j}$ and the area of a set of jobs $\jobs' \subseteq \jobs$ as  $\areaI{\jobs'} := \sum_{j \in \jobs'} \jR{j} \cdot \pT{j}$.

A schedule $\sigma: \jobs \rightarrow \NN$ maps each job $j \in \jobs$ to a starting point $\sigma(j) \in \QQ$. 
We say a schedule is feasible if
\begin{align*}
\forall t\in\QQ: & \sum_{j:t\in [\sigma(j),\sigma(j)+\pT{j})}\jR{j}\leq R \text{ and }&  \hfill \textit{(machine condition)}\\
\forall t\in\QQ: & \sum_{j:t\in [\sigma(j),\sigma(j)+\pT{j})}1\leq m \text{.} &  \hfill \textit{(resource condition)}
\end{align*}
Given a schedule $\sched$ where these two conditions hold, we can generate an assignment of resources and machines to the jobs such that each machine and each resource part is allocated by at most one job at a time, see \cite{Johannes06}. 
The objective is to find a feasible schedule, which minimizes the total length of the schedule called makespan, i.e., we have to minimize $\max_{j \in \jobs} \sigma(j) +p(j)$.

This problem arises naturally, e.g., in parallel computing, where jobs that are scheduled in parallel share common memory or in production logistics where different jobs need a different number of people working on it. 
From a theoretical perspective these problem is a sensible generalization of problems like scheduling on identical machines, parallel task scheduling and \acl{BPCardinality}. 

The algorithm with the best absolute ratio so far is a $(2+\eps)$-approximation by Niemeier and Wiese~\cite{NiemeierW15}.
In this paper, we close this gap between approximation and lower bound by presenting an algorithm with approximation ratio $(3/2+\eps)$.

\begin{theorem}
	\label{thm:SRCS:absoluteApprox}
	There is an algorithm for \acl{resource} with approximation ratio $(3/2 +\eps)$ and running time $\mathcal{O}(n\log(1/\eps)) + \Oh(n)(m\log(R)/\eps)^{\mathcal{O}_{\eps}(1)}$,
	where $\mathcal{O}_{\eps}$ dismisses all factors solely dependent on $1/\eps$.
\end{theorem}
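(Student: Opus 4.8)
The plan is to follow the ``dual approximation, structure theorem, configuration LP'' framework that yielded a tight bound for parallel task scheduling and a near-tight bound for strip packing, adapting it to the extra cardinality constraint that distinguishes single resource constraint scheduling from plain strip packing. Throughout, $\eps$ is fixed and $\delta=\delta(\eps)=\Theta(\eps^{\Oh(1)})$.

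First I would reformulate geometrically: a schedule of makespan $L$ is the same object as a packing that places each job $j$ as an axis-parallel rectangle of width $\pT{j}$ and height $\jR{j}$ into the strip $[0,L]\times[0,R]$, subject to the extra cardinality constraint that at most $m$ rectangles meet any vertical line. Then I would guess the optimum: a crude linear-time constant-factor approximation confines $\OPT$ to an interval of constant multiplicative width, and $\Oh(\log(1/\eps))$ rounds of binary search (each round using only an $\Oh(n)$ feasibility pre-check) narrow it to within $1+\eps$; this accounts for the $\Oh(n\log(1/\eps))$ term. After fixing the guessed value $T$ and rescaling to $T=1$, it suffices either to output a schedule of length at most $3/2+\eps$ or to certify $T<\OPT$.

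Next I would classify $\jobs$ with the thresholds $\delta$, $\delta R$ and $1/2$, $R/2$ into \emph{large} jobs ($\pT{j}>\delta$, $\jR{j}>\delta R$), \emph{wide} jobs ($\pT{j}>\delta$, $\jR{j}\le\delta R$), \emph{tall} jobs ($\pT{j}\le\delta$, $\jR{j}>\delta R$) and \emph{small} jobs, further isolating the \emph{long} wide jobs with $\pT{j}>1/2$ and the \emph{very tall} jobs with $\jR{j}>R/2$; since the optimal box has area $R$, there are at most $\Oh(1/\delta^2)$ large jobs. The core of the proof is then a structure theorem: at the cost of raising the makespan from $T$ to $(3/2+\eps)T$, an optimal schedule can be transformed into a canonical one in which the strip $[0,3/2+\eps]\times[0,R]$ is cut into $\Oh_{\eps}(1)$ axis-parallel boxes, each earmarked for one job class; the processing times and resource requirements of large, wide, and tall jobs are rounded by linear grouping to $\Oh_{\eps}(1)$ distinct values, with the rounding slack hidden in $\Oh(\eps)$ extra width; the rounded large/wide/tall jobs are packed inside their boxes while obeying the cardinality constraint; and the small jobs are placed greedily, next-fit-decreasing-height style, into the remaining free region, which provably has both enough area and enough spare cardinality to absorb them. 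I would build this canonical form by (i) snapping the rounded large jobs onto a constant-resolution grid in both coordinates; (ii) enumerating the $\Oh_{\eps}(1)$ possible \emph{profiles} by which the long and very tall jobs cross the vertical midline and showing that, after a bounded rearrangement, all of them fit around the resulting skeleton; and (iii) redistributing the wide, tall, and small jobs around that skeleton.

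The hard part, and precisely what makes the $(3/2+\eps)$ target different from the known $(2+\eps)$ result, will be step (ii) together with the simultaneous area/height/cardinality bookkeeping in (iii): with only ``one and a half'' time windows available, the long jobs are forced to overlap in time, so one cannot split the instance into two independent strip-packing problems on disjoint windows as the $(2+\eps)$-algorithm does, and a genuinely new, problem-specific packing lemma is needed. Given the structure theorem, the algorithm is routine: enumerate all $\Oh_{\eps}(1)$ box skeletons and midline profiles; for each, decide which rounded job goes into which box via a configuration ILP with $\Oh_{\eps}(1)$ variable groups, $\Oh_{\eps}(1)$ ordinary constraints and the cardinality rows, solved by rounding its LP relaxation (the $\Oh_{\eps}(1)$ fractionally assigned jobs being dropped into the $\Oh(\eps)$ reserved width); pack the small jobs greedily; and return the best schedule obtained. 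The LP-solving and grouping take $(m\log R/\eps)^{\Oh_{\eps}(1)}$ time on top of a linear scan of the jobs, giving the $\Oh(n)(m\log R/\eps)^{\Oh_{\eps}(1)}$ term, and correctness is immediate: for the correct guess of $T$ the structure theorem guarantees some enumerated configuration admits a feasible fractional assignment, whereas if no configuration works then $T<\OPT$.
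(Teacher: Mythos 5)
Your high-level framework (dual approximation via binary search, rounding into $\Oh_\eps(1)$ size classes, configuration LP with $\Oh_\eps(1)$ fractional variables absorbed into an $\Oh(\eps)$ slack) matches the spirit of the paper, but the route you take at the decisive step differs, and that is precisely where your argument has a hole. You classify jobs by both coordinates (large/wide/tall/small, plus long and very tall), snap them to a grid, and appeal to a box-based ``structure theorem'' in the style of strip packing APTASes. The paper instead classifies primarily by processing time, singles out the \emph{huge} jobs with $\pT{j} > T'/2$, and exploits the fact that every huge job must be active at time $T'/2$ in any schedule of makespan $T'$. It then lengthens the schedule by $\lceil T'/2\rceil_{\eps\delta T}$ and shifts jobs so that almost all huge jobs end at the common time $T'$, thereby creating a window of length $\lceil T'/2\rceil_{\eps\delta T}$ into which the $\Oh_\eps(1)$ fractionally scheduled large jobs are inserted. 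The whole technical substance of Lemma~\ref{lma:hugeJobs} is showing that such a window with simultaneously $k$ free machines \emph{and} $\gamma R$ free resources always exists, via a case analysis on auxiliary times $\tau, \tau', \rho, \rho'$ depending on how the resource is occupied around $T'/2$.

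You explicitly write that ``a genuinely new, problem-specific packing lemma is needed'' to handle the jobs crossing the midline under two constraints at once, and you correctly identify that as the reason the $(2+\eps)$-style decomposition into disjoint windows fails. But you never supply that lemma; you just assert that after ``a bounded rearrangement, all of them fit around the resulting skeleton.'' That is not a minor detail to be filled in later --- it is the entire content of the theorem. Without it you cannot certify that your enumerated box skeletons ever produce a feasible packing for the correct guess of $T$, so the ``correctness is immediate'' claim at the end does not follow. What is needed (and what the paper provides) is a constructive argument that a rounded optimal schedule can be turned into one in which: (i) the common-finish structure for huge jobs can be guessed in polynomial time, (ii) the fractional remainder of the LP for large jobs finds a place, and (iii) the per-layer machine/resource profile available to small jobs dominates the profile of a rounded optimum. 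The two-constraint interplay makes this genuinely delicate: unlike strip packing, you cannot simply shift every ``tall'' item to a common baseline, because freeing machines may consume resource and vice versa. Until a concrete lemma establishing the existence of such a gap (or an equivalent rearrangement) is given, the proposal does not prove the theorem.

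A secondary, more cosmetic issue: your $\Oh(n\log(1/\eps))$ term is attributed to a per-round ``$\Oh(n)$ feasibility pre-check,'' but the binary search in the paper calls the full LP-based placement routine in every round; the $\Oh(n\log(1/\eps))$ term comes from the one-time simplification (sorting, geometric grouping of the small jobs) and not from cheap pre-checks. This does not affect correctness but your running time accounting, as stated, would not add up.
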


As a by-product of this algorithm, we also present an algorithm, which has an approximation guarantee of $(1+\eps)\OPT + \pmax$. 
Note that we can scale each instance such that $\pmax = 1$ and hence we can see $\pmax$ as a constant independent of the instance. 
Algorithms with an approximation guarantee of the form $(1+\eps)\OPT + c$ fore some constant $c$ are called asymptotic polynomial time approximation schemes (APTAS).
Note that this algorithm is a $(2+\eps)$-approximation as well, but improves this ratio, in the case that $\pmax$ is strictly smaller than $\OPT$ and for $\pmax < \OPT/2$ improves the approximation ratio of the algorithm from Theorem \ref{thm:SRCS:absoluteApprox}.
\begin{theorem}
	\label{thm:SRCS:APTAS}
	There is an APTAS for \acl{resource} with an additive term $\pmax$ and running time 
	$\mathcal{O}(n\log(1/\eps+n)) + \Oh(n)(m\log(R)/\eps)^{\mathcal{O}_{\eps}(1)}$.
\end{theorem}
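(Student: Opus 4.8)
After scaling so that $\pmax=1$, the plan is to adapt the configuration-LP machinery behind the asymptotic approximation schemes for strip packing to the additional cardinality constraint imposed by the $m$ machines. We first guess a value $T$ with $\OPT\le T\le(1+\eps)\OPT$; since $\OPT$ lies among $\Oh(n)$ candidate makespans and we only need a $(1+\eps)$-approximation of it, $\Oh(\log(1/\eps+n))$ guesses suffice, which together with sorting the jobs accounts for the $n\log(1/\eps+n)$ term; fix such a $T$. A feasible schedule is now a packing of the jobs, viewed as axis-parallel rectangles of width $\jR{j}$ and height $\pT{j}$, into a strip of width $R$ and height equal to the makespan, subject to the added requirement that every horizontal line meets at most $m$ rectangles. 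Call a job \emph{wide} if $\jR{j}>\eps R$ and \emph{narrow} otherwise, and \emph{long} if $\pT{j}>\eps T$ and \emph{short} otherwise; note that long jobs occur only when $T<1/\eps$, i.e.\ when $\OPT=\Oh(\pmax/\eps)$.

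\textbf{Wide and long jobs.} At most $1/\eps$ wide jobs run simultaneously. We apply linear grouping to the wide jobs ordered by resource demand, cutting them into $\Oh(1/\eps^2)$ groups of equal total processing time, rounding each group's demand up to the largest demand in it, and setting one group aside to be appended at the end of the schedule; this costs an extra $\eps\OPT+\pmax$ in makespan. Analogously we round the processing times of long jobs to $\Oh(1/\eps^2)$ values, which is harmless since long jobs only matter when the time axis has length $\Oh(1/\eps)$. We then solve a configuration LP whose configurations are the multisets of rounded (demand, length) pairs of wide jobs that fit into a vertical slice without exceeding resource $R$ or cardinality $m$, each configuration additionally recording the residual resource $R-r(C)$ and residual machine count $m-|C|$ left for narrow jobs; there are only $(1/\eps)^{\Oh_{\eps}(1)}$ configurations, so a basic optimal solution of value at most $T$ is obtained directly. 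Its $\Oh_{\eps}(1)$ configuration ``blocks'' are stacked with suitably rounded heights, the wide jobs are placed into the matching slots, and the fractional cuts across block boundaries are repaired by the refined strip-packing argument, so that this phase loses only $\Oh(\pmax)$ in total --- and exactly $\pmax$ under the same accounting as the classical additive-$\pmax$ strip-packing scheme.

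\textbf{Narrow jobs.} Each narrow job must still be placed inside the residual space of the blocks: within a block of configuration $C$ the free resource at every height is $R-r(C)$ and at most $m-|C|$ further jobs may run there. We let the configuration LP place the narrow jobs fractionally as well, after rounding the narrow demands to $\Oh(\log(R)/\eps)$ geometrically-spaced values; its pricing problem is then a knapsack problem with a cardinality bound, which a dynamic program solves in time polynomial in $n$, $m$, $\log R$ and $1/\eps$ --- this is the source of the $(m\log R/\eps)^{\Oh_{\eps}(1)}$ factor. Converting the fractional narrow placement inside each block into an integral one is a next-fit-decreasing-height-type step whose overflow is controlled by $\pmax$, and the $\Oh(\eps)$-fraction of narrow area that cannot be placed without a small overflow is collected, together with the set-aside wide jobs, into one final block of height $\Oh(\eps)\OPT+\pmax$, using that the total narrow area is at most $R\cdot T$ and the total processing time is at most $mT$.

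\textbf{Assembly and main obstacle.} Stacking all blocks gives a feasible schedule of makespan $(1+\Oh(\eps))\OPT+\Oh(\pmax)$; replacing $\eps$ by a constant multiple and undoing the scaling yields the claimed $(1+\eps)\OPT+\pmax$ within the stated running time. The crux --- and where this goes beyond the classical schemes --- is the simultaneous handling of the two capacities: a wide-job configuration must leave room in the resource dimension \emph{and} in the machine count for the narrow jobs, so the configuration LP and, more delicately, its rounding back to an integral schedule have to be carried out without the usual one-dimensional slack, and the greedy placement of narrow items from the classical scheme has to be replaced by a cardinality-constrained knapsack/bin-packing routine whose running time stays polynomial, rather than exponential, in the machine number $m$.
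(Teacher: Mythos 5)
Your proposal takes a genuinely different route from the paper. You treat the problem as cardinality\hyp{}constrained strip packing and adapt the de la Vega--Lueker/Kenyon--R\'emila configuration-LP machinery: linear grouping for wide jobs, a configuration LP over wide profiles with residual windows, and a cardinality-constrained knapsack pricer for narrow jobs. The paper instead partitions jobs by processing time into \emph{large, medium and small} using a pigeonhole-chosen gap $(\mu,\delta)$ with $\mu=\delta\eps^{4}$ (\cref{lma:gapMediumJobsResource}); it handles large jobs with an \emph{assignment} LP after guessing, for every time layer, a geometrically rounded machine and resource profile (this is where the $(m\log R/\eps)^{\Oh_\eps(1)}$ factor actually comes from; \cref{lma:schedulingLargeJobs}), and only the sub-$\mu T$ jobs go through the configuration-LP / windows machinery (\cref{lma:schedulingSmallJobs}).

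The crucial gap in your proposal is the overflow accounting, which your own ``main obstacle'' paragraph flags but does not resolve. Your small-job threshold is $\eps T$, and the final rounding to an integral schedule loses up to one maximal short job per block (configuration), of which there are $\Oh_\eps(1)$ many, where that constant is of order $(1/\eps)^{1/\eps}$. Summing gives an overflow of $\Oh_\eps(1)\cdot\eps T$, which is \emph{not} $\Oh(\eps)\OPT$, and the claim that this can be absorbed into $\eps\OPT+\pmax$ fails when $T$ is close to $\pmax$. The paper escapes this precisely through the medium-job gap: small jobs have processing time $<\mu T$, the number of used configurations plus fractionally scheduled narrow jobs is $\Oh(1/\eps^{3}\delta)$, and by construction $\mu\cdot\Oh(1/\eps^{3}\delta)=\Oh(\eps)$. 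Relatedly, the single additive $\pmax$ in the paper is charged only once, to the $\Oh_\eps(1)$ large jobs removed from the assignment LP, which all start simultaneously at time $(1+\Oh(\eps))\OPT$; in your version the set-aside wide group and the truncated block fragments are multiple distinct sources of $\pmax$-sized losses, and the standard strip-packing arguments you appeal to give an additive term of order $\Oh_\eps(1)\cdot\pmax$, not the tight $\pmax$ claimed. Without something playing the role of the $(\mu,\delta)$ gap and the explicit profile-guess\,+\,assignment-LP step, the approach as written does not reach the stated guarantee.
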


In the schedule generated by this APTAS, almost all jobs are completed before $(1+\Oh(\eps))\OPT$, except for a small set $\jobs'$ of jobs that all start simultaneously at $(1+\Oh(\eps))\OPT$, after the processing of all other jobs is finished.
The processing of this set $\jobs'$ causes the additive term $\pmax$.


\paragraph*{Methodology and Organization of this Paper}

In Section \ref{sec:SRCS:APTAS}, we will present the APTAS from Theorem \ref{thm:SRCS:APTAS}.
The general structure of the algorithm can be summarized as follows.  
First, we simplify the instance by rounding the processing time of the jobs and partitioning them into large, medium, and small corresponding to their processing time. 
Afterward, we use some linear programming approaches to find a placement of these jobs inside the optimal schedule.
The few jobs that are placed fractional with this linear program will be placed on top of the schedule contributing to the set $\jobs'$ which was mentioned before.

As usual for this kind of algorithms, we divide the jobs into large and small jobs. 
While for the placement of small jobs, we use techniques already known, see e.g. \cite{journals/talg/JansenMR19}, we used a new technique to place the large jobs.
For the following $(3/2+\eps)$ approximation, it is important to guarantee that only $\Oh_{\eps}(1)$, i.e. constant in $1/\eps$, jobs are not placed inside the optimal scheduling area.
To find such a schedule, we divide the schedule into $\Oh_{\eps}(1)$ time slots and guess the machine requirement and a rounded resource requirement of large jobs during this slot. 
Afterward, the large jobs are placed inside this profile using a linear program, which schedules only $\Oh_{\eps}(1)$ of these jobs fractionally.
We have to remove these $\Oh_{\eps}(1)$ fractionally scheduled jobs, as well as only one extra job per time slot due to the rounded guess, and assign them to the set $\jobs'$. 

Afterward, in Section \ref{sec:SRCS:absoluteRatio}, we present the $(3/2 +\eps)$-approximation and prove Theorem \ref{thm:SRCS:absoluteApprox}.
Instead of placing the fractional jobs on top of the schedule, we stretch the schedule by $(1/2 +\Oh(\eps))\OPT$, and place the fractional scheduled large jobs inside a gap in this stretched schedule.
This stretching allows us to define a common finishing point of (almost) all the jobs, which have a processing time larger than $\OPT/2$ and, thus, we avoid scheduling them fractionally with the linear program.
While the general idea of creating such a gap was used before, e.g., in \cite{Jansen12}, the novelty of this approach is to search for this gap at multiple points in time while considering two and not only one constraint.
This obstacle of considering two instead of one constraint while searching for a gap requires a more careful analysis of the shifted schedule, as was needed in other gap constructions.

\paragraph*{Related Work}
The problem \acl{RCS} is one of the classical problems in scheduling.
It was first studied in 1975 by Garey and Graham \cite{GareyG75}.
Given $m$ identical machines and capacities $R_1, \dots, R_s$ of $s$ distinct resources such that each job requires a share of each of them, they proved that the greedy list algorithm produces a schedule of length at most $(s +2 -(2s+1)/m)\OPT$. 
This corresponds to an approximation ratio of $(3 -3/m)$ for the case of $s = 1$ i.e., the problem studied in this paper.
In the same year Garey and Johnson \cite{garey1975complexity} showed that this general scheduling problem is NP-complete even if just one resource is given, i.e., $s = 1$. 
Lately, Niemeier and Wiese \cite{NiemeierW15} presented a $(2+\eps)$-approximation for \acl{resource}, and this is the best known ratio so far.

Note that the problem \acl{resource} contains multiple problems as subproblems.
When all the processing times are equal to one, this problem corresponds to \acl{BPCardinality}. 
Hence there is no algorithm with an approximation guarantee better than $3/2$ for this problem unless $\Poly = \NP$. 
For \acl{BPCardinality} Epstein and Levin \cite{EpsteinLevin} presented an AFPTAS.
This AFPTAS was improved and extended to work for \acl{resource} by Jansen et al. \cite{journals/talg/JansenMR19}. 
It has an additive term of $\mathcal{O}(\pmax\log(1/\eps)/\eps)$.

On the other hand, if the number of machines $m$ is larger than $n$, the constraint that only $m$ jobs can be processed at the same time is no longer a restriction.
The resulting problem is known as the parallel task scheduling problem. 
This problem is strongly NP-complete for $R\geq 4$ \cite{HenningJRS19} and there exists a pseudo polynomial algorithm for $R \leq 3$ \cite{DuL89a}.
Furthermore, for $R$ constant and $R \in n^{\Oh(1)}$ there exists polynomial time approximation schemes by Jansen and Porkolab \cite{JansenP02} as well as Jansen and Th\"ole \cite{JansenT10} respectively.
For an arbitrary large $R$ there exists no algorithm with approximation ratio smaller than $3/2$ unless $P = NP$.
The best algorithm for this scenario is a $(3/2+\eps)$ approximation by Jansen \cite{Jansen12}.

Finally, if each job requires at most $R/m$ from the resource, the used resources are no longer a restriction.
The corresponding problem is known as the NP-hard problem scheduling on identical machines.
For this problem several algorithms are known, see, e.g., \cite{Graham1969, Hochbaum,alon1998approximation, JansenKV16}.

An interesting extension of the considered problem is the consideration of resource dependent processing times. 
In this scenario, instead of a fixed resource requirement for the jobs, each jobs processing time depends on the amount of allocated resources.
The first result for this extension was achieved by Grigoriev et al. \cite{GUS07}.
They studied a variant where the processing time of a job depends on the machine it is processed on as well as the number of assigned resources and described a 3.75-approximation.
For the case of identical machines this result was improved by Kellerer \cite{Kel08} to a $(3.5+ \eps)$-approximation.
Finally, Jansen at al \cite{journals/talg/JansenMR19} presented an AFPTAS for this problem with additive term $\Oh(\pi_{\max}\log(1/\eps)/\eps)$, where $\pi_{\max}$ is the largest occurring processing time considering all possible resource assignments.

Closely related to \acl{resource} is the strip packing problem. 
Here we are given a set of rectangular items that have to be placed overlapping free into a strip with bounded width and infinite height. 
We can interpret \acl{resource} as a Strip Packing problem by setting the jobs processing time to the height of a rectangular item and the resource requirement to its width. The difference to strip packing is now, that when placing an item, we are allowed to slice it vertically as long as the lower border of all slices are placed at the same \emph{vertical level}. Furthermore, we have an \acl{resource} carnality condition that allows only $m$ items to intersect each horizontal line through the strip.
The strip packing problem has been widely studied~\cite{BakerBK81, BakerCR80, BougeretDJRT2011, CoffmanGJT80, Golan, HarrenS09, JansenS09, Kenyon00, Schiermeyer94, Sleator80, Steinberg97, Sviridenko12}.
The algorithm with approximation ratio $(5/3+\eps)$, which is the smallest so far, was presented by Harren, Jansen, Prädel, and van Stee~\cite{HarrenJPS14}.
On the other hand, using a reduction from the partition problem, we know that there is no polynomial-time algorithm with an approximation ratio smaller than $3/2$. Closing this gap between the best approximation ratio and lower bound represents an open question.
Strip packing has also been studied with respect to asymptotic approximation ratio~\cite{BakerBK81, CoffmanGJT80, Golan,Kenyon00}.  
The best algorithms in this respect are an $AFPTAS$ with additive term $\Oh(1/\eps\log(1/\eps))h_{\max}$~\cite{Sviridenko12,BougeretDJRT2011} and an APTAS with additive term $h_{\max}$~\cite{JansenS09}, where $h_{\max}$ is the tallest height in the set of given items.
Finally, this problem has been studied with respect to pseudo-polynomial processing time~\cite{TurekWY92,JansenT10, NadiradzeW16, GalvezGIK16, tcs/JansenR19,AdamaszekKPP17}, where the width of the strip is allowed to occur polynomial in the running time of the algorithm. 
There is no pseudo-polynomial algorithm with an approximation ratio smaller than $5/4$~\cite{HenningJRS19} and a ratio of $(5/4+\eps)$ is achieved by the algorithm in~\cite{JansenR19}.

\section{APTAS with additive term \texorpdfstring{$p_{\max}$}{}}
\label{sec:SRCS:APTAS}
In this section, we present an asymptotic PTAS for the single resource constraint scheduling problem which has an approximation guarantee of $(1+\eps)\OPT + p_{\max}$ and a running time of $\mathcal{O}(n\log(n)) + (m\log(R))^{\mathcal{O}_{\eps}(1)}$, i.e., we prove Theorem \ref{thm:SRCS:APTAS} in this section. 
This algorithm represents the base for the $(3/2 +\eps)$-approximation, see Section \ref{sec:SRCS:absoluteRatio}.
For simplicity of notation, we assume that $1/\eps \in \NN$ in the following sections.
If it is not, the algorithm will use $1/\lceil 1/\eps\rceil$ as value for $\eps$.
Remember that we can assume $n > m$ since otherwise the machine constraint can never be violated and thus the problem corresponds to \acl{parallelTSP} (for which such an algorithm is already known). 

The general structure of the algorithm can be summarized as follows.
First, we find a lower bound $T$ on the optimal makespan of the schedule and use it, to simplify the instance.
The simplification contains the follwoing steps. First partition the jobs into large, medium, and small jobs dependent on their processing times compared to $T$. 
Depending on this partition the processing times and resource requirement of the jobs are rounded, see Section \ref{sec:SRCS:APTAS:Simplify}.
Afterward, we use a binary search framework and a dual approximation approach:
Given a target makespan $T'$, we guess the structure of the optimal solution and try to place the jobs according to this guess.
The placement of the three groups of jobs (i.e., large, medium, and small) happens independently and is described in Sections \ref{sec:SRCS:APTAS:largeJobs} and \ref{sec:SRCS:APTAS:smallJobs} respectively.
A more detailed summary of the algorithm can be found in Section \ref{sec:SRCS:APTAS:Summary}.

\subsection{Simplifying the input instance}
\label{sec:SRCS:APTAS:Simplify}
In the first step of the algorithm, we simplify the given instance such that it has a simple structure and a reduced set of processing times. 
Consider the lower bound on the optimal makespan $T:= \min\{\pmax, \areaI{\jobs}/R, \pT{\jobs}/m\}$. 
By the analysis of the greedy list schedule, as described in \cite{NiemeierW15}, we know that the optimal schedule has a size of at most $\frac{1}{m}\pT{\jobs} + \frac{2}{R}\areaI{\jobs} + \pmax \leq  4 T$.


In the next step, we will create a gap between jobs with a large processing time and jobs with a small processing time, by removing a set of medium sized jobs.
We want to schedule this set of medium sized jobs in the beginning or end of the schedule using the greedy algorithm. 
However this schedule of the medium sized items should add at most $\Oh(\eps) \OPT$ to the makespan.
The schedule generated by the greedy list schedule algorithm has a makespan of at most $\frac{1}{m}\pT{\jobs} + \frac{2}{R}\areaI{\jobs} + \pmax \leq 4 \OPT$.
Hence, we choose the set of medium jobs $\jobs_M$ such that $\pmax(\jobs_M) \leq \eps \OPT$, i.e. the maximal processing time appearing in the set of medium jobs is bounded by $\eps \OPT$.
On the other hand, the total area and the total processing time of the medium jobs should be small enough.
\begin{lemma}
	\label{lma:gapMediumJobsResource}
	Consider the sequence $\gamma_0 = \eps$, $\gamma_{i+1}= \gamma_i\eps^{4}$. 
	There exists an $i \in \{1, \dots, 1/\eps\}$ such that 
	\begin{align}
	\frac{1}{m}\pT{\jobs_{\gamma_i}} + \frac{2}{R} \areaI{\jobs_{\gamma_i}} \leq \eps\left(\frac{1}{m} \pT{\jobs} + \frac{2}{R} \areaI{\jobs}\right)\mathrm{,}
	\label{eq:mediumJobs}
	\end{align}
	where $\jobs_{\gamma_i} := \sset{j \in \jobs}{\pT{j} \in [\gamma_iT,\gamma_{i-1}T)}$ and we can find this $i$ in $\Oh(n +1/\eps)$.
\end{lemma}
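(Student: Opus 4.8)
The statement is a standard geometric‑windows averaging (\emph{shifting}) argument, and my plan is as follows.

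First I would record that the sequence $(\gamma_i)_{i\ge 0}$ is strictly decreasing, since $\gamma_{i}=\gamma_{i-1}\eps^{4}<\gamma_{i-1}$. Hence, for $i\in\{1,\dots,1/\eps\}$, the half‑open intervals $[\gamma_iT,\gamma_{i-1}T)$ are pairwise disjoint — in fact consecutive, their union being $[\gamma_{1/\eps}T,\gamma_0T)$ — and therefore the job sets $\jobs_{\gamma_i}=\sset{j\in\jobs}{\pT{j}\in[\gamma_iT,\gamma_{i-1}T)}$ are pairwise disjoint subsets of $\jobs$. Since $\pT{\cdot}$ and $\areaI{\cdot}$ are additive over disjoint sets of jobs, putting $\jobs':=\bigcup_{i=1}^{1/\eps}\jobs_{\gamma_i}\subseteq\jobs$ gives
\[
\sum_{i=1}^{1/\eps}\left(\frac1m\pT{\jobs_{\gamma_i}}+\frac2R\areaI{\jobs_{\gamma_i}}\right)
\;=\;\frac1m\pT{\jobs'}+\frac2R\areaI{\jobs'}
\;\leq\;\frac1m\pT{\jobs}+\frac2R\areaI{\jobs}.
\]
As the left‑hand side is a sum of $1/\eps$ non‑negative numbers, at least one summand is at most an $\eps$‑fraction of the total; for that index $i$ this is exactly inequality \eqref{eq:mediumJobs}, which proves existence of the desired $i$.

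To locate such an $i$ in time $\Oh(n+1/\eps)$ I would avoid sorting the jobs. I would allocate $1/\eps$ accumulators, one per window, and generate the boundaries $\gamma_1T,\dots,\gamma_{1/\eps}T$ on the fly with one multiplication each ($\Oh(1/\eps)$ total). Then a single scan over the jobs suffices: a job $j$ lies in window $i$ precisely when $\pT{j}/T\in[\eps^{4i+1},\eps^{4i-3})$, i.e.\ when $i=\lceil(\log_{\eps}(\pT{j}/T)-1)/4\rceil\in\{1,\dots,1/\eps\}$ (and $j$ belongs to no window, hence is skipped, if $\pT{j}\ge\eps T$ or $\pT{j}<\eps^{4/\eps+1}T$); in the former case $\pT{j}$ and $\areaI{j}$ are added to the two accumulators of window $i$. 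A final pass over the $1/\eps$ accumulators, comparing each against $\eps\bigl(\frac1m\pT{\jobs}+\frac2R\areaI{\jobs}\bigr)$, returns a valid index, which exists by the previous paragraph. Altogether this is $\Oh(n+1/\eps)$.

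There is no substantial obstacle here; the only point that warrants a second look is that the window index of a job can be computed in constant time (using a unit‑cost $\lfloor\log_\eps(\cdot)\rfloor$ in the standard RAM model). If one insists on a weaker model, one can instead binary‑search the precomputed list of $1/\eps$ boundaries in $\Oh(\log(1/\eps))$ time per job, still well within the $\Oh(n\log(1/\eps))$ term already present in Theorem~\ref{thm:SRCS:APTAS}.
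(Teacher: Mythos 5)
Your existence argument is the same as the paper's: the $1/\eps$ windows are disjoint, so by the pigeonhole principle at least one contributes at most an $\eps$-fraction of the total $\frac1m\pT{\jobs}+\frac2R\areaI{\jobs}$. Where you differ is the running-time justification: the paper's proof merely notes that by sorting the jobs one can identify the sets in $\Oh(n\log n)$, which in fact does not match the $\Oh(n+1/\eps)$ bound claimed in the lemma statement, whereas your bucketed single-pass scheme (with a unit-cost $\log_\eps$, or a binary search over the $1/\eps$ precomputed boundaries if one prefers) actually achieves $\Oh(n+1/\eps)$ (resp.\ $\Oh(n\log(1/\eps)+1/\eps)$). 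So your argument is not only correct but resolves a small internal inconsistency in the paper's proof. The only minor caveat worth flagging is that you should be explicit that jobs with $\pT{j}\ge \eps T$ or $\pT{j}<\gamma_{1/\eps}T$ are simply discarded in the scan, and that writing the window index as $\lceil(\log_\eps(\pT{j}/T)-1)/4\rceil$ assumes $T>0$, which holds unless the instance is trivially empty.
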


\begin{proof}
	Note that the $1/\eps$ sets $\jobs_{\gamma_i}$ are disjoint subsets of $\jobs$.
	Hence, by the pigeon principle there has to be one set, which fulfills condition (\ref{eq:mediumJobs}).
	Obviously by sorting the jobs, we can find the sets $\jobs_{\gamma_i}$ in $\Oh(nlog(n))$ for each $i \in \{1, \dots, 1/\eps\}$.
\end{proof}

Let $i \in \{1, \dots, 1/\eps\}$ be the smallest value such that $\jobs_{\gamma_i}$ has the property from Lemma \ref{lma:gapMediumJobsResource} and define $\mu := \gamma_i$ and $\delta := \gamma_{i-1}$. 
Note that $\gamma_i = \eps^{1+4i}$ and hence $\delta \geq \eps^{4/\eps+1}$.
Using these values for $\delta$ and $\mu$, we partition set of jobs into large $\jobs_L := \{j \in \jobs | \pT{j} \geq \delta \greedyMakespan\}$, small $\jobs_S := \{j \in \jobs| \pT{j} < \mu \greedyMakespan\}$ and medium $\jobs_M := \{j \in \jobs | \mu \greedyMakespan \leq \pT{j} < \delta \greedyMakespan\}$. 

\begin{lemma}
	\label{lma:mediumJobs}
The medium jobs can be scheduled in $\Oh(n \log (n))$ operations with makespan $\Oh(\eps)T$
\end{lemma}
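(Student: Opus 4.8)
The plan is to apply the greedy list scheduling algorithm to the set $\jobs_M$ of medium jobs in isolation and bound the resulting makespan using the classical Garey--Graham analysis together with the area and processing-time bounds established in Lemma \ref{lma:gapMediumJobsResource}. First I would invoke the standard fact (the same one used twice already in Section \ref{sec:SRCS:APTAS:Simplify}, following \cite{NiemeierW15,GareyG75}) that the greedy list schedule applied to any job set $\jobs'$ on $m$ machines with resource bound $R$ produces a schedule of length at most $\frac{1}{m}\pT{\jobs'} + \frac{2}{R}\areaI{\jobs'} + \pmax(\jobs')$. Applying this to $\jobs' = \jobs_M = \jobs_{\gamma_i} = \jobs_\mu$, the first two terms together are bounded by $\eps\bigl(\frac{1}{m}\pT{\jobs} + \frac{2}{R}\areaI{\jobs}\bigr)$ by inequality \eqref{eq:mediumJobs}, and since $T = \min\{\pmax, \areaI{\jobs}/R, \pT{\jobs}/m\}$ we have $\frac{1}{m}\pT{\jobs} \le \OPT$ and $\frac{1}{R}\areaI{\jobs} \le \OPT \le 4T$, so this contribution is $\Oh(\eps)T$.

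For the additive $\pmax(\jobs_M)$ term I would use the definition of the partition: every job in $\jobs_M$ satisfies $\pT{j} < \delta T = \gamma_{i-1} T$. Since $i \ge 1$ and $\gamma_0 = \eps$, we have $\gamma_{i-1} \le \gamma_0 = \eps$ in the case $i=1$; for $i \ge 2$ one has $\gamma_{i-1} = \eps^{1+4(i-1)} \le \eps$ as well. Hence $\pmax(\jobs_M) < \eps T$, so the greedy makespan for $\jobs_M$ is at most $\Oh(\eps)T + \eps T = \Oh(\eps)T$. (If $\jobs_M = \emptyset$ the statement is trivial.) Concatenating this block either at the very start or the very end of any schedule for $\jobs_L \cup \jobs_S$ preserves feasibility of both the machine and resource conditions, since the two schedules occupy disjoint time intervals.

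For the running time, I would note that sorting the jobs by processing time and identifying the index $i$ from Lemma \ref{lma:gapMediumJobsResource} takes $\Oh(n\log n)$, and the greedy list schedule itself runs in $\Oh(n\log n)$ (or $\Oh(|\jobs_M|\log|\jobs_M|)$) time — repeatedly assigning the next job to the earliest available machine whose concurrent resource usage permits it, using a priority queue over machine release times. This gives the claimed $\Oh(n\log n)$ bound. The only mildly delicate point is confirming that $\gamma_{i-1} \le \eps$ for the relevant range of $i$ so that the additive $\pmax(\jobs_M)$ term is genuinely $\Oh(\eps)T$ rather than merely $\Oh(\delta)T$; everything else is a direct substitution into the Garey--Graham bound, so I do not anticipate a real obstacle here.
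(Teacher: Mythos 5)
Your proposal is correct and takes essentially the same approach as the paper: run the greedy list schedule on $\jobs_M$ alone, invoke the Garey--Graham bound $\frac{1}{m}\pT{\jobs_M} + \frac{2}{R}\areaI{\jobs_M} + \pmax(\jobs_M)$, control the first two terms via Lemma~\ref{lma:gapMediumJobsResource}, and observe $\pmax(\jobs_M) < \delta T \le \eps T$. The only cosmetic difference is that you route through $\OPT \le 4T$ to bound $\frac{1}{m}\pT{\jobs}$ and $\frac{1}{R}\areaI{\jobs}$, which gives a slightly looser constant than the paper's direct $\le 3T$, but both are $\Oh(\eps)T$.
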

\begin{proof}
	We use the greedy algorithm described in \cite{NiemeierW15} to schedule the medium jobs in $\Oh(n \log (n))$ operations.
	The resulting schedule has a makespan of at most $\frac{1}{m}\pT{\jobs_M} + \frac{2}{R}\areaI{\jobs_M} + \pmax \leq 4\eps T$, by the choice of the medium jobs and Lemma \ref{lma:gapMediumJobsResource}.
\end{proof}

The final simplification step is to round the processing times of the large jobs using the rounding in Lemma \ref{lma:rounding} to multiples of $\eps\delta T$. 
\begin{lemma}[See \cite{tcs/JansenR19}]
	\label{lma:rounding}
	Let be $\delta \geq \eps^k$ for some value $k \in \mathbb{N}$. 
	At loss of a factor $(1+2\eps)$ in the approximation ratio, we can round the processing time of each job $j$ with processing time $\eps^{l-1} \greedyMakespan\geq p(j) \geq \eps^l \greedyMakespan$ for some $l\in \mathbb{N}\leq k$ such that it has a processing time $k_j \eps^{l+1} \greedyMakespan$ for a value $k_j \in \{1/\eps,\dots 1/\eps^2 -1\}$. 
	Furthermore, the jobs can be started at a multiple of $\eps^{l+1}\greedyMakespan$.
\end{lemma}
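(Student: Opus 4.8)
The plan is to establish both parts of the lemma — that processing times can be rounded to class granularities at a $(1+2\eps)$ factor loss, and that in addition every job can be made to start on its grid — with a single stretching argument. First I would group the jobs into geometric scale classes: for each relevant $l$, class $l$ consists of the jobs $j$ with $\pT{j}\in[\eps^l\greedyMakespan,\eps^{l-1}\greedyMakespan)$; since $\delta\geq\eps^k$, the jobs we care about have $l\leq k$. For a job $j$ of class $l$ set $k_j:=\lceil \pT{j}/(\eps^{l+1}\greedyMakespan)\rceil$ and $\pTr{j}:=k_j\,\eps^{l+1}\greedyMakespan$. From $\pT{j}\geq\eps^l\greedyMakespan=(1/\eps)\eps^{l+1}\greedyMakespan$ and $\pT{j}<\eps^{l-1}\greedyMakespan=(1/\eps^2)\eps^{l+1}\greedyMakespan$ one gets $k_j\in\{1/\eps,\dots,1/\eps^2\}$, and $k_j=1/\eps^2$ forces $\pTr{j}=\eps^{l-1}\greedyMakespan$; in that boundary case I would simply move $j$ to class $l-1$, where it has $k_j=1/\eps$, so that after this cleanup $k_j\in\{1/\eps,\dots,1/\eps^2-1\}$ as claimed (the topmost class is handled by the analogous convention). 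Also $\pT{j}\leq\pTr{j}\leq\pT{j}+\eps^{l+1}\greedyMakespan\leq(1+\eps)\pT{j}$, so any schedule feasible for the rounded instance is feasible for the original one (rounded jobs merely block their machine and resource share a bit longer), and it remains only to produce a feasible schedule of the rounded instance with makespan $\leq(1+2\eps)\OPT$ in which every job starts at a multiple of its granularity $\eps^{l+1}\greedyMakespan$.

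Second, I would take an optimal schedule $\sched$ of the original instance (makespan $\OPT$) and stretch its time axis by the factor $(1+2\eps)$, i.e.\ consider the feasible schedule in which job $j$ runs over the interval $[(1+2\eps)\sched(j),\,(1+2\eps)(\sched(j)+\pT{j}))$. For each job $j$ of class $l$, the room available for a block of length $\pTr{j}$ inside its stretched interval has length $(1+2\eps)\pT{j}-\pTr{j}\geq(1+2\eps)\pT{j}-(1+\eps)\pT{j}=\eps\pT{j}\geq\eps\cdot\eps^l\greedyMakespan=\eps^{l+1}\greedyMakespan$. Since an interval of length at least $\eps^{l+1}\greedyMakespan$ contains a multiple of $\eps^{l+1}\greedyMakespan$, I can choose the new start time $\sched'(j)$ to be such a multiple with $[\sched'(j),\sched'(j)+\pTr{j})\subseteq[(1+2\eps)\sched(j),\,(1+2\eps)(\sched(j)+\pT{j}))$, and this choice can be made for every job independently.

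Third, I would verify feasibility and the makespan bound of the new schedule $\sched'$. Fix a time $t$; any job active at $t$ under $\sched'$ has $t$ in its new block, which lies inside its stretched interval, hence that job is also active at $t$ under the stretched copy of $\sched$. Therefore the set of jobs active at $t$ under $\sched'$ is contained in the set active at $t$ under the stretched (feasible) schedule, so both the machine condition and the resource condition carry over. The makespan bound follows from $\sched'(j)+\pTr{j}\leq(1+2\eps)(\sched(j)+\pT{j})\leq(1+2\eps)\OPT$, and each $\sched'(j)$ is a multiple of $\eps^{l+1}\greedyMakespan$ by construction; translating this schedule back to the original instance yields the lemma.

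I expect the only real point that needs care is the feasibility step in the third paragraph: it is essential that the snapping of jobs onto their grids can be done one job at a time with no global coordination, which works precisely because it suffices that each new block stay inside the stretched image of the old interval, and the stretch factor $(1+2\eps)$ was chosen so that the leftover slack $\eps\pT{j}$ is at least the granularity $\eps^{l+1}\greedyMakespan$. The edge cases $k_j=1/\eps^2$ and the behaviour at the coarsest class are routine and are absorbed by the half-open interval convention together with the class re-assignment above.
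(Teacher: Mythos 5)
The paper does not prove this lemma itself --- it is stated with a citation to an external reference and no in-text proof --- so there is no in-paper argument to compare against. Your proof is the standard stretch-by-$(1+2\eps)$-and-snap-to-grid argument that this kind of lemma invariably rests on, and it is correct: the rounding gives $p(j)\leq p'(j)\leq(1+\eps)p(j)$ since $\eps^{l+1}T\leq\eps\,p(j)$, the stretched interval of each class-$l$ job leaves slack $(1+2\eps)p(j)-p'(j)\geq\eps p(j)\geq\eps^{l+1}T$ in which a multiple of $\eps^{l+1}T$ always lies, and feasibility is preserved because each snapped block sits inside its stretched image, so active sets at every time $t$ only shrink. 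Your treatment of the boundary case $k_j=1/\eps^2$ by promoting the job to class $l-1$ is the right fix and does not break the slack computation (the slack then is roughly $2\eps^{l}T$, which still exceeds the coarser granularity $\eps^{l}T$ needed for class $l-1$). The only thing I would ask you to spell out a bit more is that after the class promotion the required grid becomes $\eps^{l}T$ rather than $\eps^{l+1}T$, since the lemma's ``multiple of $\eps^{l+1}T$'' phrase is indexed by the \emph{final} class of the job; you hint at this but the explicit slack recomputation for the promoted jobs is worth a line.
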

In this step, we reduce the number of different processing times of large jobs to $\Oh(\log_{\eps}(\delta)/\eps^2) = \Oh(1/\eps^3)$.
However, we lengthen the schedule at most by the factor $(1+2\eps)$.
Furthermore, this rounding reduces the starting times of the jobs to at most $\Oh(1/(\eps\delta))$ possibilities since all the large jobs start and end at multiples of $\eps \delta T$ and the optimal makespan of the rounded instance is bounded by $(1+2\eps)\cdot 4T$.

Let $I_{rounded}$ be the rounded instance and $\OPT_{rounded}$ its optimal makespan.
In the following, we assume that the considered optimal schedule that is rounded has a makespan $\OPT_{rounded}$ of the form $l \cdot \eps T$, while making an extra error of at most $\eps T$. 
As a result, the makespan of this optimal schedule is bounded by $T \leq \OPT_{rounded} \leq (1+2\eps)\cdot 4T + \eps T$ and hence $l \in [1/\eps, 13/\eps] \cap \NN$, for $\eps \leq 1$. 

In the following, we will prove that given a value $T' := l \eps T$ for some $l \in [1/\eps, 16/\eps] \cap \NN$, we either can find a schedule with makespan $(1+ \Oh(\eps))T' + \pmax$, or prove that there is no schedule that schedules the large and small jobs with makespan at most $T'$.
In the algorithm, we will find the smallest value for $l$ such that we find a schedule of makespan at most $(1+ \Oh(\eps))l \eps T+\pmax$.

\subsection{Scheduling Large Jobs}
\label{sec:SRCS:APTAS:largeJobs}
In this section, we describe how to schedule the large jobs when given the size of the makespan $T' := l \eps T$ for some $l \in [1/\eps, 16/\eps] \cap \NN$.
Remember, that in the rounded schedule these jobs only start at the multiples of $\eps\delta T$.
Let $\startPoints$ be the set of all these points in time up to $T'$ and let $\mathcal{P}$ be the set of all rounded processing times for large jobs. 
The processing time between two consecutive starting times $s_i,s_{i+1} \in \startPoints$ is called layer $l_i$. 
Notice that, during the processing of a layer in a rounded optimal schedule, the resource requirement and number of machines used by large jobs stays unchanged since the large jobs only start and end at the starting points in $\startPoints$. 

\begin{lemma}
	\label{lma:schedulingLargeJobs}
	Let $\gamma \in (0,1]$ and $T' = l \eps T \geq \OPT$ for some $l \in \NN$, and $\bar{\jobs}$ be a set of jobs for which an optimal schedule exists such that all jobs in $\bar{\jobs}$ have their starting and endpoints in $\startPoints$.
	There exists an algorithm that finds in $\Oh((m\log(R))^{\Oh_{\eps}(1)|\startPoints|/\gamma})$ operations $\Oh((m\log(R))^{\Oh_{\eps}(1)|\startPoints|/\gamma})$ schedules with the following properties
	\begin{enumerate}
		\item In each of the schedules, all large jobs are scheduled except for a set $\jobs' \subseteq \bar{\jobs}$ of at most $|\jobs'| \in 3|\startPoints|$ jobs and a total resource requirement of at most $\res{\jobs'} \leq \gamma R$.
		\item In at least one of the schedules, in each layer given by $\startPoints$, the total number of machines and resources not used by jobs in $\bar{\jobs}$ is as large as in a rounded optimal schedule.
	\end{enumerate}
\end{lemma}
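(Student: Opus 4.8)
The plan is to guess, layer by layer, a suitably rounded \emph{profile} of the rounded optimal schedule and then solve a configuration LP to place the large jobs into that profile, paying for the fractional part by moving a bounded number of jobs to $\jobs'$. Concretely, for each of the $|\startPoints|$ layers $l_i$ we want to know, in the rounded optimal schedule, how much resource and how many machines are occupied by large jobs during that layer. The number of machines used by large jobs in a layer is an integer in $\{0,\dots,m\}$, so there are at most $(m+1)^{|\startPoints|}$ possibilities for the machine profile; for the resource we round the resource usage in each layer up to the next multiple of $\gamma R/|\startPoints|$ (or, more conveniently, to a geometric grid of $\Oh(\log(R)/\gamma)$ values as in the paper's rounding lemmas), giving $(\Oh(\log(R)/\gamma))^{|\startPoints|}$ possibilities. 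Enumerating all combinations yields the claimed $\Oh((m\log(R))^{\Oh_\eps(1)|\startPoints|/\gamma})$ candidate profiles, and at least one of them is a valid over-estimate of the true large-job profile of a rounded optimal schedule. This will be the profile for which property~2 holds: if we reserve for the large jobs exactly the resource/machine budget dictated by the (slightly rounded-up) guessed profile in each layer, what remains free in each layer is at least what is free in the rounded optimal schedule, so the small and medium jobs scheduled later have as much room as they do in the optimum.

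Given the correct profile, I would set up a configuration LP for assigning large jobs to layers. A job $j$ with rounded processing time $p(j)$ can be placed \emph{starting} at a layer boundary $s \in \startPoints$ with $s + p(j) \le T'$; since by Lemma~\ref{lma:rounding} all large jobs start at multiples of $\eps\delta T$, each such placement occupies a contiguous block of layers. The LP has a variable $x_{j,s}\ge 0$ for each job $j$ and each feasible start $s$, with constraints: (i) $\sum_s x_{j,s} = 1$ for every large job $j$ (each job fully placed); (ii) for every layer $l_i$, the total resource of (fractions of) jobs whose placement block covers $l_i$ is at most the resource budget the guessed profile assigns to large jobs in $l_i$; and (iii) for every layer $l_i$, the number of jobs covering $l_i$ is at most the machine budget in $l_i$. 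Since an optimal rounded schedule gives an integral feasible point (using the \emph{true} profile, which the guessed one dominates), the LP is feasible for the correct guess. A basic feasible solution has at most (number of nontrivial constraints) many strictly positive variables beyond those forced to be $0/1$; the nontrivial constraints are the $\le |\startPoints|$ resource constraints and the $\le |\startPoints|$ machine constraints, so at most $2|\startPoints|$ jobs are fractionally assigned — round these down (set all their $x$ to $0$) and put them in $\jobs'$. One then checks the resource of these removed jobs: each layer's resource constraint was violated by at most one "rounding-up unit" $\gamma R/|\startPoints|$ worth per guessed-profile layer, plus the fractional jobs themselves; choosing the profile granularity as $\gamma R/(3|\startPoints|)$ and bounding the fractional jobs' total resource by a further $\gamma R/3$ gives $\res{\jobs'}\le\gamma R$, and a careful count of fractional-plus-rounding-artifact jobs gives $|\jobs'|\le 3|\startPoints|$. (One extra job per layer may need to be discarded to absorb the profile rounding, which is exactly where the third $|\startPoints|$ comes from.)

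The running time is dominated by solving one LP per guessed profile; each LP has $\Oh(n/(\eps\delta))$ variables and $\Oh(n)+\Oh(|\startPoints|)$ constraints, solvable in time polynomial in its size, and there are $\Oh((m\log(R))^{\Oh_\eps(1)|\startPoints|/\gamma})$ profiles, which also bounds the number of output schedules. The main obstacle, and the part that needs the most care, is the bookkeeping in the rounding argument: one must simultaneously control (a) the number of jobs removed, keeping it within $3|\startPoints|$ despite both LP-basic-solution fractionality and per-layer profile rounding, and (b) the total resource of the removed set, keeping it within $\gamma R$ — and these two bounds pull against the granularity chosen for the resource profile (finer granularity helps the resource bound but blows up the enumeration, coarser granularity does the opposite). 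Threading that needle, and verifying that after discarding $\jobs'$ the remaining placement is genuinely feasible in every layer against the guessed profile (so that property~2's "free space" claim survives the removal), is the technical heart of the proof; everything else is standard configuration-LP machinery.
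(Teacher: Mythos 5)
Your proposal follows the same broad plan as the paper (guess a per-layer machine/resource profile, solve an LP with variables $x_{j,s}$, count fractional jobs via a basic-solution argument, remove one extra job per layer to absorb the profile rounding), so the overall skeleton and the bound $|\jobs'| \le 3|\startPoints|$ are on track. But there is a genuine gap in the step that bounds $\res{\jobs'}$.

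You put \emph{all} large jobs into the LP and then remove up to $2|\startPoints|$ fractionally assigned ones plus one per layer. Nothing in your argument prevents a fractionally assigned job from being a single large job with resource requirement close to $R$: the basic-solution count bounds \emph{how many} jobs are fractional, not \emph{how wide} they are. In that case the removed set could occupy resource $\Theta(|\startPoints| \cdot R)$, not $\gamma R$. The sentence ``bounding the fractional jobs' total resource by a further $\gamma R/3$'' is exactly the claim that needs an argument, and as stated it does not hold. The paper closes this hole by first splitting the large jobs into \emph{wide} ones ($\jR{j} \ge \alpha R$ with $\alpha \in \Oh(\gamma\eps\delta)$) and \emph{narrow} ones ($\jR{j} < \alpha R$): there are only $\Oh(1/(\alpha\delta))$ wide large jobs (by an area bound), so their placements are enumerated \emph{exactly} and never enter the LP; only narrow jobs are placed by the LP, so each removed job has $\jR{j} < \alpha R$ and the total is at most $3|\startPoints|\,\alpha R \le \gamma R$. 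Without that split, the resource bound on $\jobs'$ cannot be derived. (A second, smaller issue: with your additive profile granularity $\gamma R/(3|\startPoints|)$, the ``remove one job per layer'' cleanup does not obviously work — the widest job in a layer need not cover a granularity step if the layer's total is small. The paper's multiplicative grid with base $(1+1/m)$ is chosen precisely so that the widest of the $\le m$ jobs in a layer has resource at least the rounding gap $\REntryL{s}/((1+1/m)m)$, which makes the one-job removal argument go through. You gestured at ``or a geometric grid as in the paper,'' which is fine, but the one-job-per-layer justification should be tied to that choice.)
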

In the following, we describe the algorithm to find these schedules.
If  $m \leq 3|\startPoints| /\gamma$, there can be at most $m/\delta \in \Oh(\eps\gamma\delta^2)$ large jobs and we can enumerate all possible schedules of these large jobs in $(1/\eps\delta)^{1/(\eps\gamma\delta^2)} \leq \Oh((m\log(R)/\delta)^{\Oh_{\eps}(1)/\gamma})$. 
Hence consider the case that $m \geq 3|\startPoints| /\gamma$.

In the first step, we partition the set of large jobs into wide and narrow jobs. 
Let $\alpha \in O(\gamma\eps\delta)$ for a small enough constant.
The required constraints on $\alpha$ can be found at the end of this section.
Wide jobs $j \in \jobs_{LW}$ have a resource requirement of $\jR{j} \geq \alpha R$ and narrow jobs $j \in \jobs_{LN}$ have a resource requirement of $\jR{j} < \alpha R$. 
There can be at most $\Oh(1/\alpha\delta)$ wide large jobs and at most $\Oh(m/\delta)$ narrow large jobs since $\areaI{\jobs}/R \leq (1+\eps) T$ and $\pT{\jobs}/m \leq (1+\eps) T$. 
Hence there are at most  $\mathcal{O}((1/\eps\delta)^{1/\alpha\delta})$ possibilities to place the wide large jobs.

To schedule the narrow large jobs with a given Makespan $T'$, we guess for each possible starting point $s \in \startPoints$ the required number of machines $\mEntryL{s}$ and the resource requirement $\REntryL{s}$ of the large jobs that are processed between this starting point and the next in a rounded optimal solution. 
Since we have at most $1/\eps\delta$ starting points, there are at most $(mR)^{1/\eps\delta}$ possible guesses. 

However, since $R$ can be exponential in the input size, this number is too large.
Instead of guessing the exact resource requirement, we guess the interval $(R/(1 +1/m)^{t+1}, R/(1+1/m)^t]$ in which this resource requirement lies, i.e., if the resource requirement lies in the interval $(R/(1 +1/m)^{t+1}, R/(1+1/m)^t]$, we define $\REntryL{s} := R/(1+1/m)^t$.
There are at most $\lceil\log_{(1+1/m)}(R)\rceil \in \mathcal{O}(\log(R)m)$ such intervals intersecting $[1,R]$ and, therefore, at most $\mathcal{O}((\log(R)m)^{\Oh(|\startPoints|)})$ possible guesses for the resource requirement.

For a given guess $\mRvectorLarge$, we solve the following linear program called $\LP_{large}$.
\begin{align}
\sum_{j \in \jobs_{LN}} \sum_{s'=s-\pT{j}+\eps\delta\greedyMakespan}^s \jR{j}x_{j,s'} &\leq \REntryL{s} &\forall s \in \startPoints\\
\sum_{j \in \jobs_{LN}} \sum_{s'=s-\pT{j}+\eps\delta\greedyMakespan}^s x_{j,s'} &\leq \mEntryL{s} &\forall s \in \startPoints\\
\sum_{s = 0}^{\makespan' - p(j)} x_{j,s} &=1 & j \in \jobs_{LN}\\
x_{j,s} & \geq 0 & \forall  s \in \startPoints, j \in \jobs_{LN}
\end{align}
The variable $x_{j,s}$ represents which fraction of job $j$ that starts at time $s$.
The first two conditions ensure that in each layer, the positioned jobs do not exceed the guessed number of resources or machines. 
The third condition ensures that each job is scheduled. 
If there is an optimal schedule with resource requirement and machine number such that the values $\mRvectorLarge$ are an upper bound, we can transform it to a solution of this linear program by setting $x_{j,s} = 1$ and $x_{j,s'} = 0$ for $s = \sched{(j)}$ and $s' \in \startPoints \setminus\{s\}$.
hence, if the guess was correct, the linear program has a solution. 
This linear program has $2|\startPoints| + |\jobs_{LN}|$ conditions and $|\startPoints| |\jobs_{LN}|$ variables. 
Hence, a basic solution has at most $2|\startPoints| + |\jobs_{LN}|$ non zero components. 
Since the factors on the right hand side are bounded by $R$, we can find a basic solution to this linear program in $(|\startPoints| |\jobs_{LN}|\log(R))^{\Oh(1)} = (m\log(R)/\delta)^{\Oh(1)}$ using the Ellipsoid-Method or any other polynomial time algorithms for linear programs.

In the end of the algorithm, after the binary search part, we transform the solution $x$ to $\LP_{large}$ to an integral solution, where no job is scheduled fractionally.
Further, since we rounded the resource requirement of the narrow large jobs per layer, we have to remove some of the jobs from their starting points.
However, we have to be careful to remove as few as possible since we are only allowed to add one $\pmax$ to the makespan of the schedule and we have a machine constraint of $m$.  

\begin{lemma}
	\label{lma:SRCS:RemoveLargeJobs}
	There exists an $\alpha \in  O(\gamma\eps\delta)$ such that
	given a solution $x$ to $\LP_{large}$ that has only $2|\startPoints| + |\jobs_{NL}|$ non-zero components, we have to remove at most $\Oh(1/(\eps\delta))$ jobs, to fulfill the second property of Lemma \ref{lma:schedulingLargeJobs}.
	The removed jobs have a total resource requirement of at most $\gamma R$ and
	can be found in $\Oh(|\startPoints| \cdot |\jobs_{NL}|)$.
\end{lemma}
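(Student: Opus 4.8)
The plan is to analyze the basic solution $x$ layer by layer and bound the number of fractionally placed jobs, then separately handle the error introduced by rounding the resource guesses. First I would observe that a basic solution of $\LP_{large}$ has at most $2|\startPoints| + |\jobs_{NL}|$ non-zero components, while the third family of constraints forces $\sum_s x_{j,s} = 1$ for each $j \in \jobs_{NL}$, so each job contributes at least one non-zero component. A counting argument then shows that at most $2|\startPoints|$ jobs can have more than one non-zero component, i.e. at most $2|\startPoints| \in \Oh(1/(\eps\delta))$ jobs are scheduled fractionally. For every such fractional job, I would pick the starting point $s$ carrying the largest fraction $x_{j,s}$ and tentatively assign the whole job there; this either keeps the job within the guessed profile or causes a slight overshoot of $\mEntryL{s}$ or $\REntryL{s}$ in some layers, which I must then repair.

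The repair step is the heart of the argument. Rounding down resource requirements to $\REntryL{s} = R/(1+1/m)^t$ means the true optimal profile in layer $l_i$ may use up to a factor $(1+1/m)$ more resource than $\REntryL{s_i}$, i.e. an excess of at most $\REntryL{s_i}/m \le R/m$ per layer; likewise, integrally rounding the at-most-$2|\startPoints|$ fractional jobs can push the machine count in a layer above $\mEntryL{s}$. To fix both, I would, in each layer that is overloaded, remove one extra narrow large job — the narrowest one crossing that layer — observing that a single narrow job has resource requirement $< \alpha R$, and that removing it from its starting point frees at least one machine and at least the amount of resource it occupies in every layer it spans. Since there are $|\startPoints| \in \Oh(1/(\eps\delta))$ layers, this costs at most $\Oh(1/(\eps\delta))$ additional removals, keeping the total number of removed jobs in $\Oh(1/(\eps\delta))$. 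I would verify that after these removals the remaining narrow large jobs, together with the (fixed, integrally placed) wide large jobs, fit inside a profile in which the number of free machines and free resources in each layer is at least as large as in the rounded optimal schedule, establishing property 2 of Lemma \ref{lma:schedulingLargeJobs}.

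For the resource bound, I would count: the $\le 2|\startPoints|$ fractionally scheduled jobs that get dropped are narrow, each with resource requirement $< \alpha R$, and the $\le |\startPoints|$ extra jobs removed for overload are also narrow, each $< \alpha R$; hence the total resource requirement of all removed jobs is at most $3|\startPoints| \alpha R$. Choosing $\alpha \in \Oh(\gamma\eps\delta)$ with a small enough hidden constant — concretely $\alpha \le \gamma/(3|\startPoints|) \le \gamma\eps\delta/3$ since $|\startPoints| \le 1/(\eps\delta)$ — makes this at most $\gamma R$, as required; this is exactly where the constraint on $\alpha$ promised "at the end of this section" gets pinned down. Finally, the running time: identifying fractional jobs and the narrowest job per layer is a single pass over the non-zero components of $x$ and over $\startPoints$, which takes $\Oh(|\startPoints| \cdot |\jobs_{NL}|)$ operations. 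The main obstacle I anticipate is the bookkeeping in the repair step — making sure that removing one job per overloaded layer simultaneously repairs both the machine overshoot from integral rounding and the resource overshoot from the geometric rounding of $\REntryL{s}$, and that these repairs do not cascade (removing a job for one layer does not create a new violation elsewhere, which holds because removal only decreases usage in every layer).
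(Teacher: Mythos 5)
Your counting of fractional jobs is correct and matches the paper: at most $2|\startPoints|$ jobs have more than one non-zero component, each is narrow and hence has resource requirement $< \alpha R$, and dropping them costs at most $2|\startPoints|\alpha R$. Your budget of roughly $3|\startPoints|$ total removals and the resulting constraint $\alpha \leq \gamma/(3|\startPoints|)$ is also the same as the paper's. However, the key step of the repair is wrong. To bring the resource usage in a layer with guess $\REntryL{s} = R/(1+1/m)^{t}$ down to $R/(1+1/m)^{t+1}$ (the lower end of the guessed interval, and hence a lower bound on what the optimal actually uses there), you must remove the \emph{widest} narrow large job crossing that layer, not the narrowest. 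The point is that at most $m$ jobs are processed in a layer, so if the current usage exceeds $R/(1+1/m)^{t+1}$, the widest job accounts for at least a $1/m$-fraction of it, i.e.\ at least $R/\bigl(m(1+1/m)^{t+1}\bigr)$; removing it drops the usage to at most $R/(1+1/m)^{t} - R/\bigl(m(1+1/m)^{t+1}\bigr) = R/(1+1/m)^{t+1}$ exactly. The narrowest job gives no such guarantee — it could have arbitrarily small resource requirement, so removing it need not repair the overshoot at all, and the "one job per layer" bound would collapse.

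A secondary issue is the "assign each fractional job to its largest-fraction starting point and then repair" idea. The paper simply discards the fractional jobs, so that what remains satisfies both LP constraints by construction and only the geometric rounding error remains to be repaired. If you instead pin a fractional job to one start point and keep it, you may overshoot $\mEntryL{s}$ or $\REntryL{s}$ in several layers at once and by more than one job per layer, since several fractional jobs can accumulate in the same layer; your "remove one job per overloaded layer" step is then not clearly sufficient, and it also blurs why the resource count you give at the end (which silently assumes the fractionals were dropped) is valid. Both problems disappear if you follow the paper: drop the $\le 2|\startPoints|$ fractionally scheduled jobs outright, then remove the widest job in each layer whose usage still exceeds $R/(1+1/m)^{t+1}$.
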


\begin{proof}
	There are two reasons why we have to remove jobs from the solution $x$.
	The first is the removal of fractionally scheduled jobs and the second is to fulfill the second property of Lemma \ref{lma:schedulingLargeJobs}.  
	
	\begin{claim}
		A basic solution for the above linear program has at most $2|\startPoints|$ fractional scheduled jobs and their total resource requirement is bounded by $2\alpha R/\eps\delta$.
	\end{claim}
	
	\begin{proofClaim}
		A job $j \in \jobs_{LN}$ is scheduled fractional if there are at least two points $s,s' \in S$ such that $x_{j,s}>0$ and $x_{j,s'}>0$.
		Each job needs at least one non zero components to be scheduled and hence $|\jobs_{LN}|$ non zero components are used by different jobs. 
		Therefore, each fractional scheduled job needs one of the $2|\startPoints|$ residual non zero components, i.e., there are at most $2|\startPoints|$ fractional scheduled jobs.
		Hence, we have to remove at most $2|\startPoints|$ jobs, to schedule all jobs integral. 
		Since each job has a resource requirement of at most $\alpha R$ the total resource requirement of removed jobs is bounded by $2\alpha R/\eps\delta$.
	\end{proofClaim}
	
	Not that in the linear program, the values $\REntryL{s}$ are upper bounds on the resource requirements of the large jobs. 
	Hence, the current schedule may not fulfill the second property of Lemma \ref{lma:schedulingLargeJobs}.
	To fulfill this property we have to remove jobs such that each layer uses at most $\REntryL{s}/(1+1/m)$ resources, which is a lower bound on the number resources used by narrow large jobs in the optimal schedule if the guess was correct.
	\begin{claim}
		We have to remove at most one job per layer to use at most $\REntryL{s}/(1+1/m)$ resources for each layer $s \in S$.
	\end{claim}
	\begin{proofClaim}
		Let be $s \in S$ and $\REntryL{s} = R/(1+1/m)^{t}$, i.e., we had guessed that the resource requirement of the narrow large jobs scheduled in this layer is contained in the interval $(R/(1+1/m)^{t+1},R/(1+1/m)^{t}]$. 
		
		If the total resource requirement of all narrow large jobs scheduled in layer $s$ is at most $R/(1+1/m)^{t+1}$, we do not need to remove any job and hence the claim is trivially true for this layer.	
		Otherwise, the widest job has a resource requirement of at least $(R/(1+1/m)^{t+1})/m$ since there are at most $m$ jobs scheduled in this layer. 
		When we remove the widest job, the total resource requirement of the residual jobs is bounded by 
		\begin{align*}
		&R/(1+1/m)^{t} - R/m(1+1/m)^{t+1} \\
		= &m(1+1/m)R/m(1+1/m)^{t+1} - R/m(1+1/m)^{t+1} \\
		= &R/(1+1/m)^{t+1} \mathrm{.}
		\end{align*}
	\end{proofClaim}
	
	Therefore, the total number of narrow jobs to be removed is bounded by $3|S| \in \Oh(1/(\eps\delta))$ and their total resource requirement is bounded by $3|S|\alpha R \leq \gamma R$ for $\alpha \in \Oh(\gamma\eps\delta)$ small enough.

\end{proof}

\subsection{Scheduling Small Jobs}
\label{sec:SRCS:APTAS:smallJobs}
In this section, we describe how to schedule the small jobs.
We will schedule these jobs inside the layers using the residual free resources and machines given by the guess for the large jobs.
We define $\mEntryS{s}$ as the number of machines in layer $s$ not used by jobs with processing times larger than $\delta T$ and analogously define $\REntry{s}{S}$ as the number of resources not used by jobs with processing times larger than $\delta T$ during the processing of this layer in an optimal schedule.

\begin{restatable}{lemma}{schedulingSmallJobs}
	\label{lma:schedulingSmallJobs}
	Define for each layer $s \in \startPoints$ a box with processing time $(1+\eps)\eps\delta T$, $\mEntryS{s}$ machines and $\REntry{s}{S}$ resources, where
	the values $\mEntryS{s}$ and $\REntry{s}{S}$ are at least as large as in a rounded optimal schedule.
	There exists an algorithm with time complexity $\Oh(n) \cdot \Oh_{\eps,|\startPoints|}(1)$, that schedules the jobs inside the boxes and an additional horizontal box with $m$ machines, $R$ resources, and processing time $\Oh(\eps)T$.
\end{restatable}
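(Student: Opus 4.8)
The plan is to treat the small jobs as essentially divisible material. Every small job has processing time below $\mu\makespan\le\eps^{3}\cdot\eps\delta\makespan$, so it is shorter than an $\eps^{3}$-fraction of any of the $|\startPoints|$ boxes; consequently a schedule of small jobs inside a box is determined, up to an $\Oh(\eps)$-fraction of its height, by the total processing time and the total area of the jobs placed into it. The steps are: (i) coarsen the small jobs so that their relevant data consists of $\Oh_{\eps,|\startPoints|}(1)$ numbers; (ii) solve a constant-size linear program distributing the small jobs over the boxes; (iii) realise the allotment of every box by a greedy shelf packing; (iv) collect whatever the linear program splits, and whatever does not fit before a box fills up, into the one extra horizontal box. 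This parallels the handling of small jobs in \cite{journals/talg/JansenMR19}; the one additional difficulty is that two capacity constraints, the machine (cardinality) bound and the resource bound, must be respected at once.

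For step (i) I would round the resource requirements of the small jobs to $\Oh_{\eps,|\startPoints|}(1)$ values, setting the jobs with negligible resource requirement apart and placing the rest into geometrically spaced classes, at the cost of a factor $(1+\eps)$ in the area they are charged; this is a single scan over the $n$ jobs once the $\Oh_{\eps,|\startPoints|}(1)$ thresholds are computed. Each box is already described by its pair $(\mEntryS{s},\REntryS{s})$ and there are only $|\startPoints|$ of them, so the instance now consists of $\Oh_{\eps,|\startPoints|}(1)$ numbers: the total processing time and total area of small jobs in each class. For step (ii) I would set up a linear program with one variable per (class, box) pair — plus, for the negligible class, a second variable per box recording the area it contributes there — expressing that every class is placed entirely, that the processing time placed in box $s$ is at most $\eps\delta\makespan\cdot\mEntryS{s}$, and that the area placed in box $s$ is at most $(1+\eps)\eps\delta\makespan\cdot\REntryS{s}$. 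The hypothesis that $\mEntryS{s}$ and $\REntryS{s}$ are at least as large as in a rounded optimal schedule means that restricting that schedule to the small jobs yields a feasible point, so the program is feasible; it has $\Oh_{\eps,|\startPoints|}(1)$ variables and constraints and is solved in $\Oh_{\eps,|\startPoints|}(1)$ time, and in a basic solution at most $\Oh(|\startPoints|)$ of the classes are split among several boxes. Converting the resulting class-and-box budgets into an assignment of the actual jobs to boxes, with at most $\Oh(|\startPoints|)$ jobs straddling a budget boundary, is one greedy pass over the $n$ jobs.

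For step (iii), the jobs assigned to a fixed box $s$ have total processing time at most $\eps\delta\makespan\cdot\mEntryS{s}$, total area at most $(1+\eps)\eps\delta\makespan\cdot\REntryS{s}$, and height below $\mu\makespan$; a greedy shelf packing that never puts more than $\mEntryS{s}$ jobs or more than $\REntryS{s}$ resources on one shelf (as in \cite{journals/talg/JansenMR19}) places all but a set of total processing time $\Oh(\eps)\eps\delta\makespan\cdot\mEntryS{s}$ and total area $\Oh(\eps)\eps\delta\makespan\cdot\REntryS{s}$ inside the box, carrying over the jobs dropped when a shelf cannot be completed. For step (iv), summing over the $|\startPoints|\in\Oh(1/(\eps\delta))$ boxes, and adding the $\Oh(|\startPoints|)$ straddling jobs (each of height below $\mu\makespan=\eps^{3}\cdot\eps\delta\makespan$), the carried-over jobs have total processing time $\Oh(\eps)\,m\makespan$ and total area $\Oh(\eps)\,R\makespan$ and each has height below $\mu\makespan$, so one final application of the same greedy shelf packing places them all into a single horizontal box with $m$ machines, $R$ resources and height $\Oh(\eps)\makespan$. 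Every step costs $\Oh(n)$ or $\Oh_{\eps,|\startPoints|}(1)$, which gives the claimed running time.

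The main obstacle, and the point of departure from the one-constraint analyses, is handling both capacities at the same time: the resource rounding, the distributing linear program, and the greedy shelf packing must each bound the lost machine-time \emph{and} the lost area by an $\Oh(\eps)$-fraction of $m\makespan$ and of $R\makespan$ respectively, and — because the whole routine must run in $\Oh(n)\cdot\Oh_{\eps,|\startPoints|}(1)$ rather than in time polynomial in $m$ and $\log R$ — the number of job and box types has to be reduced to a bound depending only on $\eps$ and $|\startPoints|$ before any linear program is solved, which forces the negligible-resource jobs to be treated through an area argument instead of by widening them to a fixed value.
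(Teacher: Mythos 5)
The approach you take is genuinely different from the paper's, and it breaks down at the box\hyp{}packing step.  The paper does \emph{not} reduce the small jobs to $\Oh_{\eps,|\startPoints|}(1)$ types and solve a constant\hyp{}size distribution LP; it rounds them to $\Oh(\log(m)/\eps)$ classes, sets up a configuration LP whose columns are multisets of small jobs that respect both $(\mEntryS{s},\REntryS{s})$ per layer, solves it via a max\hyp{}min\hyp{}resource\hyp{}sharing oracle, compresses the number of active columns via ``windows,'' and then invokes a rounding lemma (Lemma 2.5 of \cite{journals/talg/JansenMR19}) that turns the fractional window solution into an integral schedule with a loss of $K\cdot p_{\max}$. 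That whole machinery exists precisely to couple the machine constraint and the resource constraint inside every layer.

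The claim at the heart of your proposal --- that a schedule of small jobs inside a box is determined, up to an $\Oh(\eps)$\hyp{}fraction of its height, by the total processing time and total area of the jobs placed there --- is not true when both capacities are present, because the narrowness threshold $\eps R$ is relative to the global $R$ and not to the residual capacity $\REntryS{s}$ of the box.  A layer in which large and huge jobs occupy almost all of $R$ has $\REntryS{s}$ much smaller than $\eps R$, and a ``narrow'' small job with, say, $r_j$ slightly above $\REntryS{s}/2$ forces a greedy shelf in that box to carry a single job while the LP budgets
\[
\text{(proc.\ time)} \le \mEntryS{s}\cdot (1+\eps)\eps\delta T,\qquad
\text{(area)} \le \REntryS{s}\cdot (1+\eps)\eps\delta T
\]
are satisfied with a factor~$2$ slack: with $m_s\geq 2$ jobs of resource $\REntryS{s}/2+1$ and height $h$ each, the two budgets allow height $\approx nh/2$ while any feasible packing needs height $nh$.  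So steps (ii) and (iii) together can overflow the box by a constant fraction, not an $\Oh(\eps)$ one, and routing that overflow into the extra horizontal box does not help, since the overflow is $\Omega(1)\cdot m\cdot T$ of processing time and $\Omega(1)\cdot R\cdot T$ of area rather than an $\eps$\hyp{}fraction.  To fix this you would either have to make the narrowness threshold depend on the box, which destroys the ``constant number of classes'' you need for the constant\hyp{}size LP, or you reintroduce a per\hyp{}layer configuration LP, which is the route the paper takes.

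Related to this, the coarsening in step (i) cannot yield $\Oh_{\eps,|\startPoints|}(1)$ job classes with only an $\Oh(\eps)$ area loss while keeping the narrow jobs usable in every box: geometric grouping down to a single ``negligible'' class is only safe when the boxes' residual resource capacities are all a constant fraction of $R$, and in general they are not.  The paper avoids this by rounding to $\Oh(\log(m)/\eps)$ classes and paying for the larger number of job types in the running time of the column\hyp{}generation/max\hyp{}min\hyp{}resource\hyp{}sharing solver rather than in an enumeration of types.
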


This algorithm uses the same techniques as the AFPTAS designed by Jansen et al.~\cite{journals/talg/JansenMR19}.
For the sake of completeness the ideas can be found in Section \ref{sec:smallJobs}.


\subsection{Summary of the Algorithm}
\label{sec:SRCS:APTAS:Summary}
In this section, we summarize the steps of the APTAS. Given an instance $I = (\jobs, m, R)$ of \acl{resource} and an $\eps' \in (0,1)$ the algorithm performs the following steps:
\begin{stepList}
	\item[Initialization] First, we define an $\eps \in (0,1)$ such that $1/\eps \in \NN$ and $\eps \leq \eps'/c$ for a $c \in \RR$ large enough such that the following $(1+\Oh(\eps))$ approximation is an $(1+\eps')$ approximation.
	Further, we define $T:= \min\{\pmax, \areaI{\jobs}/R, \pT{\jobs}/m\}$ and know that $T \leq \OPT \leq 4 T$.
	\item[Simplification] 
	In the next step, we simplify the given instance $I$, by partitioning and rounding the jobs. 
	First, we find the values $\delta$ and $\mu$ in $\Oh(n \log(n))$ operations, as described in Lemma \ref{lma:gapMediumJobsResource} and partition the set of jobs $\jobs$ into large, medium and small jobs in $\Oh(n)$.
	The large jobs are rounded to at most $\Oh(1/\eps^4)$ different processing times using Lemma \ref{lma:rounding}. 
	We call this rounded instance $I_{r,1}$ and it holds that $\OPT(I) \leq \OPT(I_{r,1}) \leq (1+2\eps)\OPT(I)$.
	In the last simplification step, we  partition the set of large jobs into narrow and wide ones.
	\item[Binary Search Framework] For the rounded instance $I_{r,1}$, we will find the value $T' = (1+i\eps)T$, for $i \in [0,4/\eps +8]\cap \NN$ such that $T' -\eps T \leq \OPT(I_{r,1}) \leq T'$ and give a schedule with makespan at most $(1+\Oh(\eps))T' + \pmax = (1+\Oh(\eps))\OPT(I)+ \pmax$. 
	We try the possible values for $T'$ in binary search fashion in Step 4.
	\item[Scheduling Step\label{test}] Given a value $T' := i \eps T$ we determine the corresponding set $\startPoints$. 
	Afterward, we use the algorithm in Lemma \ref{lma:schedulingLargeJobs} with $\gamma = 1$ to find the at most $\Oh((m\log(R))^{\Oh_{\eps}(1)}$ possible schedules for all large jobs except for the set $\jobs'$.
	For each of the possible schedules, we determine the values $\mEntryS{s}$ and $\REntry{s}{S}$ and try to place the small jobs in the corresponding boxes using the algorithm from Lemma \ref{lma:schedulingSmallJobs}.
	The schedules form the boxes can be inserted in the layers of the schedule stretched by the factor $(1+\eps)$, while the extra box is placed on to, adding at most $\Oh(\eps)T$ to the makespan. 
	Note that by Lemma \ref{lma:schedulingLargeJobs} for one of the schedules of large jobs the values $\mEntryS{s}$ and $\REntry{s}{S}$ are as large as in a rounded optimal schedule, or the value $T'$ was to small.
	If we find such a solution, we place the set $\jobs'$ on top, save the corresponding schedule, and try the next smaller value for $T'$ in binary search fashion.
	Otherwise, the algorithm tires the next larger value for $T'$.
	\item[Final step] 
	After the binary search, we take the smallest computed schedule and add the medium jobs on top, adding at most $\Oh(\eps)T$ to the makespan by lemma \ref{lma:mediumJobs}.
	Finally, we replace the rounded large jobs, with the original large jobs.
\end{stepList}

The time complexity of the above algorithm can be bounded by $\Oh(n\log(n)) + \Oh(\log(1/\eps) \cdot \Oh((m\log(R))^{\Oh_{\eps}(1)}) \cdot \Oh(n) \cdot \Oh_{\eps,|\startPoints|}(1)\leq \Oh(n\log(n)) + \Oh(n) \cdot \Oh((m\log(R)/\eps)^{\Oh_{\eps}(1)})$. 
Hence, we have proven the Theorem \ref{thm:SRCS:APTAS}.

\section{A \texorpdfstring{$(3/2 +\eps)$}{}-Approximation}
\label{sec:SRCS:absoluteRatio} 
In this section, we design an algorithm with approximation ratio $(3/2 +\eps)$.
The algorithm uses the techniques described for the APTAS. 
However, we have to be more careful which large jobs can be scheduled fractional and shifted to the end of the schedule and which cannot.

We aim to find a schedule with makespan $(3/2+\mathcal{O}(\eps))T'$, where $T'$ is the assumed optimal makespan given by a binary search framework.
If we discard a job larger than $ T'/2+ \mathcal{O}(\eps)\greedyMakespan$ and schedule it after $T'$, we exceed the aspired approximation ratio of $(3/2+\mathcal{O}(\eps))T$.
We call this set of critical jobs huge jobs, i.e., $\jobs_H := \{j \in \jobs | p(j) > T'/2\}$.
As a consequence, we redefine the set of large jobs as $\jobs_L := \{j \in \jobs | \delta\greedyMakespan \leq p(j) \leq  T'/2 \}$ respectively.

Notice that the processing of all huge jobs has to intersect the time $T'/2$ in each schedule with makespan at most $T'$.
Therefore, each machine can contain at most one of these jobs.
If we could guess the starting positions of these huge jobs, the discarded jobs in the linear program $\LP_{large}$ would have a processing time of at most $\OPT / 2$ and could be placed on top of the schedule. 
Sadly this guessing step is not possible in polynomial time since there are up to $m$ of these jobs and iterating all combinations of their starting position needs $\Omega((1/\eps\delta)^m)$ operations.
Our idea is to let almost all the huge jobs end at a common point in time, e.g. $T'$, and thus avoid the guessing step.
To solve the violation of the resource or machine condition,
we shift up all the jobs which start after $\lceil T'/2\rceil_{\eps\delta\greedyMakespan}$ by $\lceil T'/2\rceil_{\eps\delta\greedyMakespan}$ such that they now start after $T'$, where we denote by $\lceil T'/2\rceil_{\eps\delta\greedyMakespan}$ the integer multiple of $\eps\delta\greedyMakespan$ that is the first which has a size of at least $T'/2$.

While this shift fixes the start positions of the huge jobs, the large jobs are again placed with the techniques described in Section \ref{sec:SRCS:APTAS:largeJobs}.
Since Lemma \ref{lma:schedulingLargeJobs} states that each of the generated schedules may not schedule a subset $\jobs'$ of the large jobs, we need to find a gap in the shifted schedule where we can place them. 
In the following, we will consider optimal schedules and the possibilities to rearrange the jobs. 
Depending on this arrangement, we can find a gap for the fractional scheduled large jobs of processing time $\addT$.

Let us assume that we have to schedule $k := |\jobs'|\in \mathcal{O}_{\eps}(1) \leq m/4$ jobs with total resource requirement at most $\gamma R \leq R/(3|\startPoints|)$. 
We consider an optimal schedule, after applying the simplification steps, i.e., we consider the rounded instance $I_{r,1}$ and the corresponding transformed optimal schedule, where each large job starts at a multiple of $\eps\delta\greedyMakespan$ and each huge job starts at a multiple of $\eps^2\greedyMakespan$. 
Furthermore, we will assume that there are more than $4k = \mathcal{O}_{\eps}(1)$ huge jobs. 
Otherwise, we can guess their starting positions in $\mathcal{O}((1/\eps\delta)^{4k})$ and place the fractional scheduled large jobs on top of the schedule. 

In the following, we will prove that by extending it by $\lceil T'/2\rceil_{\eps\delta\greedyMakespan}$, we can transform the rounded optimal schedule $\OPT_{\mathrm{rounded}}$ such that all the huge jobs, except for $\Oh(k)$ of them, end at a common point in time and we can place $k$ further narrow large jobs without violating the machine or the resource constraint.

\begin{lemma}
	\label{lma:hugeJobs}
	Let $k := |\jobs'| \leq m/4$ and $\gamma \leq 1/(3|\startPoints|)$.
	Furthermore, let a rounded optimal schedule $\OPT_{\mathrm{rounded}}$ with makespan at most $T'$ and at most $|\startPoints|$ starting positions for large jobs be given.
	
	Without removing any job from the schedule, we can find a transformed schedule $\OPT_{\mathrm{shift}}$ with makespan at most $\OPT_{\mathrm{rounded}} + \addT$, with the following properties: 
	\begin{enumerate}
	\item We can guess the end positions of all huge jobs in polynomial time. 
	\item There is a gap of processing time $\addT$ with $k$ empty machines and $\gamma R$ free resources where we can schedule the jobs in $\jobs'$.
	\item There is an injection which maps each layer $s$ in $\OPT_{\mathrm{rounded}}$ with $m_{s,S}$ machines and $R_{s,S}$ resources not used by huge and large jobs to a layer in $\OPT_{\mathrm{shift}}$ where there are at least as many machines and resources not used by these jobs. 
	\end{enumerate}
\end{lemma}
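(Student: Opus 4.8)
The plan is to first dispose of two degenerate cases and then build $\OPT_{\mathrm{shift}}$ from $\OPT_{\mathrm{rounded}}$ in two moves: a global vertical shift that opens a free window of length $\addT = \lceil T'/2\rceil_{\eps\delta\greedyMakespan}$, followed by a repacking of the jobs that wind up inside it. If $\OPT_{\mathrm{rounded}} < T'/2$ there are no huge jobs and it suffices to do the shift and park $\jobs'$ on $k$ fresh machines on top. If there are at most $4k = \Oh_\eps(1)$ huge jobs, then, since every huge job of the rounded instance starts at a multiple of $\eps^2\greedyMakespan$ and only $\Oh(1/(\eps\delta))$ such points lie below the makespan, all their start positions can be enumerated in $\Oh_\eps(1)$ time (Property~1); parking $\jobs'$ on top — $k \le m$ jobs, each with $p(j) \le T'/2 \le \addT$, together using only $\res{\jobs'} \le \gamma R \le R$ resources — gives Property~2, and Property~3 holds trivially. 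So assume henceforth $\OPT_{\mathrm{rounded}} \ge T'/2$ and that more than $4k$ huge jobs exist.

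\emph{The shift.} Put $c := \lceil T'/2\rceil_{\eps\delta\greedyMakespan}$, a multiple of $\eps\delta\greedyMakespan$ with $\addT = c$, and postpone by exactly $c$ every job whose starting time exceeds $c$, leaving all other jobs in place. Every huge job, and more generally every job running at time $c$, starts before $c$ and is therefore not moved. I will check that: the makespan grows by at most $c = \addT$; at each time $t$ the running jobs form a sub-multiset of those running at one single time of $\OPT_{\mathrm{rounded}}$ — namely $t$ for $t < c$, $t - c$ for $t \ge 2c$, and, for $t$ in the newly created window $\wind := [c,2c)$, the jobs running at $c$ that are still running at $t$ — so the shifted schedule is feasible; and the map sending every layer below $c$ to itself and every layer above $c$ to its copy shifted up by $c$ (again on the layer grid, since $c$ is a multiple of $\eps\delta\greedyMakespan$) is the injection of Property~3, because every image layer hosts a subset of the jobs of its source layer and deleting huge, large or small jobs never decreases the number of machines or resources free of huge and large jobs.

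\emph{Aligning the huge jobs.} After the shift, the window $\wind$ holds exactly the huge jobs running at $c$ (the remaining huge jobs start at $0$, so their end positions are fixed by their processing times) together with some large and small jobs, all running at $c$. The pivotal fact is that at time $T'/2$ in $\OPT_{\mathrm{rounded}}$ \emph{all} huge jobs run simultaneously, so $\sum_{j\in\jobs_H}\jR{j}\le R$ and $|\jobs_H|\le m$. Using this I would stack every huge job running at $c$ so that it ends at the common point $2c$, obtaining $\OPT_{\mathrm{shift}}$: this keeps the schedule feasible because such a job pushed up to end at $2c$ has, inside $[0,c)$, a footprint contained in its original one ($\sigma(j)+p(j)\le T'\le 2c$), so no new conflict appears below $c$, while inside $\wind$ the profile of these huge jobs plus the unmoved large and small jobs is, at every instant, dominated by the configuration of $\OPT_{\mathrm{rounded}}$ at time $c$; and it visibly preserves Property~3. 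This yields Property~1.

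For Property~2 I would reserve, throughout $\wind$ — whose length is exactly $\addT$, so it can host every $j \in \jobs'$ — a block of $k$ machines and $\gamma R$ resources. Since $k \le m/4$ and $\gamma R \le R/(3|\startPoints|)$ this block is thin, so I expect it can be freed by exempting only $\Oh(k)$ huge jobs from the common finishing line $2c$ and, if needed, sliding a bounded number of huge and large jobs within $\wind$ to level the machine and resource profiles at its boundary times $c$ and $2c$; keeping the exempted jobs on the grid and off the layers used by the injection preserves Properties~1 and 3, and their end positions then lie among $\Oh((1/(\eps\delta))^{\Oh(k)}) = \Oh_\eps(1)$ guessable possibilities. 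This step is the main obstacle, and the reason the paper speaks of searching for the gap at several points in time: unlike the one-dimensional gap constructions of \cite{Jansen12}, here we must keep \emph{both} the machine bound $m$ and the rounded resource bound $R$ satisfied at \emph{every} instant of $\wind$, so the right window position and the right $\Oh(k)$ exempted huge jobs must be found by trying the $\Oh(1/(\eps\delta))$ candidates induced by the large-job grid and arguing, from the coexistence of all huge jobs at $T'/2$ together with a volume count of the displaced large and small jobs, that one of them satisfies all three properties simultaneously.
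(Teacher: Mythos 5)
The proposal correctly disposes of the two degenerate cases (few huge jobs, makespan below $T'/2$), correctly observes that every huge job runs at time $T'/2$, and sets up a plausible shift-then-align skeleton. But the heart of the lemma — Property 2, exhibiting a window of length $\addT$ throughout which $k$ machines \emph{and} $\gamma R$ resources are simultaneously free — is not proved. You call it ``the main obstacle'' and write ``I expect it can be freed by exempting only $\Oh(k)$ huge jobs\ldots and arguing\ldots together with a volume count\ldots''. That is an intention, not an argument: a volume count cannot certify the existence of an interval on which two independent constraints are both pointwise slack, and it is precisely this need for pointwise two-dimensional slack that makes the lemma hard and forces the paper into its case analysis.

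The difficulty is also baked in earlier than you acknowledge. Your uniform shift (postpone everything with $\sigma(j)>c$ by $c$) throws away the information the paper uses to find free machines. The paper instead picks $\tau\in\startPoints$ as the \emph{first} point $\geq T'/2$ at which at most $m-k$ jobs starting before $T'/2$ survive, and only shifts jobs starting at or after $T'/2$ that survive past $\tau$. That choice immediately gives $k$ unused machines on $[\tau,T']$ after the shift and bounds $|\jobsLTS{\tau}|<k$ — the only jobs that can reappear above $T'$. Your shift at $c$ gives no such guarantee: at $c$ all $m$ machines and $R$ resources can be in use, and exempting $\Oh(k)$ huge jobs of width $\approx R/m$ each frees $\Oh(k)$ machines but only $\Oh(kR/m)$ resources, and only over those portions of $\wind$ dictated by their finishing times. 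The paper then has to branch on where the resource lives — whether $\jobsT{\tau}$, $\jobsLTS{\tau}$, $\jobs_{L,T'/2}$, or the huge jobs themselves carry it above $R-\gamma R$ — and accordingly positions the gap at $\tau$, at an $\iota$ where a cluster $\jobsLTS{\tau}^\iota$ is un-shifted, at $\tau'$, at $T'$, or at a $\rho'$ where huge jobs are re-aligned. None of this is replaceable by ``slide a bounded number of jobs to level the profiles''.

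Two secondary slips: choosing which $\Oh(k)$ huge jobs to exempt already costs $m^{\Oh(k)}$ guesses, so your $\Oh_\eps(1)$ guessing count is wrong ($m$ enters); and ``the remaining huge jobs start at $0$'' is unjustified — a huge job can end before $c$ whenever $\sigma(j)<c-p(j)$, which allows nonzero start times on the rounding grid.

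Because the paper's shift, its alignment target ($T'$, not $2c$), and the entire case analysis that actually establishes Property 2 are absent, the proposal has a genuine gap and is not a proof of the lemma.
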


\begin{proof}
	We will prove this lemma by a careful analysis of the structure of the schedule $\OPT_{\mathrm{rounded}}$. 
	First, however, we introduce some notations.
	Let $s \in \startPoints$, with $s > T'/2$ be any starting point of large jobs.
	We say a job $j \in \jobs$ intersects $s$ or is intersected by $s$ if $\sigma(j) < s < \sigma(j) + \pT{j}$.
	We will differentiate sets of jobs that start before $T'/2$ and those that start at or after $T'/2$ by adding the attribute $_{\mathrm{pre}}$ to sets of jobs that contain only jobs starting before $T'/2$, and the attribute $_{\mathrm{\post}}$ to those that contain only jobs that start at or after $T'/2$.
	Furthermore, we will identify the sets of jobs that intersect certain points of time.
	We will add the attribute $_{s,\addTop}$ to denote a set of jobs that is processed at least until the point in time $s \in \startPoints$, i.e., we denote by $\jobsT{s} := \{j \in \jobs| \pT{j} \geq \delta\greedyMakespan, \sigma(j) < T'/2, \sigma(j) +\pT{j} \geq s\}$ the set of large and huge jobs starting before $T'/2$ and ending at or after $s$.
	On the other hand, if we are only interested in the jobs that intersect the time $s$, we add the attribute $_{s,\intersectingJobs}$ to the set and mean $\jobs_{s,\intersectingJobs,\mathrm{pre}} := \{j \in \jobs| \pT{j} \geq \delta\greedyMakespan, \sigma(j) < T'/2, \sigma(j) +\pT{j} > s\}$.
	Finally, we will indicate if the set contains only huge or only large jobs, by adding the attribute $_H$ or $_L$.

	\begin{figure}[ht]
		\centering
		\resizebox{\textwidth}{!}{
		\begin{tikzpicture}
		
		\pgfmathsetmacro{\h}{2}
		\pgfmathsetmacro{\w}{4}
		\pgfmathsetmacro{\t}{\h + 0.3*\h}
		\pgfmathsetmacro{\lm}{0.3 *\w}
		\pgfmathsetmacro{\rm}{0.7*\w}
		\pgfmathsetmacro{\rrm}{0.6*\w}
		
		\pgfmathsetmacro{\rt}{\t - 0.2 *\t +0.2* \h}
		\pgfmathsetmacro{\rtt}{\t - 0.5 *\t +0.5* \h}
		\pgfmathsetmacro{\rttt}{\t - 0.7 *\t +0.7* \h}
		
		\drawVerticalItem{0}{0}{\w}{2*\h};
		
		\draw[dashed] (-0.5,\h) node[left] {$T'/2$} -- (\w +0.5,\h);
		\draw[dashed] (-0.5,2*\h) node[left] {$T'$} -- (\w +0.5,2*\h);
		
		\draw[dotted] (-0.5,\t) node[left] {$\tau$} -- (\w+0.5,\t) node[right] {};
		
		\draw (\rrm,\h)  --
		(\rrm,\t- 0.7*\t + 0.7*\h) -- 
		(\rm,\t- 0.7*\t + 0.7*\h)-- 
		(\rm,\t+ 0.7*\t - 0.7*\h)-- (
		0.8*\w,\t+ 0.7*\t - 0.7*\h) -- 
		(0.8*\w,\t + 1.6 *\h - 0.8*\t)-- 
		(\w,\t + 1.6 *\h - 0.8*\t);
		
		\drawTallItem{0.0}{0.1*\h}{0.1 *\lm}{2*\h};
		\drawTallItem{0.1*\lm}{0.2*\h-0.1*\h}{0.2*\lm}{2*\h-0.1*\h};
		\drawTallItem{0.2*\lm}{0.3*\h-0.2*\h}{0.4*\lm}{2*\h-0.2*\h};
		\drawTallItem{0.4*\lm}{0.5* \h-0.3*\h}{0.6*\lm}{2*\h-0.3*\h};
		\drawTallItem{0.6*\lm}{0.6* \h-0.4*\h}{0.8*\lm}{2*\h-0.4*\h};
		\drawTallItem{0.8*\lm}{0.8*\h-0.5*\h}{0.9*\lm}{2*\h-0.5*\h};
		\drawTallItem{0.9*\lm}{0.85*\h-0.6*\h}{\lm}{2*\h-0.6*\h};

		\drawTallItem{\lm}{\rt}{\lm+0.1*\rrm -0.1*\lm}{0.2};
		\drawTallItem{\lm+0.1*\rrm -0.1*\lm}{\rtt}{\lm+0.15*\rrm -0.15*\lm}{0.3};
		\drawTallItem{\lm+0.15*\rrm -0.15*\lm}{\rttt}{\lm+0.3*\rrm -0.3*\lm}{0};
		
		\begin{scope}[yshift= -\h cm]
		\drawBlueItem{\lm}{\rt + \h}{\lm+0.1*\rrm -0.1*\lm}{\t+\h+0.6*2*\h -0.6*\t};		
		\drawBlueItem{\lm +0.1*\rrm -0.1*\lm}{\rtt +\h}{\lm +0.4*\rrm -0.4*\lm}{\t+\h+0.4*2*\h -0.4*\t};		
		\drawBlueItem{\lm +0.4*\rrm -0.4*\lm}{\t+\h - 0.3*\t + 0.3*\h}{\lm +0.9*\rrm -0.9*\lm}{\t+\h+0.7*2*\h -0.7*\t};		
		\drawBlueItem{\lm +0.9*\rrm -0.9*\lm}{2*\h}{\rrm}{\t+\h+0.65*2*\h -0.65*\t};		
		\drawBlueItem{\rrm}{\t+\h- 0.7*\t + 0.7*\h}{\rm}{\t+\h+0.5*2*\h -0.5*\t};
		\end{scope}

		\draw[fill = gray, opacity=0.4](\rrm,\h)  --(\rrm,\t- 0.7*\t + 0.7*\h) -- (\rm,\t- 0.7*\t + 0.7*\h)-- (\rm,\t+ 0.7*\t - 0.7*\h)-- (0.8*\w,\t+ 0.7*\t - 0.7*\h) -- (0.8*\w,\t + 1.6 *\h - 0.8*\t)-- (\w,\t + 1.6 *\h - 0.8*\t) -- (\w,\h);
		
		\draw (3.4,\t) ellipse (17pt and 5pt);
		\draw (3.4,\t) -- (\w+1,\t+0.5) node[right] {$\jobsLTNS{\tau}$};
		
		\draw (0.6,\t) ellipse (17pt and 5pt);
		\draw (0.6,\t) -- (0.5*\w,2*\h +1) node[right] {$\jobsT{\tau}$};
		\draw (3.4,\t) -- (0.5*\w,2*\h +1);
		
		\begin{scope}[xshift = 1.8*\w cm]
		\pgfmathsetmacro{\h}{2}
		\pgfmathsetmacro{\w}{4}
		\pgfmathsetmacro{\t}{\h + 0.3*\h}
		\pgfmathsetmacro{\lm}{0.3 *\w}
		\pgfmathsetmacro{\rm}{0.7*\w}
		\pgfmathsetmacro{\rrm}{0.6*\w}
		
		\pgfmathsetmacro{\rt}{\t - 0.2 *\t +0.2* \h}
		\pgfmathsetmacro{\rtt}{\t - 0.5 *\t +0.5* \h}
		\pgfmathsetmacro{\rttt}{\t - 0.7 *\t +0.7* \h}
		
		\def\midd[#1]{\lm +{#1}*\rrm -{#1}*\lm}
		
		\draw   (0,0) rectangle (\w,3*\h);
		
		\draw[dashed] (-0.5,\h) -- (\w +0.5,\h) node[right] {$T'/2$};
		\draw[dashed] (-0.5,2*\h) node[left] {$T'$} -- (\w +0.5,2*\h);
		\draw[dashed] (-0.5,3*\h) -- (\w +0.5,3*\h) node[right] {$T' +\addT$};
		
		\draw[dashed] (-0.5,\t) -- (\w+0.5,\t) node[right] {$\tau$};
		\draw[dashed] (-0.5,\t+\h) -- (\w+0.5,\t+\h) node[right] {$\tau + \addT$};
		
		\draw (\rrm,\h)  --(\rrm,\t- 0.7*\t + 0.7*\h) -- (\rm,\t- 0.7*\t + 0.7*\h)-- (\rm,\t+ 0.7*\t - 0.7*\h)-- (0.8*\w,\t+ 0.7*\t - 0.7*\h) -- (0.8*\w,\t + 1.6 *\h - 0.8*\t)-- (\w,\t + 1.6 *\h - 0.8*\t);

		\foreach \x/\xx/\y in {0.0/0.1/0.1,0.1/0.2/0.2,0.2/0.35/0.3,0.35/0.4/0.45,0.4/0.6/0.5,0.6/0.65/0.6,
			0.65/0.75/0.75,0.75/0.8/0.8,0.8/0.9/0.85,0.9/1.0/0.9}
		{
			\drawTallItem{\xx*\lm}{\y*\h}{\x *\lm}{2*\h};	
		}

		\drawTallItem{\lm}{\rt}{\lm+0.1*\rm -0.1*\lm}{0.2};
		\drawTallItem{\lm+0.1*\rm -0.1*\lm}{\rtt}{\lm+0.15*\rm -0.15*\lm}{0.3};
		\drawTallItem{\lm+0.15*\rm -0.15*\lm}{\rttt}{\lm+0.3*\rm -0.3*\lm}{0};
		
		\drawVerticalItem{\lm}{\rt + \h}{\lm+0.1*\rrm -0.1*\lm}{\t+\h+0.6*2*\h -0.6*\t};
		\drawVerticalItem{\lm +0.1*\rrm -0.1*\lm}{\rtt +\h}{\lm +0.4*\rrm -0.4*\lm}{\t+\h+0.4*2*\h -0.4*\t};
		\drawVerticalItem{\lm +0.4*\rrm -0.4*\lm}{\t+\h - 0.3*\t + 0.3*\h}{\lm +0.9*\rrm -0.9*\lm}{\t+\h+0.7*2*\h -0.7*\t};
		\drawVerticalItem{\lm +0.9*\rrm -0.9*\lm}{2*\h}{\rrm}{\t+\h+0.65*2*\h -0.65*\t};
		\drawVerticalItem{\rrm}{\t+\h- 0.7*\t + 0.7*\h}{\rm}{\t+\h+0.5*2*\h -0.5*\t};
		
		\draw[fill = white!85!black, fill opacity = 0.7](\lm+0.3*\rm -0.3*\lm,0)-- (\lm+0.3*\rm -0.3*\lm,\h) --(\rrm,\h)  --(\rrm,\t- 0.7*\t + 0.7*\h) -- (\rm,\t- 0.7*\t + 0.7*\h)-- (\rm,\t+ 0.7*\t - 0.7*\h)-- (0.8*\w,\t+ 0.7*\t - 0.7*\h) -- (0.8*\w,\t + 1.6 *\h - 0.8*\t)-- (\w,\t + 1.6 *\h - 0.8*\t) -- (\w,0);
		
		\draw[fill = white!85!black, fill opacity = 0.7] (0.0,0.0) -- (0.0,0.1*\h) -- (0.1*\lm,0.1*\h)--(0.1*\lm,0.2*\h) -- (0.2*\lm,0.2*\h)--(0.2*\lm,0.3*\h)--(0.35*\lm,0.3* \h)--(0.35*\lm,0.45* \h)--(0.4*\lm,0.45* \h)--(0.4*\lm,0.5* \h) -- (0.6*\lm,0.5* \h)-- (0.6*\lm,0.6* \h) --(0.65*\lm,0.6*\h)--(0.65*\lm,0.75*\h)--(0.75*\lm,0.75*\h)--(0.75*\lm,0.8*\h) --(0.8*\lm,0.8*\h)--(0.8*\lm,0.85*\h)--(0.9*\lm,0.85* \h)--(0.9*\lm,0.9* \h)--(\lm,0.9* \h) -- (\lm,0.2) --(\lm+0.1*\rm -0.1*\lm,0.2) --(\lm+0.1*\rm -0.1*\lm,0.3)--(\lm+0.15*\rm -0.15*\lm,0.3) --(\lm+0.15*\rm -0.15*\lm,0.0);

		\draw[fill = white!85!black, fill opacity = 0.7](0.0,3*\h)--(0.0,\t +\h)--(\lm,\t+\h)--(\lm,\t+\h+0.6*2*\h -0.6*\t)--(\lm+0.1*\rrm -0.1*\lm,\t+\h+0.6*2*\h -0.6*\t) -- (\lm+0.1*\rrm -0.1*\lm,\t+\h+0.4*2*\h -0.4*\t)--(\lm +0.4*\rrm -0.4*\lm,\t+\h+0.4*2*\h -0.4*\t)--(\lm +0.4*\rrm -0.4*\lm,\t+\h+0.7*2*\h -0.7*\t)--(\lm +0.9*\rrm -0.9*\lm,\t+\h+0.7*2*\h -0.7*\t)--(\lm +0.9*\rrm -0.9*\lm,\t+\h+0.65*2*\h -0.65*\t)--(\rrm,\t+\h+0.65*2*\h -0.65*\t)--(\rrm,\t+\h+0.5*2*\h -0.5*\t)--(\rm,\t+\h+0.5*2*\h -0.5*\t)--(\rm,\t- 0.7*\t + 0.7*\h +\h)-- (\rm,\t+ 0.7*\t - 0.7*\h+\h)-- (0.8*\w,\t+ 0.7*\t - 0.7*\h+\h) -- (0.8*\w,\t + 1.6 *\h - 0.8*\t+\h)-- (\w,\t + 1.6 *\h - 0.8*\t+\h)--(\w,\t+\h)--(\w,3*\h);
		
		\draw [thick,decorate,decoration={brace,amplitude=5pt}] (\w,\t+\h) -- (\w,\t) node [midway,right,xshift=+12pt](B){};
		\node[right,text width=0.7*\w cm] at (B) { $k$ machines\\ unused};.
		
		\draw[fill = white!85!black, fill opacity = 0.7] (\lm+0.15*\rm -0.15*\lm,\rttt) -- (\lm+0.3*\rm -0.3*\lm,\rttt) -- (\lm+0.3*\rm -0.3*\lm,\h)--(\lm +0.9*\rrm -0.9*\lm,\h)--(\lm +0.9*\rrm -0.9*\lm,\t - 0.3*\t + 0.3*\h) --(\lm +0.4*\rrm -0.4*\lm,\t - 0.3*\t + 0.3*\h)--(\lm +0.4*\rrm -0.4*\lm,\rtt)--(\lm+0.15*\rm -0.15*\lm,\rtt);
		\end{scope}
		\end{tikzpicture}  
	}
		\caption{\textbf{On the left}: a rounded optimal schedule. The hatched rectangles are the jobs that start after $T'/2$ and intersect $\tau$, 
			the dark gray area corresponds to large jobs, which start before $T'/2$ and end after $\tau$ and the dark gray rectangles on the left are huge jobs.
			\textbf{On the right}: the corresponding shifted schedule
		}
		\label{fig:SRCS:IntroductionOfJobSets}
	\end{figure}

	Let $\tau \in \{s|s \in S,T'/2 \leq s \leq T'\}$ be the smallest value such that there are at most $m-k$ jobs (huge or large) that start before $T'/2$ and intersect $\tau$, i.e., end after $\tau$. 
	We partition the set of large jobs intersected by $\tau$ into two sets.
	Let $\jobsLTNS{\tau} := \{j \in \jobs_L| \sigma(j) < T'/2, \sigma(j) +\pT{j} > \tau\}$ be the set of large jobs which start before $T'/2$ and end at or after $\tau$. 
	Further let $\jobsLTS{\tau} := \{j \in \jobs_L| T'/2 \leq \sigma(j) < \tau, \sigma(j) + \pT{j}> \tau\}$ be the set of large jobs, which are started at or after $T/2$ but before $\tau$ and end after $\tau$, see Figure \ref{fig:SRCS:IntroductionOfJobSets}.

	Note that by the choice of $\tau$ in each point between $\tau$ and $T'/2$ in the schedule there are more than $m -k$ machines used by $\jobsT{\tau}$. 
	As a result there are at most $k-1$ machines used by jobs starting after $T'/2$ at each point between $T'/2$ and $\tau$, implying $|\jobsLTS{\tau}| < k$.
		
	We now construct a shifted schedule. 
	Starting times in this schedule will be denoted by $\sigma'$. 
	We shift each job $j\in \jobs$ with $\sigma(j) \geq T'/2$ and $\sigma(j) + p_j \geq \tau$ exactly $\addT$ upwards, i.e., we define $\sigma'(j) := \sigma(j) + \addT$ for these jobs. 
	Furthermore, each huge job $j \in \jobs_H$ intersecting $\tau$ is shifted upwards such that it ends at $T'$, i.e., we define $\sigma'(j) := T' - p_j$ for these jobs $j$, see Figure \ref{fig:SRCS:IntroductionOfJobSets}. 
	Note that there are at most $k$ huge jobs ending strictly before $\tau$.
	If the total number of huge jobs ending before or at $\tau$ is larger than $k$, we choose arbitrarily from the set of jobs ending at $\tau$ and shift them until there are exactly $k$ huge jobs ending before or at $\tau$.
	
	\begin{claim}
		After this shift there are at least $k$ machines at each point between $\tau$ and $\tau + \addT$ that are not used by any other job.
	\end{claim}
	\begin{proofClaim}
	Up to $T'$, there are $k$ free machines, because there is no new job starting between $\tau$ and $T'$ since we shifted all of them up such that they start after $\addT$. 
	On the other hand, only jobs from the set $\jobsLTS{\tau}$ are processed between $T'$ and $\tau + \addT$. 
	Since $|\jobsLTS{\tau}| < k$ and $m-k \geq k$ this leaves $k$ free machines which proves the claim. 
	\end{proofClaim}
	
	The idea is to place the gap at $\tau$ since there are enough free machines. 
	However, it can happen that at a point between $\tau$ and $\tau + \addT$ there is not enough free resource for the gap. 
	In the following, we carefully analyze where we can place the $k$ jobs, dependent on the structure of the optimal schedule $\OPT_{\mathrm{rounded}}$

	\begin{caseList}
		\item[$\jR{\jobsT{\tau}} \leq R-\gamma R$] 
		In this case there are at least $\gamma R$ free resources at each point in the shifted schedule between $\tau$ and $T'$ since there are no jobs starting between these points of time. 
		
		To place the $k$ fractional scheduled jobs, we have to generate a gap of processing time $\lceil T'/2 \rceil_{\eps\delta T}$.
		In this gap there have to be $k$ unused machines and $\gamma R$ unused resources.
		For the time between $\tau$ and $T'$, we have this guarantee, while for the time between $T'$ and $\tau + \lceil T'/2 \rceil_{\eps\delta \greedyMakespan}$, we have $k$ free machines, but might have less than $\gamma R$ free resource.
		The only jobs overlapping in this time window are the jobs from the set $\jobsLTS{\tau}$, see Figure \ref{fig:SRCS:IntroductionOfJobSets}.
		If these jobs have a small resource requirement, we have found our gap, see Case 1.1. and, otherwise, we have to look more careful at the schedule.

		\begin{caseList}
			\item[$\jR{\jobsLTS{\tau}} \leq R- \gamma R$.]	
			In this case, the required gap is positioned between $\tau$ and $\tau + \addT$, see Figure \ref{fig:SRCS:ShiftCase11}. 
			In this shifted optimal schedule there are at most $k$ huge jobs ending before $\tau$. 
			
			In the algorithm, we will guess $\tau$ dependent on a given solution for the large jobs and guess these $k$ huge jobs and their start points in $\mathcal{O}(m^kS^k)$, which is polynomial in the input size, see Section \ref{sec:SRCS:Absolute:Summary} for an overview.
			
			\begin{figure}[ht]
				\centering
				\resizebox{\textwidth}{!}{
				\begin{tikzpicture}
				
				\pgfmathsetmacro{\h}{2}
				\pgfmathsetmacro{\w}{4}
				\pgfmathsetmacro{\t}{\h + 0.3*\h}
				\pgfmathsetmacro{\lm}{0.3 *\w}
				\pgfmathsetmacro{\rm}{0.7*\w}
				\pgfmathsetmacro{\rrm}{0.6*\w}
				
				\pgfmathsetmacro{\rt}{\t - 0.2 *\t +0.2* \h}
				\pgfmathsetmacro{\rtt}{\t - 0.5 *\t +0.5* \h}
				\pgfmathsetmacro{\rttt}{\t - 0.7 *\t +0.7* \h}
				
				\draw   (0,0) rectangle (\w,3*\h);
				\node[below] at (\w/2,-2ex) {Case 1.1};
				\draw[dashed] (-0.5,\h) node[left] {$T'/2$} -- (\w +0.5,\h);
				\draw[dashed] (-0.5,2*\h) node[left] {$T'$} -- (\w +0.5,2*\h);
				\draw[dashed] (-0.5,3*\h) node[left] {$T' + \addT$} -- (\w +0.5,3*\h);
				
				\draw[dashed] (-0.5,\t) node[left] {$\tau$} -- (\w+0.5,\t);
				\draw[dashed] (-0.5,\t+\h) node[left] {$\tau + \addT$} -- (\w+0.5,\t+\h);
				
				\draw (\rrm,\h)  --(\rrm,\t- 0.7*\t + 0.7*\h) -- (\rm,\t- 0.7*\t + 0.7*\h)-- (\rm,\t+ 0.7*\t - 0.7*\h)-- (0.8*\w,\t+ 0.7*\t - 0.7*\h) -- (0.8*\w,\t + 1.6 *\h - 0.8*\t)-- (\w,\t + 1.6 *\h - 0.8*\t);
				
				
				\foreach \x/\xx/\y in {
					0.0/0.1/0.1,
					0.1/0.2/0.2,
					0.2/0.35/0.3,
					0.35/0.4/0.45,
					0.4/0.6/0.5,
					0.6/0.65/0.6,
					0.65/0.75/0.75,
					0.75/0.8/0.8,
					0.8/0.9/0.85,
					0.9/1.0/0.9
				}
				{
					\drawTallItem{\xx*\lm}{\y*\h}{\x *\lm}{2*\h};	
				}
				
				\drawTallItem{\lm}{\rt}{\lm+0.1*\rm -0.1*\lm}{0.2};
				\drawTallItem{\lm+0.1*\rm -0.1*\lm}{\rtt}{\lm+0.15*\rm -0.15*\lm}{0.3};
				\drawTallItem{\lm+0.15*\rm -0.15*\lm}{\rttt}{\lm+0.3*\rm -0.3*\lm}{0};
				
				\begin{scope}[xshift= -\rm cm +\w cm]
				\drawVerticalItem{\lm}{\rt + \h}{\lm+0.2*\rrm -0.2*\lm}{\t+\h+0.6*2*\h -0.6*\t};
				\drawVerticalItem{\lm +0.2*\rrm -0.2*\lm}{\rtt +\h}{\lm +0.4*\rrm -0.4*\lm}{\t+\h+0.4*2*\h -0.4*\t};
				\drawVerticalItem{\lm +0.4*\rrm -0.4*\lm}{\t+\h - 0.3*\t + 0.3*\h}{\lm +0.9*\rrm -0.9*\lm}{\t+\h+0.7*2*\h -0.7*\t};
				\drawVerticalItem{\lm +0.9*\rrm -0.9*\lm}{2*\h}{\rrm}{\t+\h+0.65*2*\h -0.65*\t};
				\drawVerticalItem{\rrm}{\t+\h- 0.7*\t + 0.7*\h}{\rm}{\t+\h+0.5*2*\h -0.5*\t};
				\end{scope}
				
				\draw[fill = white] (\lm +0.3,\t) rectangle node[midway]{Gap} (\rm - 0.4,\t+\h);
				
				\draw[fill = white!85!black, fill opacity = 0.7](\lm+0.3*\rm -0.3*\lm,0)-- (\lm+0.3*\rm -0.3*\lm,\h) --(\rrm,\h)  --(\rrm,\t- 0.7*\t + 0.7*\h) -- (\rm,\t- 0.7*\t + 0.7*\h)-- (\rm,\t+ 0.7*\t - 0.7*\h)-- (0.8*\w,\t+ 0.7*\t - 0.7*\h) -- (0.8*\w,\t + 1.6 *\h - 0.8*\t)-- (\w,\t + 1.6 *\h - 0.8*\t) -- (\w,0);
				
				\draw[fill = white!85!black, fill opacity = 0.7] (0.0,0.0) -- (0.0,0.1*\h) -- (0.1*\lm,0.1*\h)--(0.1*\lm,0.2*\h) -- (0.2*\lm,0.2*\h)--(0.2*\lm,0.3*\h)--(0.35*\lm,0.3* \h)--(0.35*\lm,0.45* \h)--(0.4*\lm,0.45* \h)--(0.4*\lm,0.5* \h) -- (0.6*\lm,0.5* \h)-- (0.6*\lm,0.6* \h) --(0.65*\lm,0.6*\h)--(0.65*\lm,0.75*\h)--(0.75*\lm,0.75*\h)--(0.75*\lm,0.8*\h) --(0.8*\lm,0.8*\h)--(0.8*\lm,0.85*\h)--(0.9*\lm,0.85* \h)--(0.9*\lm,0.9* \h)--(\lm,0.9* \h) -- (\lm,0.2) --(\lm+0.1*\rm -0.1*\lm,0.2) --(\lm+0.1*\rm -0.1*\lm,0.3)--(\lm+0.15*\rm -0.15*\lm,0.3) --(\lm+0.15*\rm -0.15*\lm,0.0);
				
				\draw[fill = white!85!black, fill opacity = 0.7](0.0,3*\h)--(0.0,\t +\h)--(\lm-\rm+\w,\t+\h)--(\lm-\rm+\w,\t+\h+0.6*2*\h -0.6*\t)--(\lm+0.2*\rrm -0.2*\lm-\rm+\w,\t+\h+0.6*2*\h -0.6*\t) -- (\lm+0.2*\rrm -0.2*\lm-\rm+\w,\t+\h+0.4*2*\h -0.4*\t)--(\lm +0.4*\rrm -0.4*\lm-\rm+\w,\t+\h+0.4*2*\h -0.4*\t)--(\lm +0.4*\rrm -0.4*\lm-\rm+\w,\t+\h+0.7*2*\h -0.7*\t)--(\lm +0.9*\rrm -0.9*\lm-\rm+\w,\t+\h+0.7*2*\h -0.7*\t)--(\lm +0.9*\rrm -0.9*\lm-\rm+\w,\t+\h+0.65*2*\h -0.65*\t)--(\rrm-\rm+\w,\t+\h+0.65*2*\h -0.65*\t)--(\rrm-\rm+\w,\t+\h+0.5*2*\h -0.5*\t)--(\rm-\rm+\w,\t+\h+0.5*2*\h -0.5*\t)--(\rm-\rm+\w,\t+\h)--(\w,\t+\h)--(\w,3*\h);
				
				\draw[fill = white!85!black, fill opacity = 0.7] (\lm+0.15*\rm -0.15*\lm,\rttt) -- (\lm+0.3*\rm -0.3*\lm,\rttt) -- (\lm+0.3*\rm -0.3*\lm,\h)--(\lm +0.9*\rrm -0.9*\lm,\h)--(\lm +0.9*\rrm -0.9*\lm,\t - 0.3*\t + 0.3*\h) --(\lm +0.4*\rrm -0.4*\lm,\t - 0.3*\t + 0.3*\h)--(\lm +0.4*\rrm -0.4*\lm,\rtt)--(\lm+0.15*\rm -0.15*\lm,\rtt);
				
				\begin{scope}[xshift = 1.4*\w cm]
				\pgfmathsetmacro{\h}{2}
				\pgfmathsetmacro{\w}{4}
				\pgfmathsetmacro{\t}{\h + 0.3*\h}
				\pgfmathsetmacro{\lm}{0.25 *\w}
				\pgfmathsetmacro{\rm}{0.85*\w}
				\pgfmathsetmacro{\rrm}{0.75*\w}
				\pgfmathsetmacro{\i}{\h + 0.8*\h}
				
				\pgfmathsetmacro{\rt}{\t - 0.2 *\t +0.2* \h}
				\pgfmathsetmacro{\rtt}{\t - 0.5 *\t +0.5* \h}
				\pgfmathsetmacro{\rttt}{\t - 0.7 *\t +0.7* \h}

				\draw   (0,0) rectangle (\w,3*\h);
				\node[below] at (\w/2,-2ex) {Case 1.2};

				\draw (\rrm,\h)  --(\rrm,\t- 0.7*\t + 0.7*\h) -- (\rm,\t- 0.7*\t + 0.7*\h)-- (\rm,\i)-- (0.9*\w,\i) -- (0.9*\w,\i + 1.0 *\h - 0.5*\i)-- (\w,\i + 1.0 *\h - 0.5*\i);
				
				\foreach \x/\xx/\y in {
					0.0/0.1/0.1,
					0.1/0.2/0.2,
					0.2/0.4/0.3,
					0.4/0.5/0.5,
					0.5/0.65/0.6,
					0.65/0.75/0.75,
					0.75/0.8/0.75,
					0.8/0.9/0.85,
					0.9/1.0/0.9
				}
				{
					\drawTallItem{\xx*\lm}{\y*\h}{\x *\lm}{2*\h};	
				}
				\drawTallItem{\lm}{\rt}{\lm+0.1*\rrm -0.1*\lm}{0.2};
				\drawTallItem{\lm+0.1*\rrm -0.1*\lm}{\rtt}{\lm+0.15*\rrm -0.15*\lm}{0.3};
				\drawTallItem{\lm+0.15*\rrm -0.15*\lm}{\rttt}{\lm+0.3*\rrm -0.3*\lm}{0};

				
				\begin{scope}[gray]
				\draw (\lm +0.3*\rrm -0.3*\lm,\t+\h - 0.3*\t + 0.3*\h) rectangle node[midway]{$i$} (\lm +0.9*\rrm -0.9*\lm,\i +\h);
				\end{scope}
				
				\drawVerticalItem{\lm}{\rt + \h}{\lm+0.1*\rrm -0.1*\lm}{\t+\h+0.15*2*\h -0.15*\t};
				\drawVerticalItem{\lm +0.1*\rrm -0.1*\lm}{\rtt +\h}{\lm +0.3*\rrm -0.3*\lm}{\t+\h+0.4*2*\h -0.4*\t};
				\drawVerticalItem{\lm +0.9*\rrm -0.9*\lm}{2*\h}{\rrm}{\t+\h+0.2*2*\h -0.2*\t};
				\drawVerticalItem{\rrm}{\t+\h- 0.7*\t + 0.7*\h}{\rm}{\t+\h+0.5*2*\h -0.5*\t};

				\begin{scope}[yshift = -\h cm]
				
				\drawVerticalItem[$i$]{\lm +0.3*\rrm -0.3*\lm}{\t+\h - 0.3*\t + 0.3*\h}{\lm +0.9*\rrm -0.9*\lm}{\i +\h};
				\end{scope}
				
				\draw[->] (\lm +0.6*\rrm -0.6*\lm,\t+\h - 0.3*\t + 0.3*\h) -- (\lm +0.6*\rrm -0.6*\lm,\i);

				\draw[fill = white!85!black, fill opacity = 0.7](\lm+0.3*\rrm -0.3*\lm,0)-- (\lm+0.3*\rrm -0.3*\lm,\h) --(\rrm,\h)  --(\rrm,\t- 0.7*\t + 0.7*\h) -- (\rm,\t- 0.7*\t + 0.7*\h)-- (\rm,\i)-- (0.9*\w,\i) -- (0.9*\w,\i + 1.0 *\h - 0.5*\i)-- (\w,\i + 1.0 *\h - 0.5*\i) -- (\w,0);
				
				\draw[fill = white!85!black, fill opacity = 0.7] 
				(0.0,0.0) -- 
				(0.0,0.1*\h) -- 
				(0.1*\lm,0.1*\h) --
				(0.1*\lm,0.2*\h) -- 
				(0.2*\lm,0.2*\h) -- 
				(0.2*\lm,0.3*\h) -- 
				(0.4*\lm,0.3* \h) -- 
				(0.4*\lm,0.5* \h) -- 
				(0.5*\lm,0.5* \h) -- 
				(0.5*\lm,0.6* \h) --
				(0.65*\lm,0.6*\h) -- 
				(0.65*\lm,0.75*\h) -- 
				(0.8*\lm,0.75*\h) -- 
				(0.8*\lm,0.85*\h) -- 
				(0.9*\lm,0.85* \h) -- 
				(0.9*\lm,0.9* \h) -- 
				(\lm,0.9* \h) -- 
				(\lm,0.2) --
				(\lm+0.1*\rrm -0.1*\lm,0.2) --
				(\lm+0.1*\rrm -0.1*\lm,0.3) -- 
				(\lm+0.15*\rrm -0.15*\lm,0.3) --
				(\lm+0.15*\rrm -0.15*\lm,0.0);

				\draw[fill = white!85!black, fill opacity = 0.7](0.0,3*\h)--(0.0,\t +\h)--(\lm,\t+\h)--(\lm,\t+\h+0.15*2*\h -0.15*\t)--(\lm+0.1*\rrm -0.1*\lm,\t+\h+0.15*2*\h -0.15*\t) -- (\lm+0.1*\rrm -0.1*\lm,\t+\h+0.4*2*\h -0.4*\t)--(\lm +0.3*\rrm -0.3*\lm,\t+\h+0.4*2*\h -0.4*\t)--(\lm +0.3*\rrm -0.3*\lm,\t+\h+0.7*2*\h -0.7*\t)--(\lm +0.9*\rrm -0.9*\lm,\t+\h+0.7*2*\h -0.7*\t)--(\lm +0.9*\rrm -0.9*\lm,\t+\h+0.2*2*\h -0.2*\t)--(\rrm,\t+\h+0.2*2*\h -0.2*\t)--(\rrm,\t+\h+0.5*2*\h -0.5*\t)--(\rm,\t+\h+0.5*2*\h -0.5*\t)--(\rm,\t+\h)--(\w,\t+\h)--(\w,3*\h);
				
				\draw[fill = white!85!black, fill opacity = 0.7]
				(\lm+0.15*\rrm -0.15*\lm,\rttt) -- 
				(\lm +0.3*\rrm -0.3*\lm,\rttt) -- 
				(\lm +0.3*\rrm -0.3*\lm,\h)--
				(\lm +0.9*\rrm -0.9*\lm,\h)--
				(\lm +0.9*\rrm -0.9*\lm,\t - 0.3*\t + 0.3*\h) --
				(\lm +0.3*\rrm -0.3*\lm,\t - 0.3*\t + 0.3*\h)--
				(\lm +0.3*\rrm -0.3*\lm,\rtt)--
				(\lm+0.15*\rrm -0.15*\lm,\rtt) -- 
				(\lm+0.15*\rrm -0.15*\lm,\rttt);
				
				\draw[dashed] (-0.5,\h) -- (\w +0.5,\h) node[right] {$T'/2$};
				\draw[dashed] (-0.5,2*\h) -- (\w +0.5,2*\h) node[right] {$T'$};
				\draw[dashed] (-0.5,3*\h) -- (\w +0.5,3*\h) node[right] {$T' + \addT$};
				
				\draw[dashed] (-0.5,\t) -- (\w+0.5,\t)  node[right] {$\tau$};
				\draw[dashed] (-0.5,\t+\h) -- (\w+0.5,\t+\h) node[right] {$\tau + \addT$};
				\draw[dashed](-0.5,\i +\h) -- (\w+0.5,\i+\h) node[right] {$\iota + \addT$};
				\draw[dashed](-0.5,\i) -- (\w+0.5,\i) node[right] {$\iota$};
				\end{scope}
				\end{tikzpicture} 
			}
				\caption{Examples for the two Cases 1.1. and 1.2. In Case 1.1 the gap is positioned between $\tau$ and $\tau + \addT$.
				In Case 1.2 the jobs in the set $\jobsLTS{\tau}^{\iota}$ are shifted back down. At each point between $\iota$ and $\iota + \addT$ there are at least $\gamma R$ unused resources.}
				\label{fig:SRCS:ShiftCase11}
			\end{figure}
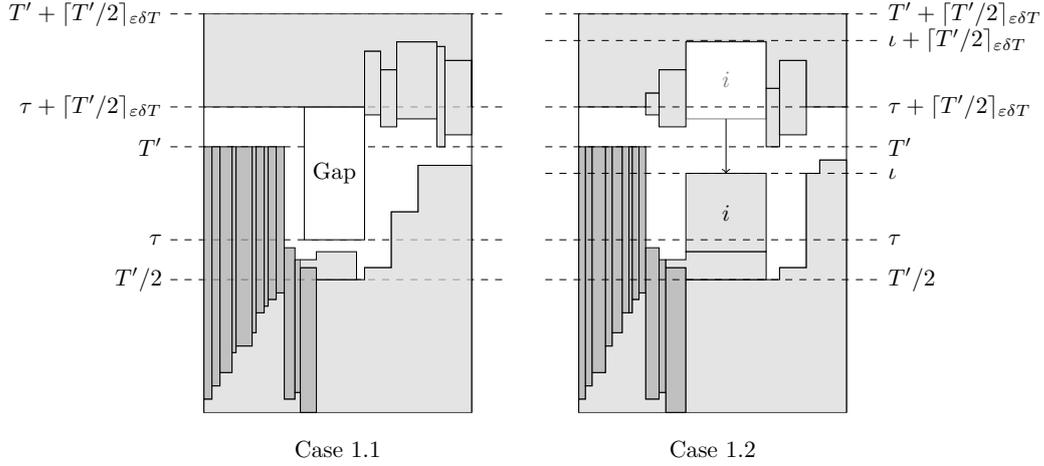
			
			\item[$\jR{\jobsLTS{\tau}} > R- \gamma R$.]
			In this case, there is a point $t \in [\tau, T']$ such that after this point there are less than $\gamma R$ free resources. 
			Therefore, we need another position to place the fractionally scheduled jobs. 
			We partition the set $\jobsLTS{\tau}$ into at most $|\startPoints|/2$ sets $\jobsLTS{\tau}^{\iota}$ by the their original finishing points $\iota \in \startPoints_{> \tau}$, i.e.,  each  job in $\jobsLTS{\tau}^{\iota}$ finishes at $\iota$ in the non shifted rounded optimal schedule. 
			
			\begin{claim}
				One of the sets  $\jobsLTS{\tau}^{\iota}$,  $\iota \in \startPoints_{> \tau}$, has a resource requirement of at least $\gamma R$.
			\end{claim}
			\begin{proofClaim}
				The jobs in $\jobsLTS{\tau}$ use more than $R- \gamma R$ resource in total. 
				Since $\gamma \leq 1/(3|\startPoints|) \leq 1/(|\startPoints|/2 -1)$, it holds that
				\[\frac{R- \gamma R}{|\startPoints|/2} \geq \frac{(1- 1/(|\startPoints|/2 -1))R}{|\startPoints|/2} = R/(|\startPoints|/2 -1)  \geq \gamma R.\]
				Hence, by the pigeon principle, one of the sets, say $\jobsLTS{\tau}^{\iota}$, must have a resource requirement of at least $\gamma R$.
			\end{proofClaim} 
			Let $\jobsLTS{\tau}^{\iota}$ be this set.
			To generate a gap, we shift down all jobs in $\jobsLTS{\tau}^{\iota}$ back to their primary start position, see Figure \ref{fig:SRCS:ShiftCase11}.
			\begin{claim}
				As a result of this shift, there are at least $\gamma R$ free resources at each point between $\iota$ and $\iota + \addT$. 
			\end{claim}
			\begin{proofClaim}
				At each point between $\iota$ and $T'$ there were $\gamma R$ unused resources before. 
				Each job which starts between $T'$ and $\tau + \addT$ is an element of $\jobsLTS{\tau}$ and was therefore scheduled in parallel to the jobs in $\jobsLTS{\tau}^{\iota}$. 
				Therefore, at each point between $T'$ and $\tau + \addT$ at least $\gamma R$ resources are unused. 
				From $\tau + \addT$ to $\iota + \addT$ the jobs $\jobsLTS{\tau}^{\iota}$ were scheduled, so there are at least $\gamma R$ free resources. 
			\end{proofClaim}
			
			We now have to differentiate if there are at least $k$ machines unused between $\tau + \addT$ and $\iota + \addT$, see Figure \ref{fig:SRCS:ShiftCase11}.
			Let $\rho \in \{s| \tau \leq s \leq T, s \in S\}$ be the first point in the schedule where at most $k$ jobs from $\jobsT{\tau}$ are scheduled in the given optimal schedule (not the shifted one), i.e., $\rho$ is the first point in time where $|\jobsTInter{\rho}| \leq k$.
			Note that as a consequence $|\jobsT{\rho}| \geq k$ since otherwise there would have been a point in time before $\rho$, where at most $k$ machines are used by jobs starting before $T'/2$.
			We know that between $T'$ and $\rho + \addT$ there always will be $k$ machines unused since before the first shift they were blocked by jobs in $\jobsT{\tau}$.
			
			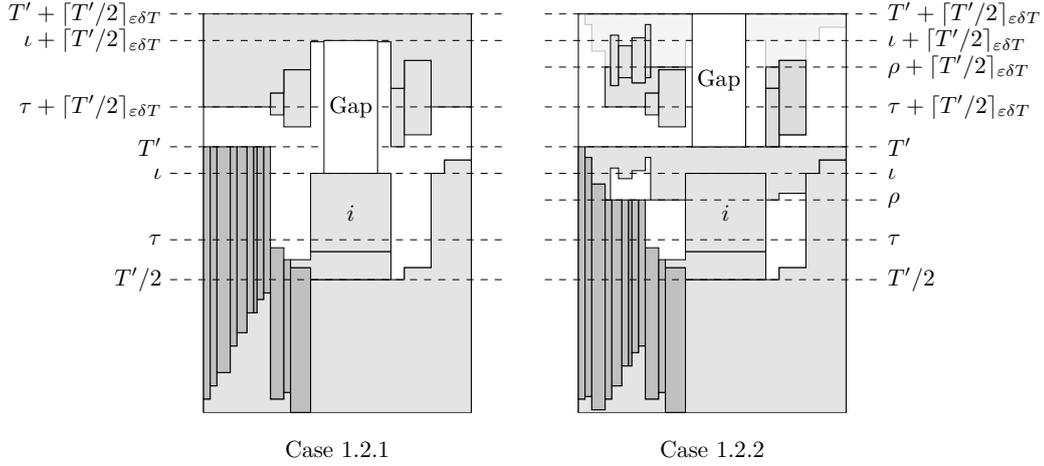
\begin{figure}[ht]
				\centering
				\resizebox{\textwidth}{!}{
				\begin{tikzpicture}				
				\pgfmathsetmacro{\h}{2}
				\pgfmathsetmacro{\w}{4}
				\pgfmathsetmacro{\t}{\h + 0.3*\h}
				\pgfmathsetmacro{\lm}{0.25 *\w}
				\pgfmathsetmacro{\rm}{0.85*\w}
				\pgfmathsetmacro{\rrm}{0.75*\w}
				\pgfmathsetmacro{\i}{\h + 0.8*\h}
				\pgfmathsetmacro{\r}{\h + 0.9*\h}				
				\pgfmathsetmacro{\rt}{\t - 0.2 *\t +0.2* \h}
				\pgfmathsetmacro{\rtt}{\t - 0.5 *\t +0.5* \h}
				\pgfmathsetmacro{\rttt}{\t - 0.7 *\t +0.7* \h}
				\draw   (0,0) rectangle (\w,3*\h);
				\node[below] at (\w/2,-2ex) {Case 1.2.1};
				\draw (\rrm,\h)  --(\rrm,\t- 0.7*\t + 0.7*\h) -- (\rm,\t- 0.7*\t + 0.7*\h)-- (\rm,\i)-- (0.9*\w,\i) -- (0.9*\w,\i + 1.0 *\h - 0.5*\i)-- (\w,\i + 1.0 *\h - 0.5*\i);
				
				\foreach \x/\xx/\y in {
					0.0/0.1/0.1,
					0.1/0.2/0.2,
					0.2/0.4/0.3,
					0.4/0.5/0.5,
					0.5/0.65/0.6,
					0.65/0.75/0.75,
					0.75/0.8/0.75,
					0.8/0.9/0.85,
					0.9/1.0/0.9
				}
				{
					\drawTallItem{\xx*\lm}{\y*\h}{\x *\lm}{2*\h};	
				}
				\drawTallItem{\lm}{\rt}{\lm+0.1*\rrm -0.1*\lm}{0.2};
				\drawTallItem{\lm+0.1*\rrm -0.1*\lm}{\rtt}{\lm+0.15*\rrm -0.15*\lm}{0.3};
				\drawTallItem{\lm+0.15*\rrm -0.15*\lm}{\rttt}{\lm+0.3*\rrm -0.3*\lm}{0};
				
				
				\drawVerticalItem{\lm}{\rt + \h}{\lm+0.1*\rrm -0.1*\lm}{\t+\h+0.15*2*\h -0.15*\t};
				\drawVerticalItem{\lm +0.1*\rrm -0.1*\lm}{\rtt +\h}{\lm +0.3*\rrm -0.3*\lm}{\t+\h+0.4*2*\h -0.4*\t};
				\drawVerticalItem{\lm +0.9*\rrm -0.9*\lm}{2*\h}{\rrm}{\t+\h+0.2*2*\h -0.2*\t};
				\drawVerticalItem{\rrm}{\t+\h- 0.7*\t + 0.7*\h}{\rm}{\t+\h+0.5*2*\h -0.5*\t};
				
				\begin{scope}[yshift = -\h cm]
				\drawVerticalItem[$i$]{\lm +0.3*\rrm -0.3*\lm}{\t+\h - 0.3*\t + 0.3*\h}{\lm +0.9*\rrm -0.9*\lm}{\i +\h};
				\end{scope}
				
				
				\draw[fill = white!85!black, fill opacity = 0.7](\lm+0.3*\rrm -0.3*\lm,0)-- (\lm+0.3*\rrm -0.3*\lm,\h) --(\rrm,\h)  --(\rrm,\t- 0.7*\t + 0.7*\h) -- (\rm,\t- 0.7*\t + 0.7*\h)-- (\rm,\i)-- (0.9*\w,\i) -- (0.9*\w,\i + 1.0 *\h - 0.5*\i)-- (\w,\i + 1.0 *\h - 0.5*\i) -- (\w,0);
				
				\draw[fill = white!85!black, fill opacity = 0.7] (0.0,0.0) -- (0.0,0.1*\h) -- (0.1*\lm,0.1*\h)--(0.1*\lm,0.2*\h) -- (0.2*\lm,0.2*\h)--(0.2*\lm,0.3*\h)--(0.4*\lm,0.3* \h)--(0.4*\lm,0.5* \h) -- (0.5*\lm,0.5* \h)-- (0.5*\lm,0.6* \h) --(0.65*\lm,0.6*\h)--(0.65*\lm,0.75*\h)--(0.8*\lm,0.75*\h)--(0.8*\lm,0.85*\h)--(0.9*\lm,0.85* \h)--(0.9*\lm,0.9* \h)--(\lm,0.9* \h) -- (\lm,0.2) --(\lm+0.1*\rrm -0.1*\lm,0.2) --(\lm+0.1*\rrm -0.1*\lm,0.3)--(\lm+0.15*\rrm -0.15*\lm,0.3) --(\lm+0.15*\rrm -0.15*\lm,0.0);

				\draw[fill = white!85!black, fill opacity = 0.7](0.0,3*\h)--(0.0,\t +\h)--(\lm,\t+\h)--(\lm,\t+\h+0.15*2*\h -0.15*\t)--(\lm+0.1*\rrm -0.1*\lm,\t+\h+0.15*2*\h -0.15*\t) -- (\lm+0.1*\rrm -0.1*\lm,\t+\h+0.4*2*\h -0.4*\t)--(\lm +0.3*\rrm -0.3*\lm,\t+\h+0.4*2*\h -0.4*\t)--(\lm +0.3*\rrm -0.3*\lm,\t+\h+0.7*2*\h -0.7*\t)--(\lm +0.9*\rrm -0.9*\lm,\t+\h+0.7*2*\h -0.7*\t)--(\lm +0.9*\rrm -0.9*\lm,\t+\h+0.2*2*\h -0.2*\t)--(\rrm,\t+\h+0.2*2*\h -0.2*\t)--(\rrm,\t+\h+0.5*2*\h -0.5*\t)--(\rm,\t+\h+0.5*2*\h -0.5*\t)--(\rm,\t+\h)--(\w,\t+\h)--(\w,3*\h);
				
				\draw[fill = white!85!black, fill opacity = 0.7]
				(\lm+0.15*\rrm -0.15*\lm,\rttt) -- 
				(\lm +0.3*\rrm -0.3*\lm,\rttt) -- 
				(\lm +0.3*\rrm -0.3*\lm,\h)--
				(\lm +0.9*\rrm -0.9*\lm,\h)--
				(\lm +0.9*\rrm -0.9*\lm,\t - 0.3*\t + 0.3*\h) --
				(\lm +0.3*\rrm -0.3*\lm,\t - 0.3*\t + 0.3*\h)--
				(\lm +0.3*\rrm -0.3*\lm,\rtt)--
				(\lm+0.15*\rrm -0.15*\lm,\rtt) -- 
				(\lm+0.15*\rrm -0.15*\lm,\rttt);
				
				\draw[dashed] (-0.5,\h) node[left] {$T'/2$} -- (\w +0.5,\h);
				\draw[dashed] (-0.5,2*\h) node[left] {$T'$} -- (\w +0.5,2*\h);
				\draw[dashed] (-0.5,3*\h) node[left] {$T' + \addT$} -- (\w +0.5,3*\h);
				
				\draw[dashed] (-0.5,\t) node[left] {$\tau$} -- (\w+0.5,\t) node[right] {};
				\draw[dashed] (-0.5,\t+\h) node[left] {$\tau + \addT$} -- (\w+0.5,\t+\h);
				\draw[dashed](-0.5,\i +\h) node[left] {$\iota +\addT$} -- (\w+0.5,\i+\h);
				\draw[dashed](-0.5,\i) node[left] {$\iota$} -- (\w+0.5,\i);

				\draw[fill = white] (\lm +0.4*\rrm -0.4*\lm,\i) rectangle node[midway]{Gap} (\lm +0.8*\rrm -0.8*\lm,\i+\h);
				
				\begin{scope}[xshift = 1.4*\w cm]
				\pgfmathsetmacro{\h}{2}
				\pgfmathsetmacro{\w}{4}
				\pgfmathsetmacro{\t}{\h + 0.3*\h}
				\pgfmathsetmacro{\lm}{0.25 *\w}
				\pgfmathsetmacro{\rm}{0.85*\w}
				\pgfmathsetmacro{\rrm}{0.75*\w}
				\pgfmathsetmacro{\i}{\h + 0.8*\h}
				\pgfmathsetmacro{\rt}{\t - 0.2 *\t +0.2* \h}
				\pgfmathsetmacro{\rtt}{\t - 0.5 *\t +0.5* \h}
				\pgfmathsetmacro{\rttt}{\t - 0.7 *\t +0.7* \h}
				\pgfmathsetmacro{\je}{\t+1.6}
				\pgfmathsetmacro{\tt}{\h + 0.6*\h}
				
				\pgfmathsetmacro{\tth}{\tt +\h}
				\pgfmathsetmacro{\tthx}{2*\h -\tt}
				\pgfmathsetmacro{\r}{\tt}
				\pgfmathsetmacro{\ryshift}{2*\h -\r}

				\begin{scope}[lightgray]
				\draw [fill = white!95!black, fill opacity = 0.7] (0.1 *\lm,3*\h) -- (0.1 *\lm,3*\h-0.2*2*\h + 0.2*\tt) -- (0.2 *\lm,3*\h-0.2*2*\h + 0.2*\tt) -- (0.2*\lm,3*\h-0.7*2*\h + 0.7*\tt)  -- (0.4*\lm,3*\h-0.7*2*\h + 0.7*\tt) -- (0.4*\lm,\r + \h) -- (0.12 *\w,\r + \h) -- (0.12 *\w,3*\h + 0.4*\r - 2*0.4*\h) -- (0.15*\w,3*\h + 0.4*\r - 2*0.4*\h) -- (0.15*\w,3*\h + 0.6*\r - 2*0.6*\h) -- (0.2*\w,3*\h + 0.6*\r - 2*0.6*\h) -- (0.2*\w,3*\h + 0.45*\r - 2*0.45*\h) -- (0.25*\w,3*\h + 0.45*\r - 2*0.45*\h) -- (0.25*\w,3*\h + 0.2*\r - 2*0.2*\h) -- (0.27*\w,3*\h + 0.2*\r - 2*0.2*\h) -- (0.27*\w,\r+\h) -- (\lm +0.3*\rrm -0.3*\lm,\r+\h) -- (\lm +0.3*\rrm -0.3*\lm,\i +\h) -- (\lm +0.9*\rrm -0.9*\lm,\i +\h) -- (\lm +0.9*\rrm -0.9*\lm,\r +\h) -- (\rrm,\r +\h) -- (\rrm,\t+\h+0.5*2*\h -0.5*\t) -- (\rm,\t+\h+0.5*2*\h -0.5*\t) --  (\rm,\i+\h)-- (0.9*\w,\i+\h) -- (0.9*\w,\i + 1.0 *\h - 0.5*\i+\h)-- (\w,\i + 1.0 *\h - 0.5*\i+\h) --(\w,3*\h) -- (0.1 *\lm,3*\h);
				\end{scope}
				
				\draw   (0,0) rectangle (\w,3*\h);
				\node[below] at (\w/2,-2ex) {Case 1.2.2};

				\draw (\rrm,\h)  --(\rrm,\t- 0.7*\t + 0.7*\h) -- (\rm,\t- 0.7*\t + 0.7*\h)-- (\rm,\i)-- (0.9*\w,\i) -- (0.9*\w,\i + 1.0 *\h - 0.5*\i)-- (\w,\i + 1.0 *\h - 0.5*\i);
				
				\begin{scope}[xscale = \lm/\w]
				\drawTallItem{0.0}{0.1*\h}{0.1 *\w}{2*\h};
				\begin{scope}[yshift = -0.2*2*\h cm + 0.2*\tt cm]
				\drawTallItem{0.1*\w}{0.2*\h}{0.2*\w}{2*\h};
				\end{scope}
				\begin{scope}[yshift = -0.7*2*\h cm + 0.7*\tt cm]
				\drawTallItem{0.2*\w}{0.3*\h}{0.4*\w}{2*\h};
				\end{scope}
				\end{scope}
				
				\begin{scope}[yshift = -\ryshift cm]
				\foreach \x/\xx/\y in {
					0.4/0.5/0.5,
					0.5/0.65/0.6,
					0.65/0.75/0.75,
					0.75/0.8/0.75,
					0.8/0.9/0.85,
					0.9/1.0/0.9
				}
				{
					\drawTallItem{\xx*\lm}{\y*\h}{\x *\lm}{2*\h};	
				}
				\end{scope}
				\drawTallItem{\lm}{\rt}{\lm+0.1*\rrm -0.1*\lm}{0.2};
				\drawTallItem{\lm+0.1*\rrm -0.1*\lm}{\rtt}{\lm+0.15*\rrm -0.15*\lm}{0.3};
				\drawTallItem{\lm+0.15*\rrm -0.15*\lm}{\rttt}{\lm+0.3*\rrm -0.3*\lm}{0};
				

				\drawVerticalItem{\lm}{\rt + \h}{\lm+0.1*\rrm -0.1*\lm}{\t+\h+0.15*2*\h -0.15*\t};
				\drawVerticalItem{\lm +0.1*\rrm -0.1*\lm}{\rtt +\h}{\lm +0.3*\rrm -0.3*\lm}{\t+\h+0.4*2*\h -0.4*\t};
				\drawVerticalItem{\lm +0.9*\rrm -0.9*\lm}{2*\h}{\rrm}{\t+\h+0.2*2*\h -0.2*\t};
				\drawVerticalItem{\rrm}{\t+\h- 0.7*\t + 0.7*\h}{\rm}{\t+\h+0.5*2*\h -0.5*\t};
				
				\begin{scope}[yshift = -\h cm]
				\drawVerticalItem[$i$]{\lm +0.3*\rrm -0.3*\lm}{\t+\h - 0.3*\t + 0.3*\h}{\lm +0.9*\rrm -0.9*\lm}{\i +\h};
				\end{scope}

				\drawVerticalItem{0.12 *\w}{\r +\h + 0.35*\r - 2*0.35*\h}{0.15 *\w}{3*\h + 0.4*\r - 2*0.4*\h};
				\drawVerticalItem{0.15*\w}{\r +\h + 0.2*\r - 2*0.2*\h}{0.2*\w}{3*\h + 0.6*\r - 2*0.6*\h};
				\drawVerticalItem{0.2*\w}{\r +\h+ 0.3*\r - 2*0.3*\h}{0.25*\w}{3*\h + 0.45*\r - 2*0.45*\h};
				\drawVerticalItem{0.25*\w}{\r +\h+ 0.2*\r - 2*0.2*\h}{0.27*\w}{3*\h + 0.2*\r - 2*0.2*\h};
				
				
				\begin{scope}[yshift = -\h cm]
				\draw [fill = white!85!black, fill opacity = 0.7] (0.1 *\lm,3*\h) -- (0.1 *\lm,3*\h-0.2*2*\h + 0.2*\tt) -- (0.2 *\lm,3*\h-0.2*2*\h + 0.2*\tt) -- (0.2*\lm,3*\h-0.7*2*\h + 0.7*\tt)  -- (0.4*\lm,3*\h-0.7*2*\h + 0.7*\tt) -- (0.4*\lm,\r + \h) -- (0.12 *\w,\r + \h) -- (0.12 *\w,3*\h + 0.4*\r - 2*0.4*\h) -- (0.15*\w,3*\h + 0.4*\r - 2*0.4*\h) -- (0.15*\w,3*\h + 0.6*\r - 2*0.6*\h) -- (0.2*\w,3*\h + 0.6*\r - 2*0.6*\h) -- (0.2*\w,3*\h + 0.45*\r - 2*0.45*\h) -- (0.25*\w,3*\h + 0.45*\r - 2*0.45*\h) -- (0.25*\w,3*\h + 0.2*\r - 2*0.2*\h) -- (0.27*\w,3*\h + 0.2*\r - 2*0.2*\h) -- (0.27*\w,\r+\h) -- (\lm +0.3*\rrm -0.3*\lm,\r+\h) -- (\lm +0.3*\rrm -0.3*\lm,\i +\h)  -- (\lm +0.9*\rrm -0.9*\lm,\i +\h) -- (\lm +0.9*\rrm -0.9*\lm,\r +\h) -- (\rrm,\r +\h) -- (\rrm,\t+\h+0.5*2*\h -0.5*\t) -- (\rm,\t+\h+0.5*2*\h -0.5*\t) --  (\rm,\i+\h)-- (0.9*\w,\i+\h) -- (0.9*\w,\i + 1.0 *\h - 0.5*\i+\h)-- (\w,\i + 1.0 *\h - 0.5*\i+\h) --(\w,3*\h) -- (0.1 *\lm,3*\h);
				\end{scope}

				\draw[fill = white!85!black, fill opacity = 0.7](\lm+0.3*\rrm -0.3*\lm,0)-- (\lm+0.3*\rrm -0.3*\lm,\h) --(\rrm,\h)  --(\rrm,\t- 0.7*\t + 0.7*\h) -- (\rm,\t- 0.7*\t + 0.7*\h)-- (\rm,\i)-- (0.9*\w,\i) -- (0.9*\w,\i + 1.0 *\h - 0.5*\i)-- (\w,\i + 1.0 *\h - 0.5*\i) -- (\w,0);
				
				\draw[fill = white!85!black, fill opacity = 0.7] 
				(0.0,0.0) -- 
				(0.0,0.1*\h) -- 
				(0.1*\lm,0.1*\h ) --
				(0.1*\lm,0.2*\h -0.2*2*\h + 0.2*\tt) -- 
				(0.2*\lm,0.2*\h-0.2*2*\h + 0.2*\tt) --
				(0.2*\lm,0.3*\h -0.7*2*\h  + 0.7*\tt) -- 
				(0.4*\lm,0.3* \h -0.7*2*\h  + 0.7*\tt) --
				(0.4*\lm,0.5* \h -\ryshift) -- 
				(0.5*\lm,0.5* \h -\ryshift) -- 
				(0.5*\lm,0.6* \h -\ryshift) --
				(0.65*\lm,0.6*\h -\ryshift) -- 
				(0.65*\lm,0.75*\h -\ryshift) --
				(0.8*\lm,0.75*\h -\ryshift) -- 
				(0.8*\lm,0.85*\h -\ryshift) -- 
				(0.9*\lm,0.85* \h -\ryshift) -- 
				(0.9*\lm,0.9* \h -\ryshift) --
				(\lm,0.9* \h -\ryshift) -- 
				(\lm,0.2) --
				(\lm+0.1*\rrm -0.1*\lm,0.2) -- 
				(\lm+0.1*\rrm -0.1*\lm,0.3) --
				(\lm+0.15*\rrm -0.15*\lm,0.3) --
				(\lm+0.15*\rrm -0.15*\lm,0.0);
				
				\draw[fill = white!85!black, fill opacity = 0.7]
				(\lm+0.15*\rrm -0.15*\lm,\rttt) -- 
				(\lm +0.3*\rrm -0.3*\lm,\rttt) -- 
				(\lm +0.3*\rrm -0.3*\lm,\h)--
				(\lm +0.9*\rrm -0.9*\lm,\h)--
				(\lm +0.9*\rrm -0.9*\lm,\t - 0.3*\t + 0.3*\h) --
				(\lm +0.3*\rrm -0.3*\lm,\t - 0.3*\t + 0.3*\h)--
				(\lm +0.3*\rrm -0.3*\lm,\rtt)--
				(\lm+0.15*\rrm -0.15*\lm,\rtt) -- 
				(\lm+0.15*\rrm -0.15*\lm,\rttt);
				
				\draw[fill = white!85!black, fill opacity = 0.7]
				(0.4*\lm,\r+\h) --
				(0.12 *\w,\r+\h) --
				(0.12 *\w,\r +\h + 0.35*\r - 2*0.35*\h) --
				(0.15 *\w,\r +\h + 0.35*\r - 2*0.35*\h) -- 
				(0.15 *\w,\r +\h + 0.2*\r - 2*0.2*\h) -- 
				(0.2*\w,\r +\h + 0.2*\r - 2*0.2*\h) -- 
				(0.2*\w,\r +\h+ 0.3*\r - 2*0.3*\h) --
				(0.25*\w,\r +\h+ 0.3*\r - 2*0.3*\h) --
				(0.25*\w,\r +\h+ 0.2*\r - 2*0.2*\h) --
				(0.27*\w,\r +\h+ 0.2*\r - 2*0.2*\h) --
				(0.27*\w,\r+\h) -- 
				(\lm +0.3*\rrm -0.3*\lm,\r+\h) -- 
				(\lm +0.3*\rrm -0.3*\lm,\t+\h+0.4*2*\h -0.4*\t) -- 
				(\lm +0.1*\rrm -0.1*\lm,\t+\h+0.4*2*\h -0.4*\t) --
				(\lm +0.1*\rrm -0.1*\lm,\t+\h+0.15*2*\h -0.15*\t) --
				(\lm,\t+\h+0.15*2*\h -0.15*\t) --
				(\lm,\t+\h) -- 
				(0.4*\lm,\t+\h) --
				(0.4*\lm,\r+\h);
				
				\draw[fill = white!85!black, fill opacity = 0.7]
				(\lm +0.9*\rrm -0.9*\lm,\t+\h+0.2*2*\h -0.2*\t) -- 
				(\lm +0.9*\rrm -0.9*\lm,\r+\h) -- 
				(\rrm,\r+\h) -- 
				(\rrm,\t+\h+0.2*2*\h -0.2*\t) --
				(\lm +0.9*\rrm -0.9*\lm,\t+\h+0.2*2*\h -0.2*\t);
				
				\drawVerticalItem{\lm +0.9*\rrm -0.9*\lm}{2*\h}{\rrm}{\t+\h+0.2*2*\h -0.2*\t};
				\drawVerticalItem{\rrm}{\t+\h- 0.7*\t + 0.7*\h}{\rm}{\t+\h+0.5*2*\h -0.5*\t};

				\draw[dashed] (-0.5,\h) -- (\w +0.5,\h) node[right] {$T'/2$};
				\draw[dashed] (-0.5,2*\h) -- (\w +0.5,2*\h) node[right] {$T'$};
				\draw[dashed] (-0.5,3*\h) -- (\w +0.5,3*\h) node[right] {$T' + \addT$};
				\draw[dashed] (-0.5,\t) -- (\w+0.5,\t)  node[right] {$\tau$};
				\draw[dashed] (-0.5,\t+\h) -- (\w+0.5,\t+\h)  node[right] {$\tau + \addT$};
				\draw[dashed] (-0.5,\r) -- (\w+0.5,\r) node[right] {$\rho$} ;
				\draw[dashed] (-0.5,\r+\h) -- (\w+0.5,\r+\h) node[right] {$\rho + \addT$};
				\draw[dashed](-0.5,\i +\h) -- (\w+0.5,\i+\h) node[right] {$\iota + \addT$};
				\draw[dashed](-0.5,\i) -- (\w+0.5,\i) node[right] {$\iota$};
				
				\draw[fill=white] (\lm +0.35*\rrm -0.35*\lm,2*\h) rectangle (\lm +0.75*\rrm -0.75*\lm,3*\h)node[midway] {Gap};
				\end{scope}
				\end{tikzpicture}  
			}
				\caption{Examples for the shifted schedule and the position of the gap in the Cases 1.2.1 and 1.2.2}
				\label{fig:SRCS:ShiftCase121}
			\end{figure}
			
			\begin{caseList}
				\item[$\rho \geq \iota$.]
				In this case, at each point between $\iota$ and $\iota + \addT$ there are $k$ machines unused. 
				Between $\iota$ and $T'$ there are $k$ free machines by the choice of $\tau$ and between $T'$ and $\rho +\addT$ there are $k$ free machines by the choice of $\rho$. 
				Therefore, there is a gap between $\iota$ and $\iota + \addT$, see Figure \ref{fig:SRCS:ShiftCase121}.
				
				Similar as in Case 1.1. the total number of guesses needed to place the huge jobs is bounded by $m^\Oh_{\eps}(1)$, although we have to add the guess for $\rho$.

				\item[$\rho < \iota$.] 
				Let $\jobs_{H,\rho} := \{j \in \jobs_{H}| s_j + p_j > \rho\}$ be the set of huge jobs, which are still scheduled after $\rho$. 
				It holds that $|\jobs_{H,\rho}| \leq k$. 
				As a consequence, it is possible to guess their starting positions in polynomial time. 
				Therefore, the algorithm will schedule each job in $\jobs_{H,\rho}$ as in the original simplified schedule $\OPT_{\mathrm{rounded}}$. 
				The other huge jobs, which end between $\tau$ and $\rho$, are scheduled such that they end at $\rho$, i.e., we define $\sched'(j) := \rho - \pT{j}$ for each of these huge jobs $j$. 
				Next, we shift the all the jobs $j$ with starting time $\sigma'(j) \geq \rho + \addT$ downwards such that they start as they had started before the first shift.
				As a result between $T'$ and $T' +\addT$, there are just jobs left which overlap the time from $\tau + \addT$ to $\rho + \addT$, see Figure \ref{fig:SRCS:ShiftCase121}.

				By the choice of $\rho$ and $\tau$ at each point between $\tau + \addT$ and $\rho + \addT$ there are at most $m-k$ jobs which use at most $R-\gamma R$ resource since the job $i$ was scheduled there before. 
				Since each job between $T'$ and $T' +\addT$ overlaps this area there are at least $k$ free machines and $\gamma R$ free resources in this area.
				Hence, we position the gap at $T'$.
				
				In the algorithm, we will guess $\tau$ and $\rho$ dependent on a given fractional solution for the large jobs and guess the at most $k$ jobs ending before $\tau$ and the $k$ jobs ending after $\rho$ in $\mathcal{O}(m^{2k-1})$. 
				For each of these jobs, we have to guess its starting time out of at most $|\startPoints|/2$ possibilities. 
				
			\end{caseList}
			
		\end{caseList}
		
		\item[$\jR{\jobsT{\tau}} > R-\gamma R$.]
		In this case, the gap has to start strictly after $\tau$ since at $\tau$ there is not enough free resource.
		Let $\jobs_{L,T'/2}$ be the set of large jobs intersecting the point in time $T'/2$.
		Remember that $\jobsT{\tau}$ contains huge and large jobs.
		Since $\jR{\jobsT{\tau}} > R- \gamma R$ at least one of these stets of jobs (huge or large) has to contribute a large resource requirement to $\jR{\jobsT{\tau}}$.
		In the following, we will find the gap, depending on which of both sets contributes a suitable large resource requirement.
	
		\begin{caseList}
			\item[$\jR{\jobs_{L,T'/2}} \geq 2 \gamma R$.]
			Let $\tau'\in \{s \in S| \tau \leq s \leq T'\}$ be the first point in time where $\jR{\jobs_{L,T'/2}} - \jR{\jobsLTNS{\tau'}} \geq \gamma R$. 
			Note that $\tau \leq \tau'$ since, otherwise, there would be $\gamma R$ free resources at $\tau$.
			\begin{claim}
				By this choice at each point between $\tau'$ and $\tau'+\addT$ there are at least $\gamma R$ free resources.
			\end{claim} 	
			Between $\tau'$ and $T'$ there are $\gamma R$ free resources since jobs form $\jobs_{L,T'/2}$ with a resource requirement of at least $\gamma R$ end before $\tau'$. 
			On the other hand, before the shift there was at least $\gamma R$ resource blocked by jobs from $\jobs_{L,T'/2}$ between $T'/2$ and $\tau'$ and hence after the shift there is at least $\gamma R$ free resource at any time between $T'$ and $\tau' +\addT$.
			
			Moreover, as in Case1.2, let $\rho \in \{s| \tau \leq s \leq T', s \in S\}$ be the first point in the schedule where $|\jobsTInter{\rho}| \leq k$, i.e., where at most $k$ jobs are scheduled that start before $T'/2$. 
			\begin{claim}
				By this choice at each point between $\tau$ and $\rho+ \addT$ there are at there are at least $k$ unused machines.
			\end{claim} 
			From $\tau$ to $T'$ there are $k$ unused machines, by the choice of $\tau$. 
			On the other hand, at each point in time between $T'$ and $\rho + \addT$ there where $k$ machines blocked by jobs from that started before $T'/2$ and these machines are now unused.
			
			Similar as in Cases 1.2.1 and 1.2.2, we will find the gap dependent of the relation between $\tau'$ and $\rho$.
			
			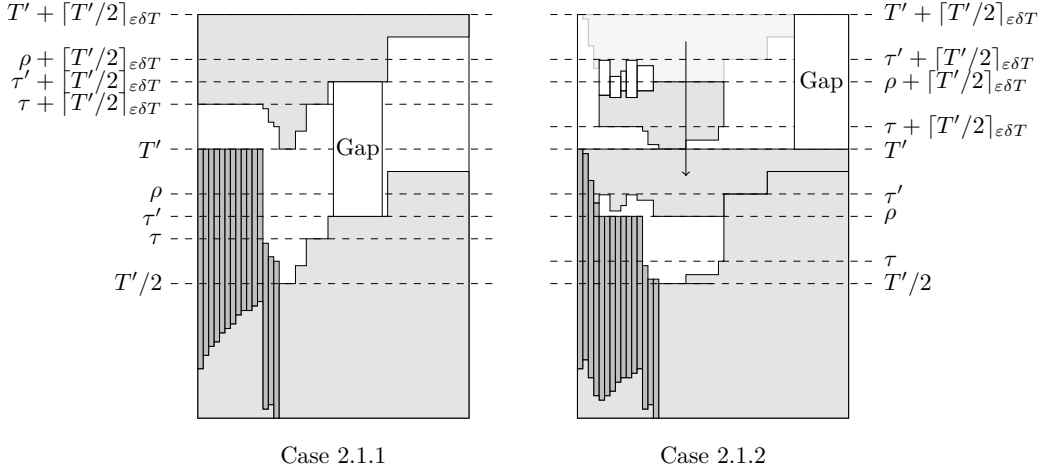
\begin{figure}[ht]
				\centering
				\resizebox{\textwidth}{!}{
				\begin{tikzpicture}		
				\pgfmathsetmacro{\h}{2}
				\pgfmathsetmacro{\w}{4}
				\pgfmathsetmacro{\t}{4*\h /3}
				\pgfmathsetmacro{\tt}{3*\h/2}
				\pgfmathsetmacro{\r}{5*\h/3}

				\draw   (0,0) rectangle (\w,3*\h);
				\node[below] at (\w/2,-2ex) {Case 2.1.1};
				\drawTallItem{0.0*\w/5}{\h -1.9*\h/3}{0.1*\w/5}{2*\h};
				\drawTallItem{0.1*\w/5}{\h -1.6*\h/3}{0.2*\w/5}{2*\h};
				\drawTallItem{0.2*\w/5}{\h -1.4*\h/3}{0.3*\w/5}{2*\h};
				\drawTallItem{0.3*\w/5}{\h -1.3*\h/3}{0.4*\w/5}{2*\h};
				\drawTallItem{0.4*\w/5}{\h -1.1*\h/3}{0.5*\w/5}{2*\h};
				\drawTallItem{0.5*\w/5}{\h -1.0*\h/3}{0.6*\w/5}{2*\h};
				\drawTallItem{0.6*\w/5}{\h -0.9*\h/3}{0.7*\w/5}{2*\h};
				\drawTallItem{0.7*\w/5}{\h -0.7*\h/3}{0.8*\w/5}{2*\h};
				\drawTallItem{0.8*\w/5}{\h -0.6*\h/3}{0.9*\w/5}{2*\h};
				\drawTallItem{0.9*\w/5}{\h -0.6*\h/3}{1.0*\w/5}{2*\h};
				\drawTallItem{1.0*\w/5}{\h -0.5*\h/3}{1.1*\w/5}{2*\h};
				\drawTallItem{1.1*\w/5}{\h -0.4*\h/3}{1.2*\w/5}{2*\h};
				\drawTallItem{1.3*\w/5}{\t -0.1*\h/3}{1.2*\w/5}{0.2*\h/3};
				\drawTallItem{1.4*\w/5}{\t -0.4*\h/3}{1.3*\w/5}{0.3*\h/3};
				\drawTallItem{1.5*\w/5}{\t -0.5*\h/3}{1.4*\w/5}{0*\h/3};
				
				\draw[fill = white!85!black, fill opacity = 0.7] 
				(1.8*\w/5,\h) --(1.8*\w/5,3.4*\h/3) -- 
				(2*\w/5,3.4*\h/3) -- (2*\w/5,\t) --
				(2.4*\w/5,\t) -- (2.4*\w/5,\tt)-- 
				(3.5*\w/5,\tt) -- (3.5*\w/5,5.5*\h/3)-- 
				(\w,5.5*\h/3) -- (\w,0) --
				(1.5*\w/5,0) -- 
				(1.5*\w/5,\h) --
				(1.8*\w/5,\h);
				
				\draw[fill = white!85!black, fill opacity = 0.7] 
				(0,0) -- (0,\h -1.9*\h/3) -- 
				(0.1*\w/5,\h -1.9*\h/3) -- (0.1*\w/5,\h -1.6*\h/3) -- 
				(0.2*\w/5,\h -1.6*\h/3) -- (0.2*\w/5,\h -1.4*\h/3) --
				(0.3*\w/5,\h -1.4*\h/3) -- (0.3*\w/5,\h -1.3*\h/3) --
				(0.4*\w/5,\h -1.3*\h/3) -- (0.4*\w/5,\h -1.1*\h/3) --
				(0.5*\w/5,\h -1.1*\h/3) -- (0.5*\w/5,\h -1.0*\h/3) --
				(0.6*\w/5,\h -1.0*\h/3) -- (0.6*\w/5,\h -0.9*\h/3) --
				(0.7*\w/5,\h -0.9*\h/3) -- (0.7*\w/5,\h -0.7*\h/3) --
				(0.8*\w/5,\h -0.7*\h/3) -- (0.8*\w/5,\h -0.6*\h/3) --
				(1.0*\w/5,\h -0.6*\h/3) -- (1.0*\w/5,\h -0.5*\h/3) --
				(1.1*\w/5,\h -0.5*\h/3) -- (1.1*\w/5,\h -0.4*\h/3) --
				(1.2*\w/5,\h -0.4*\h/3) -- (1.2*\w/5,0.2*\h/3) --
				(1.3*\w/5,0.2*\h/3)     -- (1.3*\w/5,0.3*\h/3) --
				(1.4*\w/5,0.3*\h/3)     -- (1.4*\w/5,0) --
				(0,0);
				
				\begin{scope}[yshift = \h cm]
				\draw[fill = white!85!black, fill opacity = 0.7] 
				(0,2*\h) -- (\w,2*\h) -- 
				(\w,5.5*\h/3) -- (3.5*\w/5,5.5*\h/3) --
				(3.5*\w/5,\tt) -- (2.4*\w/5,\tt) --
				(2.4*\w/5,\t) -- (2*\w/5,\t) --
				(2*\w/5,3.4*\h/3) -- (1.8*\w/5,3.4*\h/3) --
				(1.8*\w/5,\h) -- (1.5*\w/5,\h) --
				(1.5*\w/5,\t -0.5*\h/3) -- (1.4*\w/5,\t -0.5*\h/3) --
				(1.4*\w/5,\t -0.4*\h/3) -- (1.3*\w/5,\t -0.4*\h/3) --
				(1.3*\w/5,\t -0.1*\h/3) -- (1.2*\w/5,\t -0.1*\h/3) --
				(1.2*\w/5,\t) -- (0,\t) --
				(0,2*\h);
				\end{scope}
				
				\draw[dashed] (-0.5*\w/5,\h) node[left] {$T'/2$} -- (\w +0.5*\w/5,\h);
				\draw[dashed] (-0.5*\w/5,2*\h) node[left] {$T'$} -- (\w +0.5*\w/5,2*\h);
				\draw[dashed] (-0.5*\w/5,3*\h) node[left] {$T' + \addT$} -- (\w +0.5*\w/5,3*\h);

				\draw[dashed] (-0.5*\w/5,\tt) node[left] {$\tau'$} -- (\w+0.5*\w/5,\tt);
				\draw[dashed] (-0.5*\w/5,\tt+\h) node[left] {$\tau' + \addT$} -- (\w+0.5*\w/5,\tt+\h);
				
				\draw[dashed] (-0.5*\w/5,\t) node[left] {$\tau$} -- (\w+0.5*\w/5,\t);
				\draw[dashed] (-0.5*\w/5,\t+\h) node[left] {$\tau + \addT$} -- (\w+0.5*\w/5,\t+\h);
				
				\draw[dashed] (-0.5*\w/5,\r) node[left] {$\rho$} -- (\w+0.5*\w/5,\r);
				\draw[dashed] (-0.5*\w/5,\r+\h) node[left] {$\rho+ \addT$} -- (\w+0.5*\w/5,\r+\h);

				\draw[fill=white] (2.5*\w/5,\tt) rectangle (3.4*\w/5,\tt+\h)node[midway] {Gap};

				\begin{scope}[xshift = 1.4*\w cm]
					\pgfmathsetmacro{\h}{2}
				\pgfmathsetmacro{\w}{4}
				\pgfmathsetmacro{\tt}{5.0*\h/3}
				\pgfmathsetmacro{\r}{4.5*\h/3}
				\pgfmathsetmacro{\t}{3.5*\h/3}
				\pgfmathsetmacro{\ryshift}{2*\h -\r}
				\pgfmathsetmacro{\s}{1.4*\h/3}
				\pgfmathsetmacro{\ss}{0.8*\h/3}
				\pgfmathsetmacro{\sss}{0.3*\h/3}

				\pgfmathsetmacro{\rw}{0.4*\w/5}
				\pgfmathsetmacro{\rww}{0.6*\w/5}
				\pgfmathsetmacro{\rwww}{0.8*\w/5}
				\pgfmathsetmacro{\rwwww}{0.9*\w/5}
				\pgfmathsetmacro{\rwwwww}{1.1*\w/5}
				\pgfmathsetmacro{\rwwwwww}{1.4*\w/5}
				
				\pgfmathsetmacro{\rh}{\r +0.4*\w/5}
				\pgfmathsetmacro{\rhh}{\r +0.1*\w/5}
				\pgfmathsetmacro{\rhhh}{\r +0.2*\w/5}
				\pgfmathsetmacro{\rhhhh}{\r +0.4*\w/5}
				\pgfmathsetmacro{\rhhhhh}{\r +0.3*\w/5}
				
				\draw   (0,0) rectangle (\w,3*\h);
				\node[below] at (\w/2,-2ex) {Case 2.1.2};
				\begin{scope}[yshift = \h cm]
				\draw[lightgray, fill = white!95!black, fill opacity = 0.7] 
				(0.1*\w/5,2*\h) -- (0.1*\w/5, \r +\s) -- 
				(0.2*\w/5, \r +\s) -- (0.2*\w/5, \r +\ss) -- 
				(0.3*\w/5, \r +\ss) -- (0.3*\w/5, \r +\sss) -- 
				(\rw,\r +\sss) -- (\rw,\rh) -- 
				(\rww,\rh) -- (\rww,\rhh) -- 
				(\rwww,\rhh)  -- (\rwww,\rhhh)  -- 
				(\rwwww,\rhhh) -- (\rwwww,\rhhhh) -- 
				(\rwwwww,\rhhhh) -- (\rwwwww,\rhhhhh) -- 
				(\rwwwwww,\rhhhhh) -- (\rwwwwww,\r) -- 
				(2.7*\w/5,\r) -- (2.7*\w/5, \tt) -- 
				(3.5*\w/5,\tt) -- (3.5*\w/5,5.5*\h/3) -- 
				(\w,5.5*\h/3) -- (\w,2*\h) -- 
				(0.1*\w/5,2*\h);
				\end{scope}

				\drawTallItem{0.0*\w/5}{\h -1.9*\h/3}{0.1*\w/5}{2*\h};
				\drawTallItem{0.1*\w/5}{\h -1.6*\h/3 -\ryshift+\s}{0.2*\w/5}{2*\h-\ryshift+\s};
				\drawTallItem{0.2*\w/5}{\h -1.4*\h/3 -\ryshift + \ss}{0.3*\w/5}{2*\h-\ryshift + \ss};
				\drawTallItem{0.3*\w/5}{\h -1.3*\h/3-\ryshift+\sss}{0.4*\w/5}{2*\h-\ryshift+\sss};
				\drawTallItem{0.4*\w/5}{\h -1.1*\h/3-\ryshift}{0.5*\w/5}{2*\h-\ryshift};
				\drawTallItem{0.5*\w/5}{\h -1.0*\h/3-\ryshift}{0.6*\w/5}{2*\h-\ryshift};
				\drawTallItem{0.6*\w/5}{\h -0.9*\h/3-\ryshift}{0.7*\w/5}{2*\h-\ryshift};
				\drawTallItem{0.7*\w/5}{\h -0.7*\h/3-\ryshift}{0.8*\w/5}{2*\h-\ryshift};
				\drawTallItem{0.8*\w/5}{\h -0.6*\h/3-\ryshift}{0.9*\w/5}{2*\h-\ryshift};
				\drawTallItem{0.9*\w/5}{\h -0.6*\h/3-\ryshift}{1.0*\w/5}{2*\h-\ryshift};
				\drawTallItem{1.0*\w/5}{\h -0.5*\h/3-\ryshift}{1.1*\w/5}{2*\h-\ryshift};
				\drawTallItem{1.1*\w/5}{\h -0.4*\h/3-\ryshift}{1.2*\w/5}{2*\h-\ryshift};
				
				\drawTallItem{1.3*\w/5}{\t -0.1*\h/3}{1.2*\w/5}{0.2*\h/3};
				\drawTallItem{1.4*\w/5}{\t -0.4*\h/3}{1.3*\w/5}{0.3*\h/3};
				\drawTallItem{1.5*\w/5}{\t -0.4*\h/3}{1.4*\w/5}{0*\h/3};	
				
				\draw[fill = white!85!black, fill opacity = 0.7] 
				(2.0*\w/5,\h) --
				(2.0*\w/5,3.2*\h/3) -- 
				(2.6*\w/5,3.2*\h/3) -- 
				(2.6*\w/5,\t) --
				(2.7*\w/5,\t) -- 
				(2.7*\w/5,\tt) -- 
				(3.5*\w/5,\tt) -- 
				(3.5*\w/5,5.5*\h/3) -- 
				(\w,5.5*\h/3) -- 
				(\w,0) -- 
				(1.5*\w/5,0*\h/3) -- 
				(1.5*\w/5,\h) -- 
				(2,\h);	
				
				\draw[fill = white!85!black, fill opacity = 0.7] 
				(0.0*\w/5,0*\h/3) -- 
				(0.0*\w/5,\h -1.9*\h/3) --
				(0.1*\w/5,\h -1.9*\h/3) -- 
				(0.1*\w/5,\h -1.6*\h/3-\ryshift+\s) -- 
				(0.2*\w/5,\h -1.6*\h/3-\ryshift+\s) -- 
				(0.2*\w/5,\h -1.4*\h/3-\ryshift+\ss) -- 
				(0.3*\w/5,\h -1.4*\h/3-\ryshift+\ss) -- 
				(0.3*\w/5,\h -1.3*\h/3-\ryshift+\sss) -- 
				(0.4*\w/5,\h -1.3*\h/3-\ryshift+\sss) -- 
				(0.4*\w/5,\h -1.1*\h/3-\ryshift) -- 
				(0.5*\w/5,\h -1.1*\h/3-\ryshift) -- 
				(0.5*\w/5,\h -1.0*\h/3-\ryshift) -- 
				(0.6*\w/5,\h -1.0*\h/3-\ryshift) -- 
				(0.6*\w/5,\h -0.9*\h/3-\ryshift) -- 
				(0.7*\w/5,\h -0.9*\h/3-\ryshift) -- 
				(0.7*\w/5,\h -0.7*\h/3-\ryshift) -- 
				(0.8*\w/5,\h -0.7*\h/3-\ryshift) -- 
				(0.8*\w/5,\h -0.6*\h/3-\ryshift) -- 
				(1.0*\w/5,\h -0.6*\h/3-\ryshift) -- 
				(1.0*\w/5,\h -0.5*\h/3-\ryshift) -- 
				(1.1*\w/5,\h -0.5*\h/3-\ryshift) -- 
				(1.1*\w/5,\h -0.4*\h/3-\ryshift) -- 
				(1.2*\w/5,\h -0.4*\h/3-\ryshift) -- 
				(1.2*\w/5,0.2*\h/3) -- 
				(1.3*\w/5,0.2*\h/3) -- 
				(1.3*\w/5,0.1*\h) --
				(1.4*\w/5,0.1*\h) -- 
				(1.4*\w/5,0*\h/3) -- 
				(0,0);
				
				\draw[fill = white!85!black, fill opacity = 0.7] 
				(1.2*\w/5,\t -0.1*\h/3 +\h) --
				(1.3*\w/5,\t -0.1*\h/3+\h) --
				(1.3*\w/5,\t -0.4*\h/3+\h) --
				(1.5*\w/5,\t -0.4*\h/3+\h) --
				(1.5*\w/5,\h+\h) --
				(2.0*\w/5,\h+\h) --
				(2.0*\w/5,3.2*\h/3+\h) --
				(2.6*\w/5,3.2*\h/3+\h) --
				(2.6*\w/5,\t+\h) --
				(2.7*\w/5,\t+\h) -- 
				(2.7*\w/5,\r+\h) --
				(\rwwwwww,\r+\h) --
				(\rwwwwww,\r +\h-0.2*\h/3) --
				(\rwwwww,\r +\h-0.2*\h/3) --
				(\rwwwww,\r +\h-0.35*\h/3) --
				(\rwwww,\r +\h-0.35*\h/3) --
				(\rwwww,\r +\h-0.2*\h/3) --
				(\rwww,\r +\h-0.2*\h/3) -- 
				(\rwww,\r +\h-0.35*\h/3) --
				(\rww,\r +\h-0.35*\h/3) --
				(\rww,\r +\h-0.3*\h/3) --
				(\rw,\r +\h-0.3*\h/3) --
				(\rw,\t+\h) --
				(1.2*\w/5,\t+\h) --
				(1.2*\w/5,\t -0.1*\h/3 +\h);
				
				\draw (\rw,\r +\h-0.3*\h/3) rectangle (\rww,\rh + \h);
				\draw (\rww,\r +\h-0.35*\h/3) rectangle (\rwww,\rhh + \h);
				\draw (\rwww,\r +\h-0.2*\h/3) rectangle (\rwwww,\rhhh + \h);
				\draw (\rwwww,\r +\h-0.35*\h/3) rectangle (\rwwwww,\rhhhh + \h);
				\draw (\rwwwww,\r +\h-0.2*\h/3) rectangle (\rwwwwww,\rhhhhh + \h);

				\draw[fill = white!85!black, fill opacity = 0.7] 
				(0.1*\w/5,2*\h) -- (0.1*\w/5, \r +\s) -- 
				(0.2*\w/5, \r +\s) -- (0.2*\w/5, \r +\ss) -- 
				(0.3*\w/5, \r +\ss) -- (0.3*\w/5, \r +\sss) -- 
				(\rw,\r +\sss) -- (\rw,\rh) -- 
				(\rww,\rh) -- (\rww,\rhh) -- 
				(\rwww,\rhh)  -- (\rwww,\rhhh)  -- 
				(\rwwww,\rhhh) -- (\rwwww,\rhhhh) -- 
				(\rwwwww,\rhhhh) -- (\rwwwww,\rhhhhh) -- 
				(\rwwwwww,\rhhhhh) -- (\rwwwwww,\r) -- 
				(2.7*\w/5,\r) -- (2.7*\w/5, \tt) -- 
				(3.5*\w/5,\tt) -- (3.5*\w/5,5.5*\h/3) -- 
				(\w,5.5*\h/3) -- (\w,2*\h) -- (0.1*\w/5,2*\h);
				
				\draw[dashed] (-0.5*\w/5,1*\h) -- (\w +0.5*\w/5,\h) node[right] {$T'/2$};
				\draw[dashed] (-0.5*\w/5,2*\h) -- (\w +0.5*\w/5,2*\h) node[right] {$T'$};
				\draw[dashed] (-0.5*\w/5,3*\h) -- (\w +0.5*\w/5,3*\h) node[right] {$T' +\addT$}; 		
				\draw[dashed] (-0.5*\w/5,\tt) -- (\w+0.5*\w/5,\tt) node[right] {$\tau'$};
				\draw[dashed] (-0.5*\w/5,\r) -- (\w+0.5*\w/5,\r) node[right] {$\rho$};
				\draw[dashed] (-0.5*\w/5,\t) -- (\w+0.5*\w/5,\t) node[right] {$\tau$};
				\draw[dashed] (-0.5*\w/5,\tt+\h) -- (\w+0.5*\w/5,\tt+\h) node[right] {$\tau' + \addT$};
				\draw[dashed] (-0.5*\w/5,\r+\h) -- (\w+0.5*\w/5,\r+\h) node[right] {$\rho + \addT$};
				\draw[dashed] (-0.5*\w/5,\t+\h) -- (\w+0.5*\w/5,\t+\h) node[right] {$\tau+\addT$};

				\draw[fill=white] (4.0*\w/5,2*\h) rectangle (\w,3*\h)node[midway] {Gap};
				
				\draw[->] (2.0*\w/5, 2.8 * \h) -- (2.0*\w/5, 1.8 * \h);
				\end{scope}
				
				\end{tikzpicture} 
			}
				
				\caption{Examples for the shifted schedules and the position of the gap in Cases 2.1.1 and Case 2.1.2}
				\label{fig:SRCS:ShiftCase211}
			\end{figure}
			
			\begin{caseList}
				\item[$\rho \geq \tau'$.]	
				In this case between $\tau'$ and $\tau' + \addT$ there are at least $k$ unused machines. 
				Therefore, we have a gap between these two points, which is large enough, see Figure \ref{fig:SRCS:ShiftCase211}. 
				In the algorithm, we have to guess the $k$ huge jobs, which end before $\tau$ and their start point, as well as the points $\tau$, $\tau'$ and $\rho$. All the guesses for this case can be iterated in polynomial time. 
				
				\item[$\rho < \tau'$.]
				In this case, we act like in Case 1.2.2 and shift all huge jobs, but the at most $k$ jobs ending after $\rho$, downwards such that they end at $\rho$, see Figure \ref{fig:SRCS:ShiftCase211}. 
				Furthermore, we shift all jobs starting after $\rho +\addT$ back downwards such that they again start at their primary start position. 
				Now after $T'$ there are just jobs having their start or end position between $\tau + \addT$ and $\rho + \addT$. 
				At each point between these two points there are at least $k$ unused machines and $\gamma R$ unused resource with the same arguments as in Case 1.2.2. 
				Hence, we have a gap with the right properties between $T'$ and $T' + \addT$. 
				All the guesses for this case can be iterated in polynomial time. 
				
			\end{caseList}
			\item[$\jR{\jobs_{L,T'/2}} < 2 \gamma R$.]
			Since we have $\jR{\jobsT{\tau}} > R- \gamma R$ (by Case 2.) and it holds that $(\jobs_H \cap \jobsT{\tau}) \cup (\jobs_{L,T'/2}\cap \jobsT{\tau})= \jobsT{\tau}$ we get that $\jR{\jobs_H \cap \jobsT{\tau}} \geq R-3 \gamma R$.
			
			Similar as before, let $\rho \in \{s| \tau \leq s \leq T , s \in S\}$ be the first point in the schedule where less than $k$ jobs are scheduled that start before $T'/2$. 
			By the same argument as in Case 2.1, we know that at every point between $\tau$ and $\rho + \addT$ there are at least $k$ unused machines in the shifted schedule.
			
			\begin{caseList}
				\item[$\jR{\jobsT{\rho}} \geq \gamma R$.] 
				In this case, we can construct a schedule in the same way as in case 1.2.2 or 2.1.2 by shifting down the jobs that start after $\rho +\addT$ and positioning the gap at $T'$, see Figure \ref{fig:SRCS:ShiftCase221}.
				This is possible because the jobs that are scheduled between $\tau + \addT$ and $\rho+\addT$ can use at most $R- \gamma R$ resources in this case since $\jR{\jobsT{\rho}} \geq \gamma R$ and hence at least $ \gamma R$ resources are blocked by the jobs in $\jobsT{\rho}$.
				All the guesses for this case can be iterated in polynomial time. 
				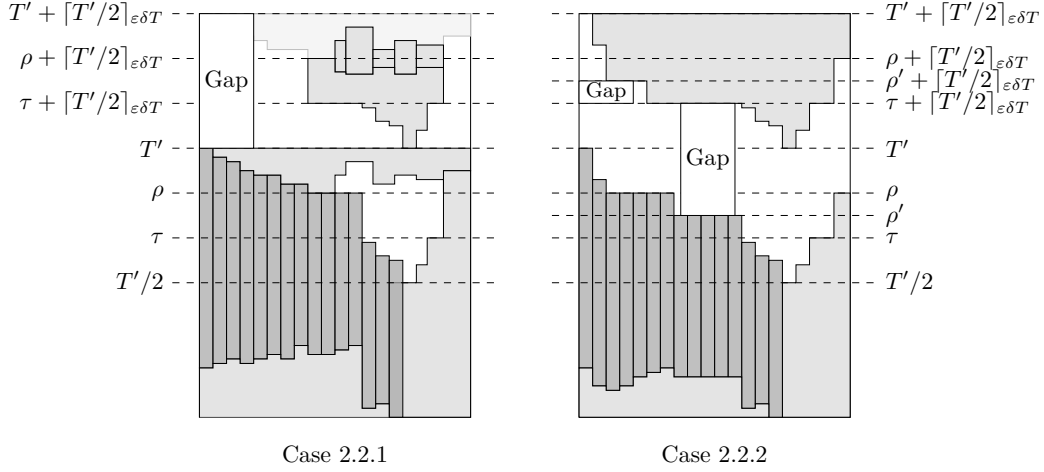
\begin{figure}[ht]
					\centering
					\resizebox{\textwidth}{!}{
					\begin{tikzpicture}
					
					\pgfmathsetmacro{\h}{2}
					\pgfmathsetmacro{\w}{4}
					\pgfmathsetmacro{\t}{4*\h/3}
					\pgfmathsetmacro{\r}{\t + 1.0*\h/3}
					\pgfmathsetmacro{\ryshift}{-2*\h +\r}

					\pgfmathsetmacro{\rw}{2.5*\w/5}
					\pgfmathsetmacro{\rww}{2.7*\w/5}
					\pgfmathsetmacro{\rwww}{3.2*\w/5}
					\pgfmathsetmacro{\rwwww}{3.6*\w/5}
					\pgfmathsetmacro{\rwwwww}{4.0*\w/5}
					\pgfmathsetmacro{\rwwwwww}{4.5*\w/5}
					
					\pgfmathsetmacro{\rh}{\r +0.4*\h/3}
					\pgfmathsetmacro{\rhh}{\r +0.7*\h/3}
					\pgfmathsetmacro{\rhhh}{\r +0.2*\h/3}
					\pgfmathsetmacro{\rhhhh}{\r +0.4*\h/3}
					\pgfmathsetmacro{\rhhhhh}{\r +0.3*\h/3}
					
					\draw(0,0) rectangle (\w,3*\h);
					\node[below] at (\w/2,-2ex) {Case 2.2.1};
					\begin{scope}[yshift = \h cm]
					\draw[lightgray, fill = white!95!black, fill opacity = 0.7]
					(0.05*\w, 2*\h) -- 
					(0.05*\w, 5.8*\h/3) -- 
					(0.10*\w, 5.8*\h/3)  -- 
					(0.10*\w, 5.7*\h/3)  -- 
					(0.15*\w, 5.7*\h/3) -- 
					(0.15*\w, 5.5*\h/3)  -- 
					(0.20*\w, 5.5*\h/3) -- 
					(0.20*\w, 5.4*\h/3) -- 
					(0.25*\w, 5.4*\h/3) -- 
					(0.25*\w, 5.2*\h/3) -- 
					(0.40*\w, 5.2*\h/3) -- 
					(0.40*\w,\r) -- 
					(\rw,\r) -- 
					(\rw,\r) -- 
					(\rw,\rh) -- 
					(\rww,\rh) -- 
					(\rww,\rhh) -- 
					(\rwww,\rhh)  -- 
					(\rwww,\rhhh)  -- 
					(\rwwww,\rhhh) -- 
					(\rwwww,\rhhhh) -- 
					(\rwwwww,\rhhhh) -- 
					(\rwwwww,\rhhhhh) -- 
					(\rwwwwww,\rhhhhh) -- 
					(\rwwwwww,\r) --
					(0.9*\w,\r) --
					(0.9*\w,5.5*\h/3) -- 
					(\w,5.5*\h/3) -- 
					(\w,2*\h) -- 
					(0.1*\w, 2*\h);
					\end{scope}
					
					\draw[fill = white!85!black, fill opacity = 0.7] 
					(0.8*\w,\h)  --
					(0.8*\w,3.4*\h/3) -- 
					(4.2*\w/5,3.4*\h/3) -- 
					(4.2*\w/5,\t) --
					(0.9*\w,\t) -- 
					(0.9*\w,5.5*\h/3)-- 
					(\w,5.5*\h/3) --
					(\w,0) -- 
					(0.70*\w,0) --
					(0.70*\w,\h) --
					(0.8*\w,\h);

					\drawTallItem{0.00*\w}{1.1*\h/3}{0.05*\w}{2*\h};
					\drawTallItem{0.05*\w}{1.2*\h/3}{0.10*\w}{5.8*\h/3};
					\drawTallItem{0.10*\w}{1.3*\h/3}{0.15*\w}{5.7*\h/3};
					\drawTallItem{0.15*\w}{1.2*\h/3}{0.20*\w}{5.5*\h/3};
					\drawTallItem{0.20*\w}{1.3*\h/3}{0.25*\w}{5.4*\h/3};
					\drawTallItem{0.25*\w}{1.4*\h/3}{0.30*\w}{5.4*\h/3};
					\drawTallItem{0.30*\w}{1.3*\h/3}{0.35*\w}{5.2*\h/3};
					\drawTallItem{0.35*\w}{1.6*\h/3}{0.40*\w}{5.2*\h/3};
					\drawTallItem{0.40*\w}{2.4*\h/3+\ryshift}{0.45*\w}{2*\h+\ryshift};
					\drawTallItem{0.45*\w}{2.4*\h/3+\ryshift}{0.50*\w}{2*\h+\ryshift};
					\drawTallItem{0.50*\w}{2.5*\h/3+\ryshift}{0.55*\w}{2*\h+\ryshift};
					\drawTallItem{0.55*\w}{2.6*\h/3+\ryshift}{0.60*\w}{2*\h+\ryshift};
					\drawTallItem{0.65*\w}{\t -0.1*\h/3}{0.60*\w}{0.2*\h/3};
					\drawTallItem{0.70*\w}{\t -0.4*\h/3}{0.65*\w}{0.1*\h};
					\drawTallItem{0.75*\w}{\t -0.5*\h/3}{0.70*\w}{0*\h/3};

					\draw[fill = white!85!black, fill opacity = 0.7]
					(0.00*\w,0.0*\h/3) --
					(0.00*\w,1.1*\h/3) --
					(0.05*\w,1.1*\h/3) --
					(0.05*\w,1.2*\h/3) --
					(0.10*\w,1.2*\h/3) --
					(0.10*\w,1.3*\h/3) --
					(0.15*\w,1.3*\h/3) --
					(0.15*\w,1.2*\h/3) --
					(0.20*\w,1.2*\h/3) --
					(0.20*\w,1.3*\h/3) --
					(0.25*\w,1.3*\h/3) --
					(0.25*\w,1.4*\h/3) --
					(0.30*\w,1.4*\h/3) --
					(0.30*\w,1.3*\h/3) --
					(0.35*\w,1.3*\h/3) --
					(0.35*\w,1.6*\h/3) --
					(0.40*\w,1.6*\h/3) --
					(0.40*\w,2.4*\h/3+\ryshift) --
					(0.45*\w,2.4*\h/3+\ryshift) --
					(0.45*\w,2.4*\h/3+\ryshift) --
					(0.50*\w,2.4*\h/3+\ryshift) --
					(0.50*\w,2.5*\h/3+\ryshift) --
					(0.55*\w,2.5*\h/3+\ryshift) --
					(0.55*\w,2.6*\h/3+\ryshift) --
					(0.60*\w,2.6*\h/3+\ryshift) --
					(0.60*\w,0.2*\h/3) --
					(0.65*\w,0.2*\h/3) --
					(0.65*\w,0.1*\h) --
					(0.70*\w,0.1*\h) --
					(0.70*\w,0);	
					
					\draw[fill = white!85!black, fill opacity = 0.7]
					(0.05*\w, 2*\h) -- 
					(0.05*\w, 5.8*\h/3) -- 
					(0.10*\w, 5.8*\h/3)  -- 
					(0.10*\w, 5.7*\h/3)  -- 
					(0.15*\w, 5.7*\h/3) -- 
					(0.15*\w, 5.5*\h/3)  -- 
					(0.20*\w, 5.5*\h/3) -- 
					(0.20*\w, 5.4*\h/3) -- 
					(0.30*\w, 5.4*\h/3) -- 
					(0.30*\w, 5.2*\h/3) -- 
					(0.40*\w, 5.2*\h/3) -- 
					(0.40*\w,\r) -- 
					(0.50*\w,\r) --
					(\rw,\r) -- 
					(\rw,\rh) -- 
					(\rww,\rh) -- 
					(\rww,\rhh) -- 
					(\rwww,\rhh) -- 
					(\rwww,\rhhh) -- 
					(\rwwww,\rhhh) -- 
					(\rwwww,\rhhhh) -- 
					(\rwwwww,\rhhhh) -- 
					(\rwwwww,\rhhhhh) -- 
					(\rwwwwww,\rhhhhh) -- 
					(\rwwwwww,\r) -- 
					(0.9*\w,\r) --
					(0.9*\w,5.5*\h/3) -- 
					(\w,5.5*\h/3) -- 
					(\w,2*\h) -- 
					(0.05*\w, 2*\h);
					
					\drawVerticalItem{\rw}{\r +\h-0.1*\h}{\rww}{\rh + \h};
					\drawVerticalItem{\rww}{\r +\h-0.35*\h/3}{\rwww}{\rhh + \h};
					\drawVerticalItem{\rwww}{\r +\h-0.2*\h/3}{\rwwww}{\rhhh + \h};
					\drawVerticalItem{\rwwww}{\r +\h-0.35*\h/3}{\rwwwww}{\rhhhh + \h};
					\drawVerticalItem{\rwwwww}{\r +\h-0.2*\h/3}{\rwwwwww}{\rhhhhh + \h};
					
					\draw[fill = white!85!black, fill opacity = 0.7]
					(0.60*\w,\t -0.1*\h/3 + \h) --
					(0.65*\w,\t -0.1*\h/3 + \h) --
					(0.65*\w,\t -0.4*\h/3 + \h) --
					(0.70*\w,\t -0.4*\h/3 + \h) --
					(0.70*\w,\t -0.5*\h/3 + \h) --
					(0.75*\w,\t -0.5*\h/3 + \h) --
					(0.75*\w,\h + \h) --
					(0.8*\w,\h + \h) --
					(0.8*\w,3.4*\h/3 + \h) --
					(4.2*\w/5,3.4*\h/3 + \h) --
					(4.2*\w/5,\t + \h) --
					(0.9*\w,\t + \h) --
					(0.9*\w,\r +2.8*\h/3) --
					(\rwwwww,\r +2.8*\h/3) --
					(\rwwwww,\r +2.65*\h/3) --
					(\rwwww,\r +2.65*\h/3) --
					(\rwwww,\r +2.8*\h/3) --
					(\rwww,\r +2.8*\h/3) --
					(\rwww,\r +2.65*\h/3) --
					(\rww,\r +2.65*\h/3) --
					(\rww,\r +0.9*\h) --
					(\rw,\r +0.9*\h) --
					(\rw,\r+\h) --
					(0.40*\w,\r+\h) --
					(0.40*\w,\t+\h) --
					(0.60*\w,\t+\h) --
					(0.60*\w,\t -0.1*\h/3 + \h);

					\draw[dashed] (-0.1*\w,\h) node[left] {$T'/2$} -- (1.1*\w,\h);
					\draw[dashed] (-0.1*\w,2*\h) node[left] {$T'$} -- (1.1*\w,2*\h);
					\draw[dashed] (-0.1*\w,3*\h) node[left] {$T' +\addT$} -- (1.1*\w,3*\h);
					
					\draw[dashed] (-0.1*\w,\t) node[left] {$\tau$} -- (1.1*\w,\t);
					\draw[dashed] (-0.1*\w,\t+\h) node[left] {$\tau +\addT$} -- (1.1*\w,\t+\h);
					
					\draw[dashed] (-0.1*\w,\r) node[left] {$\rho$} -- (1.1*\w,\r);
					\draw[dashed] (-0.1*\w,\r+\h) node[left] {$\rho +\addT$} -- (1.1*\w,\r+\h);
					
					\draw[fill=white] (0.0,2*\h) rectangle (\w/5,3*\h) node[midway] {Gap};
					
					\begin{scope}[xshift = 1.4*\w cm]
						
					\pgfmathsetmacro{\h}{2}
					\pgfmathsetmacro{\w}{4}
					\pgfmathsetmacro{\t}{4*\h/3}
					\pgfmathsetmacro{\r}{\t + 1.0*\h/3}
					\pgfmathsetmacro{\ryshift}{-2*\h +\r}
					\pgfmathsetmacro{\g}{(\t +\r)/2}
					
					\draw(0,0) rectangle (\w,3*\h);
					\node[below] at (\w/2,-2ex) {Case 2.2.2};
					\draw[fill = white!85!black, fill opacity = 0.7] 
					(0.8*\w,\h) --
					(0.8*\w,3.4*\h/3) -- 
					(0.85*\w,3.4*\h/3) -- 
					(0.85*\w,\t) --
					(0.94*\w,\t) -- 
					(0.94*\w,\r) -- 
					(\w,\r) --
					(\w,0) --
					(0.75*\w,0) --
					(0.75*\w,\h) --
					(0.8*\w,\h);
					
					\drawTallItem{0.00*\w}{1.1*\h/3}{0.05*\w}{2*\h};
					\drawTallItem{0.05*\w}{0.7*\h/3}{0.10*\w}{5.3*\h/3};
					\drawTallItem{0.10*\w}{1.6*\h/3+\ryshift}{0.15*\w}{2*\h+\ryshift};
					\drawTallItem{0.15*\w}{1.7*\h/3+\ryshift}{0.20*\w}{2*\h+\ryshift};
					\drawTallItem{0.20*\w}{1.9*\h/3+\ryshift}{0.25*\w}{2*\h+\ryshift};
					\drawTallItem{0.25*\w}{2.0*\h/3+\ryshift}{0.30*\w}{2*\h+\ryshift};
					\drawTallItem{0.30*\w}{2.1*\h/3+\ryshift}{0.35*\w}{2*\h+\ryshift};
					\drawTallItem{0.35*\w}{\g -3.6*\h/3}{0.40*\w}{\g};
					\drawTallItem{0.40*\w}{\g -3.6*\h/3}{0.45*\w}{\g};
					\drawTallItem{0.45*\w}{\g -3.6*\h/3}{0.50*\w}{\g};
					\drawTallItem{0.50*\w}{\g -3.6*\h/3}{0.55*\w}{\g};
					\drawTallItem{0.55*\w}{\g -3.6*\h/3}{0.60*\w}{\g};
					\drawTallItem{0.60*\w}{\t -0.1*\h/3}{0.65*\w}{0.2*\h/3};
					\drawTallItem{0.65*\w}{\t -0.4*\h/3}{0.70*\w}{0.3*\h/3};
					\drawTallItem{0.70*\w}{\t -0.5*\h/3}{0.75*\w}{0};
					
					\draw[fill = white!85!black, fill opacity = 0.7] 
					(0.75*\w,2*\h) --
					(0.8*\w,2*\h) --
					(0.8*\w,3.4*\h/3+\h) -- 
					(0.85*\w,3.4*\h/3+\h) -- 
					(0.85*\w,\t+\h) --
					(0.94*\w,\t+\h) -- 
					(0.94*\w,\r+\h) -- 
					(\w,\r+\h) --
					(\w,3*\h) --
					(0.05*\w,3*\h) --
					(0.05*\w,5.3*\h/3+\h) --
					(0.10*\w,5.3*\h/3+\h) --
					(0.10*\w,\g+\h) --
					(0.25*\w,\g+\h) --
					(0.25*\w,\t+\h) --
					(0.60*\w,\t+\h) --
					(0.60*\w,\t -0.1*\h/3+\h) --
					(0.65*\w,\t -0.1*\h/3+\h) --
					(0.65*\w,\t -0.4*\h/3+\h) --
					(0.70*\w,\t -0.4*\h/3+\h) --
					(0.70*\w,\t -0.5*\h/3+\h) --
					(0.75*\w,\t -0.5*\h/3+\h) --
					(0.75*\w,\h+\h);
					
					\draw[fill = white!85!black, fill opacity = 0.7] 
					(0,0) --
					(0,1.1*\h/3) --
					(0.05*\w,1.1*\h/3) --
					(0.05*\w,0.7*\h/3) --
					(0.10*\w,0.7*\h/3) --
					(0.10*\w,1.6*\h/3+\ryshift) --
					(0.15*\w,1.6*\h/3+\ryshift) --
					(0.15*\w,1.7*\h/3+\ryshift) --
					(0.20*\w,1.7*\h/3+\ryshift) --
					(0.20*\w,1.9*\h/3+\ryshift) --
					(0.25*\w,1.9*\h/3+\ryshift) --
					(0.25*\w,2.0*\h/3+\ryshift) --
					(0.30*\w,2.0*\h/3+\ryshift) --
					(0.30*\w,2.1*\h/3+\ryshift) --
					(0.35*\w,2.1*\h/3+\ryshift) --
					(0.35*\w,\g -3.6*\h/3) --
					(0.60*\w,\g -3.6*\h/3) --
					(0.60*\w,0.2*\h/3) --
					(0.65*\w,0.2*\h/3) --
					(0.65*\w,0.3*\h/3) --
					(0.70*\w,0.3*\h/3) --
					(0.70*\w, 0) --
					(0,0);

					\draw[dashed] (-0.1*\w,\h) -- (1.1*\w,\h) node[right] {$T'/2$};
					\draw[dashed] (-0.1*\w,2*\h) -- (1.1*\w ,2*\h) node[right] {$T'$};
					\draw[dashed] (-0.1*\w,3*\h) -- (1.1*\w,3*\h) node[right] {$T' + \addT$};
					
					\draw[dashed] (-0.1*\w,\t) -- (1.1*\w,\t) node[right] {$\tau$};
					\draw[dashed] (-0.1*\w,\t+\h) -- (1.1*\w,\t+\h) node[right] {$\tau + \addT$};
					
					\draw[dashed] (-0.1*\w,\r) -- (1.1*\w,\r) node[right] {$\rho$};
					\draw[dashed] (-0.1*\w,\r+\h) -- (1.1*\w,\r+\h) node[right] {$\rho + \addT$};
					
					\draw[dashed] (-0.1*\w,\g) -- (1.1*\w,\g) node[right] {$\rho'$};
					\draw[dashed] (-0.1*\w,\g+\h) -- (1.1*\w,\g+\h) node[right] {$\rho' + \addT$};

					\draw[fill=white] (0.75*\w/2,\g) rectangle (1.15*\w/2,\h + \t)node[midway] {Gap};
					\draw[fill=white] (0.0*\w/2,\h+\t) rectangle (0.4*\w/2,\h + \g)node[midway] {\small{Gap}};
					\end{scope}
					\end{tikzpicture}  
				}
					\caption{The shifted schedule and the position of the gap in Cases 2.2.1 and 2.2.2. 
					Note that in Case 2.2.2 the gap is not displayed continuously. However, by swapping the used resource, we can make it continuous. 
					We only need the fact that at each point in time there are enough free resources and machines.
				}
					\label{fig:SRCS:ShiftCase221}
				\end{figure}
				
				\item[$\jR{\jobsT{\rho}} < \gamma R$.]
				
				Let $\rho' \in \{i\delta^2| \tau/\delta^2 \leq i \leq \rho/\delta^2 , i \in \mathbb{N}\}$ be the smallest value, where $\jR{\jobsT{\rho'}} \leq \gamma R$. 
				Remember, we had $\jR{\jobs_H \cap \jobsT{\tau}} \geq R-3 \gamma R$ so huge jobs with summed resource requirement of at least $R-4 \gamma R$ are finished till $\rho'$.
				We partition the huge jobs that finish between $\tau$ and $\rho$ by their processing time. 
				Since each job has a processing time of at least $\addT$, we get at most $\Oh(1/\eps\delta) \leq |\startPoints|/2$ sets. 
				As seen in Section \ref{sec:SRCS:APTAS:largeJobs}, we have to discard at most $k \leq 3|\startPoints|$ large jobs, which have to be placed later on. 
				
				\begin{claim}
					There exists a set in the partition, which uses at least $3 \gamma R$ resource total.
				\end{claim}
				\begin{proofClaim}
				Since $\gamma \leq 1/(2|\startPoints|) \leq 1/(3|\startPoints|/2 +4)$ it holds that
				\[\frac{R-4 \gamma R}{|\startPoints|/2} \geq \frac{(1-4/(3|\startPoints|/2+4))R}{|\startPoints|/2} = 3 R/(3|\startPoints|/2 +4) \geq 3\gamma R.\]
				Therefore, by the pigeon principle, there must be one set in the partition, which has summed resource requirement of at least $3 \gamma R$. 
				\end{proofClaim}
			
				We sort the jobs in this partition by non increasing order of resource requirement. 
				We greedily take jobs from this set, till they have a summed resource requirement of at least $\gamma R$ and schedule them such that they end before $\rho'$. 
				If there was a job with more than $\gamma R$ resource requirement, it had to be finished before $\rho'$ since the resource requirement of huge jobs finishing after $\rho'$ is smaller than $\gamma R$ and we only chose it. 
				Otherwise, the greedily chosen jobs have summed resource requirement of at $2 \gamma R$. 
				
				Since the considered set has a summed resource requirement of at least $3 \gamma R$, jobs of this set with summed resource requirement at least $2 \gamma R$ end before $\rho'$. 
				Therefore, we do not violate any constraint by shifting down these jobs such that they end at $\rho'$, see Figure \ref{fig:SRCS:ShiftCase221}.
	
			\end{caseList}
		\end{caseList}
	\end{caseList}
	
	Concerning property three, note that since we only use the free area (machines and resources) to schedule the $k$ large jobs inside the gap, there is a layer $s'$ in the shifted schedule, for each layer $s \in \startPoints$ that has at least as many machines and resources not used by large and huge jobs as the layer $s$.
\end{proof}	

\subsection{Algorithm}
\label{sec:SRCS:Absolute:Summary}
The algorithm works similar to the algorithm in Section \ref{sec:SRCS:APTAS:Summary}.
The only step that differs is Step 4.
In the following, we describe the altered step. 

Given a value $T' := i \eps' T$, we determine the set $\startPoints$ and call the algorithm from Lemma \ref{lma:schedulingLargeJobs} with $\gamma = 1/(3|\startPoints|+4)$ to generate the set of schedules for the large jobs.
One of these schedules uses in each layer at most as many machines and resources for large jobs, as the rounded optimal schedule, or the value $T'$ is to small.
Furthermore, the set of not scheduled large jobs $\jobs'$ has a total machine requirement of at most $3|\startPoints|$, a total resource requirement of at most $\gamma R$, and each job has a processing time of at most $T'/2$. 

For each of these schedules, the algorithm iterates all values for $\tau$ and $\rho$ and all possibilities for the at most $2k$ huge jobs ending before or after these values and their starting positions. 
Then, we identify the case and the other variables dependent on the guesses and the solution schedule for the large jobs.
By this we generate a new set of schedules, for which it will try to place the small jobs.

In detail, the algorithm performs the following steps:
The set $S_{H,L}$ of schedules for huge and large jobs is initialized as the empty set.
For each of the schedules for large jobs, it iterates all the possible values for $\tau$ as well as all choices for the at most $k-1$ huge jobs ending before $ \tau$ and their starting positions ($\Oh((m/\eps^2)^{k})$ possibilities).
For each of these guesses, the algorithm identifies the applying case:
\begin{enumerate}[align=left]
	\item[Case 1.1] It schedules the removed jobs at $\tau$, shift the jobs as described, and adds the schedule to $S_{H,L}$.
	\item[Case 1.2] It finds $i$ and for each possible value of $\rho$ it identifies the resulting case:
	\begin{enumerate}[align=left]
		\item[Case 1.2.1.] It shifts the jobs as described, schedules the set $\jobs'$ at $\iota$, and adds the schedule to $S_{H,L}$.
		\item[Case 1.2.2.] For each possible choice of at most $k$ huge jobs ending after $\rho$ and their starting positions, it tries to shift the jobs as described, schedule the removed jobs at $T'$, and, if successful, adds the schedule to $S_{H,L}$.
	\end{enumerate}
	\item[Case 2.] It iterates all possible values for $\rho$ and all choices for the at most $k$ huge jobs ending after $\rho$ as well as their starting positions. For each of these guesses, it identifies the corresponding case:
	\begin{enumerate}[align=left]
		\item[Case 2.1.] It searches $\tau'$, schedules the jobs as described in Case 2.1.1. or Case 2.1.2. depending on the relation of $\tau'$  and $\rho$, and adds the schedule to $S_{H,L}$.
		\item[Case 2.2.1.] It schedules the jobs as described and adds the schedule to $S_{H,L}$.
		\item[Case 2.2.2.] It iterates all possible values for $\rho'$, tries to schedule the jobs as described, and, if successful, adds the schedule to $S_{H,L}$.
	\end{enumerate}		
\end{enumerate}
Remember that one, say $S_L$, of the generated schedules for large jobs has a profile of machines and resources used by large jobs that is dominated by the profile of machines and resources used by large jobs in a rounded optimal solution, i.e., the considered rounded optimal solution uses in each layer at least as many resources and machines by large jobs as the schedule $S_L$.
For this schedule $S_L$, the algorithm iterates all the possible cases to schedule the huge jobs as described in the proof of Lemma \ref{lma:hugeJobs}.
Hence there exists an injection from the layers of the rounded optimal solution to the layers of one of the generated schedules for huge and large jobs, such that in each hit layer the total number of machines and resources used by huge and large jobs is smaller than the one of the mapped layer.
Therefore by Lemma \ref{lma:schedulingSmallJobs}, if $T'$ is at least as large as the makespan of a rounded optimal schedule, we can find a schedule of the small jobs inside the $(1+\eps)$-scaled layers and an extra box with processing time $\Oh(\eps)T$ for at least one of the generated schedules of the large and huge jobs.

For each of the resulting schedules $S_{H,L}$ of huge and large jobs, the algorithm computes the residual machines and resources per layer and calls the algorithm from Lemma \ref{lma:schedulingSmallJobs}.
If the algorithm can place all the small jobs, the schedule is saved and the next smaller value for $T'$ is considered.
If for non of the generated schedules $S_{H,L}$ a schedule of the small jobs is found by the algorithm, it considers the next larger value for $T'$.

To bound the total number of guesses that we add by this procedure, note that we have to guess $\tau$, $\rho$ and $\rho'$ from at most $\Oh(|\startPoints|)$ possibilities. 
Further, we for each of these guesses, the algorithm guesses at most $2k$ huge jobs and their starting positions. 
The total number of these guesses is bounded by $(m/\eps^2)^{\Oh(k)}$, since the huge jobs start at multiples of $\eps^2 T$.
Therefore, the total number of guesses for the large jobs is bounded by $(m/\eps^2)^{\Oh(k)} \cdot \Oh(|\startPoints|^3)$.
Since $k \leq 3 |\startPoints|$, this guess for the huge jobs lengthens the running time by a factor of at most $(m/\eps)^{1/\eps^{\Oh(1/\eps^2)}}$.
This concludes the proof of Theorem \ref{thm:SRCS:absoluteApprox}.

\section{Proof of Lemma \ref{lma:schedulingSmallJobs} -- an algorithm to schedule the small jobs}
\label{sec:smallJobs}

For the sake of completeness, this section describes the algorithm to schedule the small jobs.
\schedulingSmallJobs*

The first step is to round the resource requirements of the small jobs.
Afterward, we schedule them fractionally with the objective to generate an integral schedule in a later step. 
When scheduling the jobs fractionally, we allow to stop the processing of a job $j$. 
Immediately the resources and the machine used by $j$ are deallocated. 
The rest of $j$ can be processed at any time, requiring again $r(j)$ resource and one machine. 
Especially we allow two or more parts of the job to be processed at the same time. 
A job is (fully) scheduled if all the processing times of its (interrupted) parts add up to $p(j)$. 
A fractional schedule is feasible, if all jobs are scheduled and neither the resource or machine condition is violated.

\paragraph*{Rounding the Resource Requirements}
To reduce the running time, we round the resource requirements of the small jobs using a standard technique called geometric grouping.
The idea of rounding item sizes by grouping was first introduced by  which was first introduced by Fernandez de la Vega and
Lueker~\cite{VegaL81} and refined for strip packing similar to the here presented form by Bougeret et al.~\cite{BougeretDJOT09} and Sviridenko~\cite{Sviridenko12}.

\begin{lemma}
	\label{lma:roundingOfSmallJobs}
	For every $\eps > 0$ with $1/\eps \in \mathbb{N}$ and every instance $I$ with set of small wide jobs $\jobs_{SW,\eps}$,
	it is possible to round the resource requirements of the small jobs to at most $\mathcal{O}(\log(m)/\eps)$ different sizes in at most $\Oh(n\log(1/\eps))$ operations such that 
	\begin{itemize}
		\item the number of distinct resource requirements larger than $\eps R$ is bounded by $\Oh(\log(1/\eps)/\eps)$,
		\item given any schedule for all jobs in $I$, we can integrate the rounded small jobs fractionally into the schedule, by replacing the original small jobs with fractions of the rounded small jobs and adding an extra schedule with a makespan bounded by  $2\eps\cdot \areaI{\jobs(I)}/R + \eps \pT{\jobs(I)}/m \leq 3\eps\OPTpre(I,\eps)$.
	\end{itemize} 
\end{lemma}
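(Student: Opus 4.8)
The plan is to apply geometric grouping to the multiset of resource requirements of the small jobs, treating them as "widths" in the strip-packing picture, exactly as in Sviridenko~\cite{Sviridenko12} and Bougeret et al.~\cite{BougeretDJOT09}, and then to account for the rounding error by producing an extra schedule whose makespan is controlled by the area and the total processing time. First I would separate the small jobs into those with resource requirement at most $\eps R$ (the "small-resource" ones, which we will round down to a single tiny class or handle trivially) and those with resource requirement in $(\eps R, R]$ (call these the small wide jobs $\jobs_{SW,\eps}$). For the small wide jobs I would sort them by non-increasing resource requirement and split the sorted sequence into groups $G_1, G_2, \dots$ so that each group $G_t$ has total area roughly $\eps^2 \areaI{\jobs}$ — equivalently, cut whenever the accumulated area since the last cut first exceeds $\eps^2 \areaI{\jobs}/R \cdot R$; this yields $\Oh(1/\eps^2)$ groups, but combined with the observation that within a group all resource requirements lie in a factor-$2$ window one gets the claimed $\Oh(\log(m)/\eps)$ bound on the number of distinct rounded sizes and $\Oh(\log(1/\eps)/\eps)$ on the number of distinct rounded sizes above $\eps R$ (here the $\log m$ comes from the geometric range of resource values between $\eps R$ and $R$, and within each of the $\Oh(\log(1/\eps))$ scale-bands one uses $\Oh(1/\eps)$ groups). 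In each group I would round every job's resource requirement up to the largest one in that group; this is the standard "round up within the group" step.

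Next I would bound the cost of the rounding. The key trick — the one that makes the additive term area-based rather than a loss in ratio — is the linear-grouping shift: when we round the jobs of group $G_t$ up to the max of $G_t$, we charge the increase to the jobs of the \emph{previous} group $G_{t-1}$, which all have resource requirement at least as large as the rounded value of $G_t$. Concretely, I would argue that the rounded jobs of group $G_t$ (for $t\ge 2$) can be placed fractionally wherever the original jobs of $G_{t-1}$ were placed, because at each layer the original $G_{t-1}$-jobs used at least as much resource per machine; so only the \emph{first} group $G_1$ (the jobs with the largest resource requirements) is truly "extra", and $G_1$ has total area at most $\eps^2 \areaI{\jobs} + (\text{one job's area})$. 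The extra jobs therefore have total area $\Oh(\eps^2)\areaI{\jobs}$ plus, since each such job has processing time at most $\mu T = \Oh(\eps)\cdot\text{stuff}$, we can schedule them greedily in a separate block. Using the Garey–Graham/Niemeier–Wiese greedy bound (makespan $\le \frac{1}{m}p(\jobs') + \frac{2}{R}\areaI(\jobs') + p_{\max}(\jobs')$ for any job set $\jobs'$), the extra block has makespan at most $\frac{1}{m}p(\jobs_{\mathrm{extra}}) + \frac{2}{R}\areaI(\jobs_{\mathrm{extra}}) + p_{\max}(\jobs_{\mathrm{extra}})$; since $\areaI(\jobs_{\mathrm{extra}})$ and $p(\jobs_{\mathrm{extra}})$ are $\Oh(\eps)$ fractions of $\areaI(\jobs)$ and $p(\jobs)$ respectively, and $p_{\max}(\jobs_{\mathrm{extra}})<\mu T$ is negligible, this is at most $2\eps\areaI{\jobs}/R + \eps p(\jobs)/m \le 3\eps\OPTpre$, matching the claimed bound.

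For the running time: sorting by resource requirement would cost $\Oh(n\log n)$, but the lemma claims $\Oh(n\log(1/\eps))$, so instead of fully sorting I would bucket the jobs into the $\Oh(\log m /\eps)$ geometric size classes using $\Oh(\log(1/\eps))$ comparisons per job — actually bucketing into powers of $(1+\eps)$ (or powers of $2$ refined by $1/\eps$) takes $\Oh(\log\log m)$ or $\Oh(\log(1/\eps))$ per job via a precomputed table of thresholds — and then do the area-based grouping inside each class, which is a single linear scan; this gives $\Oh(n\log(1/\eps))$ total. The first bullet (bound on number of distinct sizes, and the sharper bound above $\eps R$) then follows by counting the groups as described.

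The main obstacle I expect is getting the \emph{fractional replacement} step clean — that is, showing rigorously that "round-up-within-group, charge to previous group" actually lets us substitute the rounded jobs into an arbitrary given schedule $\sigma$ layer-by-layer without ever exceeding $R$, and isolating precisely which jobs end up in the leftover extra block. One has to be careful that a rounded job may be taller (longer processing time) than the previous-group job it is mapped onto, so the substitution is really "fractional over time": slice the rounded job horizontally and place the slices wherever the previous group supplied enough resource headroom over the relevant time interval — and only the residual that cannot be absorbed this way (which is exactly what is attributable to the first group plus boundary effects) goes into the extra greedy block. Making this slicing argument precise, while keeping the extra area genuinely $\Oh(\eps)$ and not $\Oh(\eps)\cdot(\text{number of groups})$, is the delicate part; everything else is routine bookkeeping, and it is exactly this argument that \cite{Sviridenko12,BougeretDJOT09} carry out, so I would cite those and adapt the constants.
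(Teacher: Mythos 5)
Your overall strategy is the right one and matches the paper's: bucket the small jobs into geometric (factor-$2$) resource-requirement classes, apply linear grouping within each class, round each group up to its maximum, use the \emph{shift-by-one-group} charging argument so that only the top group and the boundary leftovers must go into a separate block, and schedule that block greedily via the Garey--Graham/Niemeier--Wiese bound. The running-time analysis (bucketing via a precomputed table of thresholds instead of a full sort, then a median/selection pass per class) is also what the paper does. And you correctly flag the delicate point: the paper does carry out the layer-by-layer substitution with care, using the ``split jobs'' on the group boundaries to guarantee that the rounded stack shifted up by $\eps\, p(\jobs_{S,i})$ is dominated pointwise by the original stack.

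There are, however, two concrete gaps that would sink the argument as written. First, the \emph{grouping criterion}: the paper partitions each factor-$2$ class $\jobs_{S,i}$ (resource requirements in $(R/2^i, R/2^{i-1}]$) into $1/\eps$ groups of equal \emph{processing time} $\eps\, p(\jobs_{S,i})$, not of equal area. The processing-time grouping is what makes the additive cost come out directly: the leftover from class $i$ has processing time $\eps\, p(\jobs_{S,i})$, can be run $2^{i-1}$ at a time, and the sum $\sum_i p(\jobs_{S,i})/2^i$ telescopes against $\mathrm{area}(\jobs)/R$, giving exactly the $2\eps\cdot\mathrm{area}(\jobs)/R$ term. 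Your global area-based grouping (each $G_t$ with area $\approx \eps^2\,\mathrm{area}(\jobs)$) has two problems: by itself it does not give $\Oh(\log m/\eps)$ sizes (you would need to combine it with the factor-$2$ buckets, and then the ``$\Oh(1/\eps^2)$ groups'' versus ``$\Oh(1/\eps)$ groups per band'' counts in your writeup contradict each other), and applying the greedy bound to $G_1$ does not by itself produce the $\eps\, p(\jobs)/m$ term in the lemma. Second, the \emph{jobs with resource requirement below $\eps R$} cannot be discarded or ``rounded down to a single tiny class'': rounding down goes the wrong way (a schedule feasible for the rounded job need not be feasible for the original), and more importantly these jobs are precisely where the $\eps\, p(\jobs)/m$ term comes from. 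The paper's factor-$2$ classes extend all the way down to $R/2^{\lceil\log m\rceil}\le R/m$ (that is why the count is $\Oh(\log m/\eps)$, not $\Oh(\log(1/\eps)/\eps)$), and the bottom class $\jobs_{S,\bot}$ of jobs with resource $\le R/m$ is scheduled $m$ at a time, contributing $\eps\, p(\jobs_{S,\bot})/m \le \eps\, p(\jobs)/m$ to the extra block. Without that machinery, your bound has no source for the $\eps\, p(\jobs)/m$ term, and your ``handle the small-resource jobs trivially'' step is unjustified.
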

\begin{proof}
	First, we partition the set of small jobs $\jobs_S$ into $\lceil \log(m) \rceil +1$ sets.
	We construct the first $\lceil \log(m) \rceil$ sets such that
	the $i$th set $\jobs_{S,i}$ contains the jobs with resource requirements in the interval $(R/2^i,R/2^{i-1}]$. 
	The last of these sets contains jobs with resource requirements in the interval $(R/2^{\lceil \log(m) \rceil},R/2^{\lceil \log(m) \rceil -1}]$.
	The residual jobs build the last set $\jobs_{S,\bot}$. 
	All the jobs in this set have a resource requirement of at most $R/2^{\lceil \log(m) \rceil} \leq R/m$, and hence, we can schedule $m$ of them at the same time without violating any constraint.
	
	To round the jobs, we consider all the sets $\jobs_{S,i}$, $i \in \{1,\dots,\lceil\log(m)\rceil, \bot \}$. 
	The jobs in a set $\jobs_{S,i}$ are sorted in increasing order of their resource requirement, and are stacked such that the job with the smallest resource requirement is on the bottom, see Figure \ref{fig:SRCS:RoundingOfResourceRequirements}.
	The processing time of the stack is given by $\pT{\jobs_{i}} := \sum_{j \in \jobs_{i}} \pT{j}$.
	We will partition this stack into $1/\eps$ segments.
	We draw a horizontal line at each integral multiple of $\eps \pT{\jobs_{i}}$, starting at $0$. 
	If there is no job intersected by this line, we leave the horizontal line at its position.
	Otherwise, we shift this line up such that it lies at the end of the cut job. 
	We define $\jobs_{i,\mathrm{split}}$ as the set of jobs that where previously intersected by these lines, and call the set of jobs which is contained between the $l$th and $(l+1)$st horizontal line group $l$ and denote it with $\jobs_{i,l}$.
	We define the rounded resource requirement of the jobs in $\jobs_{i,l}$ as $r_{i,l} := \max_{j \in \jobs_{i,l}} \jR{j}$.
	Since each of the $\Oh(\log(m))$ stacks is divided into $\Oh(1/\eps)$ groups, the total number of rounded resource requirements is bounded by $\mathcal{O}(\log(m)/\eps)$.

	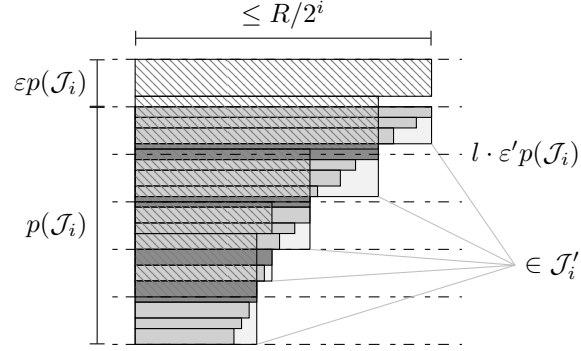
\begin{figure}
		\centering
		\begin{tikzpicture}
		\pgfmathsetmacro{\h}{-0.7}
		\pgfmathsetmacro{\w}{1}

		\foreach \i\x\y\xx\yy in {
			0.1/0/0.0/3.9/0.7,
			0.1/0/0.7/3.2/1.7,
			0.1/0/1.7/2.3/2.7,
			0.1/0/2.7/1.8/3.3,
			0.1/0/3.3/1.6/4.5,
			0.3/0/0.0/3.9/0.2,
			0.3/0/0.2/3.7/0.4,
			0.3/0/0.4/3.4/0.7,
			0.9/0/0.7/3.2/1.0,
			0.3/0/1.0/2.9/1.2,
			0.3/0/1.2/2.7/1.5,
			0.3/0/1.5/2.4/1.7,
			0.9/0/1.7/2.3/1.9,
			0.3/0/1.9/2.3/2.2,
			0.3/0/2.2/2.1/2.4,
			0.3/0/2.4/1.9/2.7,
			0.9/0/2.7/1.8/3.0,
			0.3/0/3.0/1.7/3.3,
			0.9/0/3.3/1.6/3.7,
			0.3/0/3.7/1.5/4.0,
			0.3/0/4.0/1.4/4.2,
			0.3/0/4.2/1.3/4.5
			%
		}{
			\draw[fill=gray,fill opacity=\i] (\x*\w,\h*\y) rectangle (\xx*\w,\yy*\h);
		}
		
		\foreach \i\x\y\xx\yy in {
			0.4/0/-0.9/3.9/-0.2,
			0.4/0/-0.2/3.2/0.8,
			0.4/0/0.8/2.3/1.8,
			0.4/0/1.8/1.8/2.4,
			0.4/0/2.4/1.6/3.6
		}{
			\draw[pattern = north west lines, pattern color = gray] (\x*\w,\h*\y) rectangle (\xx*\w,\yy*\h);
		}
		
		\foreach \i in {-1,0,2,3,4,5}
		{ 
			\draw  [dash pattern=on 2pt off 3pt on 4pt off 4pt] (-0.3*\w,\i*0.9*\h) -- 	(4.3*\w,0.9*\i*\h);	
		}
		\foreach \i/\x in {0.7/3.9,1.7/3.2,2.7/2.3,3.3/1.8,4.5/1.6}
		{ 
			\draw [lightgray] (\x*\w,\i*\h) -- (5*\w,3*\h);	
		}	
		\node[right] at (5*\w,3*\h) {$ \in \jobs_{i}'$};
		
		\draw  [dash pattern=on 2pt off 3pt on 4pt off 4pt] (-0.3*\w,0.9*\h) -- (4.3*\w,0.9*\h) node[right] {$l \cdot \epss \pT{\jobs_{i}}$};
		
		\draw[|-|] (-0.5*\w,0*\h) -- node[midway, left]{$\pT{\jobs_{i}}$} (-0.5*\w, 4.5*\h);
		\draw[|-|] (-0.5*\w,-0.9*\h) -- node[midway, left]{$\eps\pT{\jobs_{i}}$} (-0.5*\w, 0*\h);
		\draw[|-|] (-0.0*\w,-1.3*\h) -- node[midway, above]{$\leq R/2^{i}$} (3.9*\w, -1.3*\h);
		\end{tikzpicture} 
		\caption{Rounding of the jobs in $\jobs_i$. The hatched rectangles represent the rounded jobs that are shifted upwards by $\epss \pT{\jobs_{i}}$.}
		\label{fig:SRCS:RoundingOfResourceRequirements}
	\end{figure}
	
	\begin{claim}
		\label{lma:Segments2}
		Given any schedule for the jobs $\jobs$, we can find a schedule for the rounded small jobs by replacing the small jobs with fractions of the rounded jobs and adding an extra schedule with makespan bounded by $2\eps\cdot \areaI{\jobs(I)}/R + \eps\pT{\jobs(I)}/m \leq 3\eps\OPTpre(I,\eps)$. 
	\end{claim}
	\begin{proofClaim}
		Consider the stack of the jobs in $\jobs_{S,i}$ and the stack of rounded jobs for the jobs in $\jobs_{S,i}$. 
		Consider a horizontal line at $\tau \geq \eps \pT{\jobs_{S,i}}$ through the first stack and the corresponding horizontal line at $\tau - \eps \pT{\jobs_{S,i}}$ trough the second stack. 
		
		We claim that the resource requirement of the job $j_\tau$ intersected by $\tau$ is at least as large as the resource requirement of the rounded job $j_{\tau}'$ at $\tau - \eps\pT{\jobs_{S,i}}$.
		
		Let $j_\tau \in \jobs_{i,l}$ for some $l \in \{0,\dots,1/\eps-1\}$.
		There are two options for $j_{\tau}'$.
		Either it is in $\jobs_{i,l}$ as well or it is in $\jobs_{S,i',l}$, with $i' < i$.
		In the second case the claim follows trivially since, by construction, all the jobs in $\jobs_{S,i,l}$ have a resource requirement that is at least as large as the rounded resource requirements of the jobs in $\jobs_{S,i',l}$.
		
		On the other hand, if $j_{\tau}'$ is in $\jobs_{S,i,l}$, the job $j_\tau$ has to be a job in $\jobs_{S,i,\mathrm{split}}$ because the sum of processing times of jobs in $\jobs_{S,i,l}$ that are not in $\jobs_{S,i,\mathrm{split}}$ is bounded by $\eps\pT{\jobs_{S,i}}$.
		Since the jobs in $\jobs_{S,i,\mathrm{split}}$ are the ones with the largest resource requirement of their group the resource requirement of $j_{\tau}$ has to be at least as large as $j_{\tau}'$.

		Hence, when shifting the second stack such that it starts at $\eps \pT{\jobs_{S,i}}$, we schedule the jobs in $\jobs_{S,i}'$ instead of the jobs that are positioned at the same processing time in the stack. 
		What remains to be scheduled is a last part of processing time $\eps \pT{\jobs_{S,i}}$ that cannot be scheduled instead of any job due to the shifting.
		These jobs will be scheduled at the end of the schedule in the next step.
		We call these parts to be scheduled on the top of the schedule segment $S_i$ for the set $\jobs_{S,i}$.
		
		First consider segment $S_\bot$.
		Note that $\pT{\jobs_{S,\bot}} \leq m \OPTpre(I,\eps)$ because at most $m$ jobs can be scheduled in parallel. 
		Hence, the job parts in $S_\bot$ have a total processing time of $\eps \pT{\jobs_{S,\bot}} \leq \eps m\OPTpre(I,\eps)$.
		We partition this segment into sub segments of processing time $\eps \pT{\jobs_{S,\bot}}/m$ (slicing some jobs horizontally in the progress) and schedule each on one machine.
		This schedule does not violate the resource condition because $\jR{j} \leq R/m$ for each $j \in \jobs_{S,\bot}$.
		Hence, it is possible to schedule the jobs in this segment fractionally with  a makespan of at most $\eps \pT{\jobs_{S,\bot}}/m \leq \eps \pT{\jobs(I)}/m \leq \eps\OPTpre(I,\eps)$.
		
		Next, we consider the residual segments $S_i$.
		The jobs in $\jobs_{i}$ have a resource requirement of at  most $R/2^{i-1}$.
		Hence, we can schedule $2^{i-1} \leq m$ jobs from the segment $S_i$ in parallel without violating the resource constraint. 
		Therefore, we need a processing time of $\eps \pT{\jobs_{S,i}}/2^{i-1}$ to schedule the jobs in this segment fractionally.  
		On the other hand, each job in the set $\jobs_{i}$ has a resource requirement of at least $R/2^{i}$. 
		Therefore, it holds that $\sum_{i = 1}^{\lceil \log(m) \rceil +1} \pT{\jobs_{S,i}}/2^{i}\leq  \areaI{\jobs(I)}/R \leq \OPTpre(I,\eps)$.
		
		Hence, the total processing time added to schedule all the jobs in the segments $S_i$, for $i \in \{1,\dots,\lceil\log(m)\rceil, \bot \}$, is bounded by 
		\begin{align*}
		&\sum_{i = 1}^{\lceil \log(m) \rceil +1} \eps \pT{\jobs_{S,i}}/2^{i-1} + \eps \pT{\jobs_{S,\bot}}/m\\ 
		&= 2\eps \sum_{i = 1}^{\lceil \log(m) \rceil +1} \pT{\jobs_{S,i}}/2^{i} + \eps \pT{\jobs_{S,\bot}}/m\\
		&\leq  2\eps\cdot \areaI{\jobs(I)}/R + \eps \pT{\jobs(I)}/m \\
		&\leq  3\eps \OPTpre(I,\eps)\mathrm{,}
		\end{align*}
		which concludes the proof of the claim.
	\end{proofClaim}
	
	To find the partition into the groups $\jobs_{S,i,j}$ we only need to find all the size defining jobs for the groups.
	This can be done in $\Oh(n\log(1/\eps))$ using a modified median algorithm.
	We first partition the jobs into the $\log(m)$ sets and, in each of these sets, we find the $1/\eps$ size defining jobs.
	The partition can be done in $\Oh(n)$, provided, we can calculate the logarithm of $\lfloor R/\jR{j} \rfloor \leq m$ in $\Oh(1)$, while the search for the size defining jobs can be done in $\Oh(n\log(1/\eps))$ for all of the sets. 
	
	Note that the assumption that we can find $\log(\lfloor R/\jR{j} \rfloor)$ in $\Oh(1)$ is reasonable for all $\jR{j} \geq R/m$ since, in $O(m)$, we can provide a table of size $m$ which contains as entries the size of the corresponding logarithm.
	This table needs to be constructed only once and afterward the logarithm can be found in $\Oh(1)$ by a simple lookup.
	
	Now consider the number of resource requirements larger than $\eps R$. 
	These jobs are partitioned into at most $\Oh(\log(1/\eps))$ sets and, for each of these sets, we generated at most $1/\eps$ different sizes. 
	Hence the total number of different resource requirements larger than $\eps R$ is bounded by $\Oh(\log(1/\eps)/\eps)$.
\end{proof}

We denote by $\jobs_S'$ set that contains for each rounded resource requirement exactly one job that has the summed processing time of all original jobs mapped to this resource requirement.
	
\paragraph*{Solving a Configuration LP}
We will schedule the jobs in $\jobs_S'$ using a configuration LP that allows for each layer a certain set of configurations. 
In the following, we will describe the configurations with more detail.

A configuration of jobs in $\jobs_S'$ for a layer $s$ is a multiset $C := \sset{ a_j:j}{j \in \jobs_S'}$ such that $\jR{C}  := \sum_{j \in \jobs_S}a_j\jR{j} \leq \REntryS{s}$ and $m(C) := \sum_{j \in \jobs} a_j\leq \mEntryS{s}$; i.e., a configuration defines a multiset of jobs in $\jobs_S'$ which can be scheduled at the same time, without violating the resource or machine constraint. 
We define $\conf_s$ as the set of configurations for layer $s$.
We introduce a new layer $\LLS$ with processing time $3\eps\greedyMakespan$, to cover the extra makespan of at most $3\eps\greedyMakespan$ needed due to the rounding of the resource requirements.
We define $\startPoints_{\LLS} := \startPoints \cup \{\LLS\}$ as the set of all considered layers and $\mathcal{C}_s$ as the set of configurations for layer $s \in \startPoints_{\LLS}$.
Note that $\mathcal{C}_{\LLS}$ contains all configurations $C$ with $\jR{C} \leq R$ and $m(C) \leq m$ as this layer contains no large job.  

Consider the following linear program $LP_{\startPoints, \jobs_{S}}$:
\begin{align}
\sum_{c \in \mathcal{C}_{\LLS}} x_{\LLS,C} &= 3\eps \greedyMakespan \label{eq:APTAS:SRCS:smallWide}\\
\sum_{c \in \mathcal{C}_s} x_{s,C} &= \eps\delta\greedyMakespan &\forall s\in \startPoints\label{eq:APTAS:SRCS:smallLayer}\\
\sum_{s \in \startPoints_{\LLS}}\sum_{C \in \mathcal{C}_s} C_jx_{s,C}&= p(j) & \forall j \in {\jobs}_{S}'\label{eq:APTAS:SRCS:jobs}\\
x_{s,C} &\geq 0 &\forall s\in S_{\LLS}, C \in \mathcal{C}_s \mathrm{.}\\
\end{align}
The variable $x_{s,C}$ denotes the total processing time of the configuration $C \in \mathcal{C}_s$ in layer $s$. 
The first two conditions ensure that the total processing time of a layer is not exceeded by the total processing time of the configurations in this layer.
The third condition ensures that each job is scheduled completely.
Note that a rounded optimal solution can be transformed into a solution of this linear program, by determining the stets of rounded small jobs that are processed in parallel and setting the corresponding variable to the total time in which these jobs are processed in parallel.
Since this linear program has up to $(\log(m)/\eps)^{m}\cdot 1/\eps\delta$ variables, we can not solve it directly. 
Instead, we search for a solution to the relaxed version of this linear program, where we enlarge the right hand side of the equations (\ref{eq:APTAS:SRCS:smallWide}) and (\ref{eq:APTAS:SRCS:smallLayer}) by the factor $(1+\eps)$ and hence stretch the resulting schedule by a factor of $(1+\eps)$.

\begin{lemma}
	\label{lma:SRCS:APTAS:FirstLPSolutionSmallJobs}
	If there is a solution to the linear program $LP_{\startPoints, \jobs_{S}}$, we can find a solution $x$ to the relaxed version of the linear program in at most $\log(m)^{\Oh(1)} \cdot  (1/\eps)^{\Oh(1/\eps)}$ that uses at most $\Oh(|\startPoints| + |\jobs_S'|)$ non-zero components.
\end{lemma}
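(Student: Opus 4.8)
The plan is to treat $LP_{\startPoints,\jobs_S}$ as a configuration LP with few constraints but exponentially many columns and to solve its relaxation by price-directed column generation, in the spirit of the AFPTAS of \cite{journals/talg/JansenMR19}. First observe that the program has only $|\startPoints_{\LLS}| + |\jobs_S'| \in \Oh(|\startPoints| + |\jobs_S'|)$ equality constraints, that $|\jobs_S'| \in \Oh(\log(m)/\eps)$ by Lemma \ref{lma:roundingOfSmallJobs}, and that $|\startPoints|$, hence $|\startPoints_{\LLS}|$, lies in $(1/\eps)^{\Oh(1/\eps)}$ since all large jobs start at multiples of $\eps\delta\greedyMakespan$ and $\delta \geq \eps^{4/\eps+1}$. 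Consequently, once we have \emph{any} feasible point of the relaxed LP that is a convex combination of at most $\log(m)^{\Oh(1)}(1/\eps)^{\Oh(1/\eps)}$ configurations, the claimed support bound comes for free: restrict the relaxed LP to those configurations and compute a basic feasible solution, which has at most as many nonzero coordinates as there are constraints, i.e. $\Oh(|\startPoints| + |\jobs_S'|)$, in time polynomial in the number of involved configurations and constraints.

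To produce such a point we cast the feasibility question as a max-min-resource-sharing instance. For every layer $s \in \startPoints_{\LLS}$ let the block $B_s$ be the set of nonnegative configuration-time vectors $(x_{s,C})_{C \in \conf_s}$ whose total weight $\sum_{C} x_{s,C}$ equals the length $\ell_s$ of layer $s$ (that is $3\eps\greedyMakespan$ if $s=\LLS$ and $\eps\delta\greedyMakespan$ otherwise), padding with the empty configuration $C=\emptyset \in \conf_s$ whenever needed, and let the coupling ``resources'' be the jobs $j \in \jobs_S'$, block $B_s$ contributing $\sum_{C \in \conf_s} C_j x_{s,C}$ units of resource $j$ towards the demand $\pT{j}$. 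If $LP_{\startPoints,\jobs_S}$ has a solution then any such solution gives block points $x^*_s \in B_s$ covering every demand exactly, so the optimal covering ratio of the instance is at least $1$; hence the standard logarithmic-potential decomposition algorithm, run with an approximate block solver of quality $1-\Oh(\eps)$, returns after $\Oh(|\jobs_S'|\,\eps^{-2}\log(|\jobs_S'|/\eps))$ iterations a convex combination $\bar x$ of block solutions that covers every demand $\pT{j}$ up to a factor $1-\Oh(\eps)$ while respecting the exact layer lengths. Rescaling and truncating the resulting over- and undershoot in the routine way --- splitting, for each over-covered job $j$, a suitable fraction of every configuration $C$ containing $j$ off onto the configuration obtained from $C$ by deleting one copy of $j$, which is again a valid configuration and leaves every layer total unchanged --- turns $\bar x$ into an exact feasible solution of the relaxed LP, supported on the $\log(m)^{\Oh(1)}(1/\eps)^{\Oh(1/\eps)}$ configurations produced during the run.

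It remains to realize the block solver. For a layer $s$ and a price vector $y \geq 0$ on $\jobs_S'$ the block problem is $\max\{\sum_{j} C_j y_j : C \in \conf_s\}$, and since a configuration of layer $s$ is precisely a multiset of the $\Oh(\log(m)/\eps)$ rounded job classes with $\jR{C} \leq \REntryS{s}$ and $m(C) \leq \mEntryS{s}$, this is an unbounded knapsack problem with a cardinality constraint on $\Oh(\log(m)/\eps)$ item types and capacities $\REntryS{s} \leq R$, $\mEntryS{s} \leq m$. An FPTAS for this problem runs in time polynomial in the number of item types and $1/\eps$ and \emph{only polylogarithmic} in the capacities $R$ and $m$ --- which is essential, as these may be exponential in the input --- hence in $\log(m)^{\Oh(1)}(1/\eps)^{\Oh(1)}$; using it in place of an exact block solver costs only a further $(1-\Oh(\eps))$ in the covering ratio, absorbed into the rescaling above. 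Tallying up: $\log(m)^{\Oh(1)}(1/\eps)^{\Oh(1)}$ iterations, each calling the block solver once for each of the $(1/\eps)^{\Oh(1/\eps)}$ layers at cost $\log(m)^{\Oh(1)}(1/\eps)^{\Oh(1)}$ per call, plus the final re-solve, yield the overall bound $\log(m)^{\Oh(1)}(1/\eps)^{\Oh(1/\eps)}$.

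The main obstacle is this block solver together with the accompanying approximation bookkeeping: one must supply (or cite) an FPTAS for the cardinality-constrained unbounded knapsack problem whose running time depends only polylogarithmically on the bounds $R$ and $m$, and then verify that composing the loss of the FPTAS with the loss of the max-min-resource-sharing algorithm and the $(1+\eps)$-enlargement of the layer lengths is consistent --- in particular that feasibility of $LP_{\startPoints,\jobs_S}$ indeed forces the covering instance to have value at least $1$. The remaining ingredients --- padding blocks with the empty configuration, truncating the approximate solution to exact equalities, and sparsifying via a basic feasible solution --- are routine.
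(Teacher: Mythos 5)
Your proposal is correct and follows essentially the same route as the paper's proof: cast the configuration LP as a max-min-resource-sharing instance with one block per layer and one covering function per rounded small-job class, solve it approximately via the Grigoriadis-type decomposition scheme together with an FPTAS block solver for cardinality-constrained unbounded knapsack (the paper cites \cite{GrigoriadisKPV01} and \cite{MastrolilliH06} for these two ingredients), absorb the approximation loss into the $(1+\eps)$-relaxation of the layer lengths, and sparsify by passing to a basic feasible solution on the generated columns (the paper uses \cite{ke2008fast}). The details you flag as remaining obstacles -- the polylogarithmic dependence of the block solver on $R,m$ and the bookkeeping of the accumulated $(1-\Oh(\eps))$ losses -- are exactly the points the paper discharges by citation, so your argument matches in substance.
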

\begin{proof}
	This linear program has $|\startPoints| +1 +|{\jobs}_{S}'| $ constraints and at most $(\log(m)/\eps)^{m}\cdot 1/\eps\delta$ variables. 
	How to solve a similar problem (that does not contain constraint (\ref{eq:APTAS:SRCS:smallWide}) or constraint (\ref{eq:APTAS:SRCS:smallLayer})) has been described to construct an AFPTAS for this problem by Jansen et al.~\cite{journals/talg/JansenMR19}.
	We will not repeat the lengthily description here and instead give a high level overview.
	The main idea is to transform the above LP to a max-min-resource-sharing problem.
	In this problem, we are given a non-empty convex compact set $P$ as well set of $M$ non-negative continuous concave functions $f_i : B \rightarrow \mathbb{R}$, $i \in [M]$.
	The problem asks to find $\lambda^* := \max\{\lambda \in \mathbb{R} | \exists  x \in B: f_i(x) \geq \lambda \forall i \in [M]\}$. 	
	Note that $B$ can be of the form $B_1 \times \dots \times B_K$, where each of the sets $B_j$ is a non-empty convex compact set. 
	These independent sets are called blocks.
	It can be solved approximately using an algorithm for the max-min-sharing-problem by Grigoriadis et al.~\cite{GrigoriadisKPV01}. Combined with an algorithm that finds a basic solution given any feasible solution by Ke et al.~\cite{ke2008fast} the basic solution for the relaxed version of the LP can be found in $\log(m)^{\Oh(1)} \cdot  (1/\eps)^{\Oh(1/\eps)}$ operations. 
	The solver for max-min-resource-sharing described in \cite{GrigoriadisKPV01} has a time complexity of $\Oh(M(\eps^{-2} + \ln(M))(B\cdot \mathcal{ABS} + M\ln(M)))$ and generates $\Oh(M(\eps^{-2} + \ln(M)))$ non-zero components, where $M$ is the number of functions, $K$ is the number of blocks and $\mathcal{ABS}$ is the time complexity of a so-called block-solver.
	On the other hand, the algorithm to find the basic solution described in \cite{ke2008fast} has a time complexity of $\Oh(m^{1.5356}n)$ where $m$ is the number of constraints and $n$ is the number of variables.

	The above LP is transformed to a max-min-resource-sharing problem such that the constraints (\ref{eq:APTAS:SRCS:jobs}) become the functions $f_1, \dots, f_{|\jobs_S'|}$, while the constraints (\ref{eq:APTAS:SRCS:smallWide}) and (\ref{eq:APTAS:SRCS:smallLayer}) each define a set $B_j$, $j \in [|\startPoints|+1]$.
	As a consequence, we get that $M = |\jobs_S'|$ and $B = |\startPoints|+1$.
	In the block problem, we search for a configuration that maximizes a given function.
	This problem resembles the problem \acl{UKKP}, and therefore can be solved approximately in $|\jobs_S'|^{O(1)}\cdot (1/\eps)^{\Oh(1)}$, see e.g. \cite{MastrolilliH06}.
	
	Note that the solution generated for the max-min-resource-sharing problem is only an approximate solution, because the block-solver can only be solved approximately, i.e. we find a $\lambda \geq (1-\rho)\lambda^*$ for any chosen $\rho \in (0,1)$.
	When transforming this solution to a solution to the above LP, we have to relax the right hand side of the equations (\ref{eq:APTAS:SRCS:smallWide}) and (\ref{eq:APTAS:SRCS:smallLayer}) by the factor $(1+\eps)$ to guarantee the complete schedule of all the jobs. 
	In the last step, we transform the solution for the relaxed LP to a basic solution using the algorithm described in~\cite{ke2008fast}.  
	
\end{proof}

\paragraph*{Reducing the Number of Configurations}
In the next step, we reduce the number of non-zero components some further since scheduling the large jobs inside these configurations would add up to $\Oh(\mu T \log(m)/\eps^2\delta)$ to the makespan, which is too large.
First, we partition the set of rounded small jobs $\jobs_S'$ into wide and narrow jobs.
We say a job $j \in \jobs_S'$ is wide, if $\jR{j} \geq \eps R$  and narrow otherwise.
Let $\jobs_{SW}'$ be the set of wide and $\jobs_{SN}'$ be the set of narrow jobs.

To reduce the number of non-zero components, we use the same techniques as in the AFPTAS described in~\cite{journals/talg/JansenMR19}, i.e., we introduce windows and generalized configurations. 
Let $x_{\pre}$ be the solution for the linear program generated by the algorithm from Lemma \ref{lma:SRCS:APTAS:FirstLPSolutionSmallJobs}.
Let $\conf_{\LP}$ be the set of configurations that have a non-zero component in the considered solution and let $\conf_{\LP, s}$ be the set of non-zero component configurations for layer $s$.
For a configuration $C \in \conf_{\LP, s}$, we denote by $\pTs{C} := x_{s,C}$ the total processing time it is processed inside the layer $s$.
For a configuration $C \in \conf_{\LP}$, we define $C{|_{\jobs_{SW}'}}$ as the configuration where we removed all the narrow jobs.
Furthermore, we denote by $\conf_{\LP,s,W}$ the set of all configurations in $\conf_{\LP,s}$ that are reduced to their wide jobs.
A \textit{window} $w=(w_r,w_m)$ is a pair consisting of a resource requirement $\jR{w} = w_r$ and a number of machines $m(w) = w_m$. 
As for a configuration, the total time a window is processed inside a layer $s$ is denoted as $\pTs{w}$ and is called its processing time.
At each point of time in a given window $w$, there can be processed $m(w)$ jobs in parallel with a summed up resource requirement $\jR{w}$. 
For windows $w_1$, $w_2$, we write $w_1 \leq w_2$ if and only if $\jR{w_1} \leq \jR{w_2}$ and $m(w_1) \leq m(w_2)$. 
A \textit{generalized configuration} $(C,w)$ for a layer $s$ is a pair consisting of a configuration $C \in \conf_{\LP,s,W}$ and a window $w$ such that $m(w) \leq m_s - m(C)$ and $\jR{w} \leq R_s - \jR{C}$. 
For a configuration $C \in \conf_{\LP,s,W}$, we define by $w_s(C) := (R_s-\jR{C},m_s-m(C))$ the \textit{main window} for $C$. 
We define $\wind_s$ as the set of all main windows for the configurations in $\conf_{\LP, s}$, and $\wind$ be the set of all the generated widows.

Consider the following linear program:
$\LP_{W}(\startPoints_{\LLS}, \jobs_S'):$
\begin{align}
\sum_{s \in \startPoints_{\LLS}}\sum_{C \in \conf_{\LP,s,W}}\sum_{\substack{w \in \wind_s \\  w \leq w_s(C)}} C(j)x_{C,w,s} 
&= \pT{j}
&\forall j \in \jobs_{SW}' 
\label{eq:APTAS:SRCS:wideJobs}\\
\sum_{s \in \startPoints_{\LLS}}\sum_{w \in \wind_s} y_{j,w,s}  
&= \pT{j}
&\forall j \in \jobs_{SN}'  
\label{eq:APTAS:SRCS:narrowJobs}\\
m(w)\sum_{\substack{C \in \conf_{\LP,s,W}\\w_s(C) \geq w}} x_{C,w,s}  
& \geq \sum_{j \in \jobs_{SN}} y_{j,w,s}
&\forall s \in \startPoints_{\LLS}, w \in \wind_s
\label{eq:APTAS:SRCS:windowProcessingTime}\\
\jR{w}\sum_{\substack{C \in  \conf_{\LP,s,W} \\  w_s(C) \geq w}} x_{C,w,s}  
& \geq \sum_{j \in \jobs_{SN}} \jR{j}y_{j,w,s}
&\forall s \in \startPoints_{\LLS}, w \in \wind_s
\label{eq:APTAS:SRCS:windowResourceRequirement}
\end{align}
\begin{align}
x_{C,w,s}  &\geq 0&  \forall s \in \startPoints_{\LLS},C \in \conf_{\LP,s,W},  w \in \wind\\
y_{j,w,s}  &\geq 0&  \forall s \in \startPoints_{\LLS},w \in \wind_s,  j \in \jobs_{SN}
\end{align}

The variable $x_{C,w,s}$ denotes the processing time of the generalized configuration $(C,w)$ in the layer $s$ and
the value $y_{j,w,s}$ indicates which amount of job $j$ is processed in window $w$ in the layer $s$. 
Inequalities (\ref{eq:APTAS:SRCS:wideJobs}) and (\ref{eq:APTAS:SRCS:narrowJobs}) ensure that for each job there is enough processing time reserved, while equalities (\ref{eq:APTAS:SRCS:windowProcessingTime}) and (\ref{eq:APTAS:SRCS:windowResourceRequirement}) ensure that in each window there is enough space to schedule the contained jobs. 

Given a solution $(x,y)$ to $\LP_{W}$, we define 
\[\pTs{x} := \sum_{C \in \conf_{\LP,s,W}}\sum_{\substack{w \in \wind_s \\  w \leq w_s(C)}} x_{C,w,s},\] 
which is the processing time of $(x,y)$ in the layer $s$, and 
\[\pTs{w,x} := \sum_{\substack{C \in \conf_{\LP,s,W}\\ C(w) \geq w}}x_{C,w,s}\] 
which is the summed up processing time of a window $w \in \wind$ in $x$ in layer $s$.

\begin{lemma}
	\label{lma:SRCS:APTAS:firstWindowSolution}
	Given a solution $x_{\pre}$ to the relaxed version of $LP_{\startPoints}$, we can find a solution $(\tilde{x},\tilde{y})$ to the linear program $\LP_{W}$, which fulfills
	\begin{align}
	\pTs{x_{\LLS,C}} & \leq (1+\eps) 3\eps \greedyMakespan \label{eq:SRCS:APTAS:processingtime1}\\
	\pTs{x_{s',C}} &\leq (1+\eps)\eps\delta\greedyMakespan &\forall s' \in S\label{eq:SRCS:APTAS:processingtime2}
	\end{align}
\end{lemma}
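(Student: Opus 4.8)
The plan is to produce $(\tilde x,\tilde y)$ explicitly from $x_{\pre}$ by ``peeling off'' the narrow jobs of each configuration in the support and routing them through the main window of its wide part. Fix a layer $s\in\startPoints_{\LLS}$ and a configuration $C'\in\conf_{\LP,s}$ with $x_{s,C'}>0$; let $C:=C'|_{\jobs_{SW}'}\in\conf_{\LP,s,W}$ be its wide part and $w:=w_s(C)=(\REntryS{s}-\jR{C},\mEntryS{s}-m(C))$ the associated main window (reading $R$, $m$ for $\REntryS{s}$, $\mEntryS{s}$ when $s=\LLS$). The one structural fact I need is that the narrow remainder of $C'$ fits into $w$: because $C'$ is a valid configuration for layer $s$, we have $\jR{C'}\le\REntryS{s}$ and $m(C')\le\mEntryS{s}$, so the narrow part of $C'$ has resource requirement $\jR{C'}-\jR{C}\le\jR{w}$ and cardinality $m(C')-m(C)\le m(w)$. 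In particular $(C,w)$ is a legal generalized configuration, indeed $w=w_s(C)$.

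\textbf{Construction.} I would then set $\tilde x_{C,w_s(C),s}:=\sum_{C'\in\conf_{\LP,s}:\,C'|_{\jobs_{SW}'}=C}x_{s,C'}$, with $\tilde x_{C,w,s}:=0$ whenever $w\neq w_s(C)$, and for every narrow job $j\in\jobs_{SN}'$ put $\tilde y_{j,w,s}:=\sum_{C'\in\conf_{\LP,s}:\,w_s(C'|_{\jobs_{SW}'})=w}C'(j)\,x_{s,C'}$. All windows appearing are main windows, hence lie in $\wind_s$, and non-negativity is obvious. Since different wide configurations can share the same main window, these sums deliberately aggregate over the whole set $\{C'\in\conf_{\LP,s}:w_s(C'|_{\jobs_{SW}'})=w\}$ rather than a single configuration; this is the only place where a little care is needed.

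\textbf{Verification and the hard part.} Constraints (\ref{eq:APTAS:SRCS:wideJobs}) and (\ref{eq:APTAS:SRCS:narrowJobs}) I would check by unfolding the definitions and using $C'(j)=C(j)$ for wide $j$, which collapses both onto constraint (\ref{eq:APTAS:SRCS:jobs}) of $LP_{\startPoints,\jobs_{S}}$, so both hold with equality. For the window capacities (\ref{eq:APTAS:SRCS:windowProcessingTime}) and (\ref{eq:APTAS:SRCS:windowResourceRequirement}) one fixes $w$ and observes that $\tilde x$ carries mass only on pairs $(C,w_s(C))$, so the left-hand sum reduces to $\{C:w_s(C)=w\}$ and equals $m(w)$ (resp.\ $\jR{w}$) times $\sum_{C':\,w_s(C'|_{\jobs_{SW}'})=w}x_{s,C'}$, while the right-hand side equals $\sum_{C':\,w_s(C'|_{\jobs_{SW}'})=w}x_{s,C'}\cdot\big(\sum_{j\in\jobs_{SN}'}C'(j)\big)$ (resp.\ with $\jR{j}$ weights); the inequality then follows termwise from the fit observation of the first paragraph. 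Finally the processing-time bounds are immediate: $\pTs{\tilde x}=\sum_{C\in\conf_{\LP,s,W}}\tilde x_{C,w_s(C),s}=\sum_{C'\in\conf_{\LP,s}}x_{s,C'}$, which by feasibility of $x_{\pre}$ equals the right-hand side of the relaxed version of (\ref{eq:APTAS:SRCS:smallWide}) resp.\ (\ref{eq:APTAS:SRCS:smallLayer}), namely $(1+\eps)3\eps\greedyMakespan$ resp.\ $(1+\eps)\eps\delta\greedyMakespan$, giving (\ref{eq:SRCS:APTAS:processingtime1}) and (\ref{eq:SRCS:APTAS:processingtime2}). I do not anticipate a genuine obstacle here: the construction is essentially bookkeeping, and the only things to watch are tracking the multiplicities $C'(j)$ so that $\sum_{j}C'(j)$ correctly reproduces the cardinality of the narrow part, and the fact that $\tilde x$ puts no weight on pairs $(C,w)$ with $w_s(C)$ strictly above $w$, which is precisely what legitimises the collapse of the left-hand sides in (\ref{eq:APTAS:SRCS:windowProcessingTime}) and (\ref{eq:APTAS:SRCS:windowResourceRequirement}).
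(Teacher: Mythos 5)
Your construction of $(\tilde x,\tilde y)$ is exactly the one used in the paper (aggregate $x_{\pre}$ over $C'$ with the same wide part for $\tilde x$, and route the narrow jobs through the main window for $\tilde y$), and your verification argument is the same, just spelled out more explicitly for the window constraints. The proposal is correct and matches the paper's proof.
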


\begin{proof} 
	To generate this solution, we look at each layer $s$ and each configuration $C \in \conf_{\LP,s,W}$ and sum up the processing time of each configuration $C' \in \conf_{\LP,s}$, which is reduced to $C$, i.e., $C'|_{\jobs_{SW}'} = C$. 
	Building a generalized configuration, we combine $C$ with its main window $w(C) \in \wind_{\pre}$. 
	More precisely for each $C \in \conf_{\LP,s,W}$, we define
	\[\tilde{x}_{(C,w(C))} := \sum_{\substack{C' \in \conf_{\LP,s}\\ C'{|_{\jobs_{SW}'}} = C}} x_{\pre_{C',s}}.\]
	Equations (\ref{eq:SRCS:APTAS:processingtime1}) and (\ref{eq:SRCS:APTAS:processingtime2}) hold for this choice for $\tilde{x}$ since the processing time of each configuration is added to exactly one generalized configuration and $x_{\pre}$ fulfills equations (\ref{eq:APTAS:SRCS:smallWide}) and (\ref{eq:APTAS:SRCS:smallLayer}).
	With a similar argument, one can see that inequality (\ref{eq:APTAS:SRCS:wideJobs}) holds since $x_{\pre}$ fulfills equation (\ref{eq:APTAS:SRCS:jobs}). 	
	
	On the other hand, we have to ensure that inequalities (\ref{eq:APTAS:SRCS:narrowJobs}) to (\ref{eq:APTAS:SRCS:windowResourceRequirement}) hold. 
	For this purpose, we look at each layer $s$ and each configuration $C \in \conf_{\LP,s}$ and consider the reduced configuration $C|_{\jobs_{SW}}$ and its main window $w := w(C|_{\jobs_{SW}})$. 
	For each job $j \in \jobs_N$, we add its processing time in $C$, which is given by $C(j)(x_\pre)_{C,s}$, to the window $w$. 
	More precisely for each layer $s$, each window $w \in \wind_s$, and each job $j \in \jobs_{SN}$, we define
	\[\tilde{y}_{j,w,s} :=  \sum_{\substack{C \in \conf_{\LP,s}\\ w({C|_{\jobs_{SW}'}}) = w}} C(j)x_{\pre_{C,s}}.\]
	Since the configuration $C$ was valid and equations (\ref{eq:APTAS:SRCS:smallWide}) (\ref{eq:APTAS:SRCS:smallLayer}), and (\ref{eq:APTAS:SRCS:jobs}) hold for $x_{\pre}$ the equations (\ref{eq:APTAS:SRCS:narrowJobs}) to (\ref{eq:APTAS:SRCS:windowResourceRequirement}) hold for $(\tilde{x},\tilde{y})$ as a direct consequence.
\end{proof}

Let $(\tilde{x},\tilde{y})$ be the solution to $\LP_{W}$ generated for $x_{\pre}$  by Lemma \ref{lma:SRCS:APTAS:firstWindowSolution}.
Note that the number of non-zero components in $(\tilde{x},\tilde{y})$ is bounded by $\Oh( \log(m)/\eps^2\delta)$ since in $x_{\pre}$ there are at most $\Oh( \log(m)/\eps^2\delta)$ non-zero components and, for each of these components, we generate at most two non-zero components in $(\tilde{x},\tilde{y})$.
Furthermore, since each configuration in $\conf_{\LP,W}$ contains at most $1/\eps$ jobs and there are at most $\Oh(\log(1/\eps)/\eps)$ different wide jobs, the number of configurations and their corresponding main windows in $(\tilde{x},\tilde{y})$ is bounded by $\Oh(|\startPoints_{\LLS}| \cdot (\log(1/\eps)/\eps)^{1/\eps}) = \Oh((1/\eps\delta) \cdot (\log(1/\eps)/\eps)^{1/\eps}) \leq  \Oh(1/\eps^{2/\eps+1}\delta)$.
As a consequence, a basic solution to $\LP_{w}$, where we add the equations (\ref{eq:SRCS:APTAS:processingtime1}) and (\ref{eq:SRCS:APTAS:processingtime2}) has at most $|\jobs_{S,W}'| + |\jobs_{S,N}'| + \Oh(1/\eps^{2/\eps+1}\delta)$ non-zero components and can be computed in $(1/\eps)^{1/\eps^{\Oh(1/\eps)}}$.
Since each job in $\jobs_{S,N}'$ uses at least one non zero component in $y$ and $|\jobs_{SW}'| \leq \Oh(1/\eps^2)$, this basic solution uses at most $\Oh(1/\eps^{2/\eps+1}\delta)$ generalized configurations or fractionally scheduled jobs from the set $\jobs_{SN}'$.

At this point, we could proceed to find an integral solution since the number of non-zero components for configurations and fractional scheduled jabs is bounded by $\Oh_{\eps}(1)$.
However, the $\Oh(1/\eps^{2/\eps+1}\delta)$ generalized configurations or fractionally scheduled jobs lead to a running time of the form $\Oh(n\log(1/\eps)) + (m\log(R)/\eps)^{1/\eps^{\Oh(1/\eps)}}$.
In the following steps, we will reduce the number of generalized configurations or fractionally scheduled jobs to be in $1/\delta\eps^{\Oh(1)}$ and, thus, reduce the running time to $\Oh(n\log(1/\eps)) + (m\log(R)/\eps)^{1/\eps^{\Oh(1/\eps)}}$.

\begin{lemma}
	Given a solution $(\tilde{x},\tilde{y})$ to $\LP_{W}(\startPoints_{\LLS}, \jobs_S')$, we can find a solution $(\bar{x},\bar{y})$ to $\LP_{W}(\startPoints_{\LLS}, \jobs_S')$ with $\pT{\bar{x}} \leq (1+\eps)\pT{\tilde{x}}$ and has at most $\Oh(1/\eps^3\delta) + |\jobs_{S,N}|$ non zero components.
	This solution can be found in at most $\Oh(\log(m)^{3})\cdot(1/\eps)^{\Oh(1/\eps)}$ operations.
\end{lemma}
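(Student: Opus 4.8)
The plan is to cut down, layer by layer, the number of distinct generalized configurations that $(\bar x,\bar y)$ is allowed to use, and then to recompute a basic solution of the window LP over this reduced support. Recall from Lemma~\ref{lma:SRCS:APTAS:firstWindowSolution} that in $(\tilde x,\tilde y)$ every layer $s\in\startPoints_\LLS$ carries generalized configurations of total processing time at most $(1+\eps)\eps\delta\greedyMakespan$ (at most $(1+\eps)3\eps\greedyMakespan$ for $s=\LLS$), but the number of distinct generalized configurations per layer is still only bounded by $\Oh((\log(1/\eps)/\eps)^{1/\eps})$, which is too large for the target running time. The goal of the reduction is to bring this down to $\Oh(1/\eps)$ distinct wide configurations and $\Oh(1/\eps)$ distinct windows per layer.

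Concretely, I would apply to each layer $s$ the linear-grouping / window-reduction step underlying the AFPTAS of Jansen, Maack, and Rau~\cite{journals/talg/JansenMR19}, but separately to the two components of a generalized configuration. For the wide part, form a stack of the configurations $(C,w)$ with $\tilde x_{C,w,s}>0$, each contributing an interval of height $\tilde x_{C,w,s}$, ordered so that the componentwise-largest wide configurations lie at the bottom; cut the stack into $1/\eps$ groups of equal height, shifting each cut line past the at most $\Oh(1/\eps)$ configurations it would intersect (as in the proof of Lemma~\ref{lma:roundingOfSmallJobs}), so that those straddling configurations are kept unchanged; then replace, inside each group, the wide part by the representative coming from the adjacent group whose wide part dominates it, re-routing the resulting deficit in wide-job processing into a single leftover group appended on top of the layer. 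Symmetrically, for the narrow jobs form a stack ordered by window size, group it into $1/\eps$ pieces, and reroute each piece's narrow jobs into the (larger) window of the adjacent group, again leaving one leftover group at the top. Since windows are monotone along each stack order and narrow jobs have resource requirement below $\eps R$, neither rerouting inflates any of the window-capacity sums in (\ref{eq:APTAS:SRCS:windowProcessingTime})--(\ref{eq:APTAS:SRCS:windowResourceRequirement}); and each of the two leftover groups has height $\Oh(\eps)\cdot(1+\eps)\eps\delta\greedyMakespan$ per layer (resp.\ $\Oh(\eps^2)\greedyMakespan$ for $\LLS$), so over the $\Oh(1/\eps\delta)$ layers the resulting $(\bar x,\bar y)$ satisfies $\pT{\bar x}\le(1+\eps)\pT{\tilde x}$, while each layer now uses at most $\Oh(1/\eps^2)$ distinct generalized configurations and $\Oh(1/\eps)$ distinct windows.

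Finally, I would write down $\LP_{W}(\startPoints_\LLS,\jobs_S')$ with exactly the surviving generalized configurations and windows as variables, augmented by the two per-layer bounds (\ref{eq:SRCS:APTAS:processingtime1}) and (\ref{eq:SRCS:APTAS:processingtime2}). A direct count gives $|\jobs_{SW}'|+|\jobs_{S,N}|+\Oh(1/\eps^3\delta)$ constraints — the $\Oh(1/\eps^3\delta)$ accounting for the $\Oh(1/\eps^2)$ generalized configurations and the $\Oh(1/\eps)$ windows over all $\Oh(1/\eps\delta)$ layers together with the capacity constraints (\ref{eq:APTAS:SRCS:windowProcessingTime}) and (\ref{eq:APTAS:SRCS:windowResourceRequirement}) — and only $\log(m)^{\Oh(1)}$ variables. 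Since the reduced $(\bar x,\bar y)$ is feasible, the same max-min-resource-sharing machinery invoked in Lemma~\ref{lma:SRCS:APTAS:FirstLPSolutionSmallJobs}, combined with the basic-solution routine of Ke et al.~\cite{ke2008fast}, produces a basic solution in $\Oh(\log(m)^3)\cdot(1/\eps)^{\Oh(1/\eps)}$ operations. A basic solution has at most as many non-zero components as constraints, and since $|\jobs_{SW}'|\le\Oh(1/\eps^2)$ while every job in $\jobs_{S,N}$ already forces a non-zero component of $\bar y$, the bound $\Oh(1/\eps^3\delta)+|\jobs_{S,N}|$ on the non-zero components of $(\bar x,\bar y)$ follows.

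The step I expect to be the main obstacle is the grouping inside a layer, because the wide jobs and the narrow jobs pull in opposite directions: covering all displaced wide-job processing with few configurations wants the representative of a group to be the configuration with the \emph{largest} wide part, whereas admitting all displaced narrow jobs into legal windows wants the representative to have the \emph{largest} window, hence the \emph{smallest} wide part. Reconciling this forces the two separate groupings above, with opposite monotonicity, and the delicate verification is that each rerouting preserves feasibility of $\LP_W$ — in particular the window-capacity constraints (\ref{eq:APTAS:SRCS:windowProcessingTime})--(\ref{eq:APTAS:SRCS:windowResourceRequirement}) and the processing-time bounds (\ref{eq:SRCS:APTAS:processingtime1})--(\ref{eq:SRCS:APTAS:processingtime2}) — while the total overhead telescopes to the single leftover group per component, and hence to a global factor $(1+\eps)$.
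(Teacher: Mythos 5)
You have correctly located the obstacle — the wide part of a generalized configuration wants to be rounded up while the window part wants to be rounded down — but the two stacked groupings you propose to reconcile them are not how the paper proceeds, and the conflict you flag is never actually resolved in your argument. In particular, your sentence ``the delicate verification is that each rerouting preserves feasibility of $\LP_W$'' is where the proof would have to live, and it is missing: once a group's wide part $C$ is replaced by a dominating $C'$, the main window $w_s(C')=(R_s-\jR{C'},\,m_s-m(C'))$ shrinks in both coordinates, and there is no guarantee that the narrow jobs that were sitting in $w_s(C)$ still fit in $w_s(C')$, nor that the later window-grouping can absorb the deficit, because the two stackings are along incompatible orders.

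The paper avoids the conflict altogether by observing that the wide part does not need to be rounded at all: the constraints of $\LP_W$ indexed by wide configurations are only (\ref{eq:APTAS:SRCS:wideJobs}), one per job in $\jobs_{SW}'$, and $|\jobs_{SW}'|\in\Oh(\log(1/\eps)/\eps)$ is already small. The constraint blow-up is entirely in the per-(layer, window) capacity constraints (\ref{eq:APTAS:SRCS:windowProcessingTime})--(\ref{eq:APTAS:SRCS:windowResourceRequirement}), so only the \emph{windows} need to be reduced; the Ke et al.\ basic-solution step then automatically suppresses all but a bounded number of $(C,w,s)$-variables, with no grouping of wide configurations and hence no conflict. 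You actually state $|\jobs_{SW}'|\le\Oh(1/\eps^2)$ in your last paragraph, but do not exploit it: this is precisely the fact that makes the wide-side grouping unnecessary. There is also a second gap in your window grouping itself: windows are pairs $(w_r,w_m)$ and are only \emph{partially} ordered, so ``form a stack ordered by window size'' is ill-defined. The paper's fix is to first bucket windows by $m(w)$; since $m(w)=m_s-m(C)$ and a configuration contains at most $1/\eps$ wide jobs, there are at most $1/\eps+1$ buckets, and within each bucket $m(w)$ is constant so linear grouping on $\jR{w}$ alone is well founded, giving $\Oh(1/\eps^2)$ windows per layer and $\Oh(1/\eps^3\delta)$ overall. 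Without this $m(w)$-bucketing observation your window stack is not a chain, and the linear-grouping argument does not go through.
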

\begin{proof}
	Consider a layer $s \in \startPoints$ and the set of windows $\wind_s$ occurring in this layer in the considered solution $(\tilde{x},\tilde{y})$. 
	At the moment there can be up to $\Oh((\log(1/\eps)/\eps)^{1/\eps})$ different windows in $\wind_s$.
	We will reduce this number to at most $\Oh(1/\eps^2)$. 
	
	We partition the windows by the size of $m(w)$ for each window $w \in \wind_s$.
	Since the generalized configuration can contain at most $1/\eps$ wide jobs, there are at most $1/\eps +1$ different values for $m(w)$ in $\wind_s$.
	However, there is only one window that has the largest value $m(w)$ since there is no wide job scheduled next to this window.
	For the residual $1/\eps$ sets, we stack the generalized configuration corresponding to the windows in sorted order such that the widest window is at the bottom and the most narrow one at the top.
	The partition and the corresponding stack of windows can be found in $\Oh((\log(1/\eps)/\eps)^{2/\eps})$.
	
	Let $P$ be the processing time of one of these stacks.
	We shift down the windows by $\eps P$ while the configuration part of the generalized configuration is not moved.
	We partition the stack into segments of the same processing time $\eps P$ and assign the jobs contained in a window from segment $i$ to the most narrow window in the segment below $i$ and change window corresponding to the generalized configuration accordingly.
	The jobs from the bottom most segment are assigned to the window $(R,m)$, which will be processed at the end of the schedule.
	This reassignment of jobs to windows can be done in $\Oh((\log(1/\eps)/\eps)^{2/\eps} \cdot |\jobs_{S,N}|)$.
	
	Since we remove windows with total processing time at most $\eps P$ from each stack, we remove windows with processing time at most $\eps\pT{\tilde{x}}$ total.
	After this rounding for each stack, we have at most $1/\eps$ windows left. Since we have at most $1/\eps$ stacks per layer and at most $\Oh(1/\eps\delta)$ layer, the total number of windows is reduced to $\Oh(1/\eps^3\delta)$.
	
	Given this solution, we transform it to a basic solution using the algorithm by Ke et al.~\cite{ke2008fast} in at most $\Oh((|\startPoints|\cdot |\wind| +|\jobs_{S}|)^{1.5356}|\startPoints|\cdot |\wind|\cdot |\jobs_{S}|/\eps) = \Oh(\log(m)^{3})\cdot(1/\eps)^{\Oh(1/\eps)}$.
\end{proof}

\paragraph*{Generating an Integral Schedule}
In the final step of the algorithm, after the binary search framework has found the correct value for $T'$, we use this relaxed solution to schedule the original jobs in $\jobs_S$.
Again, we use the same techniques as described in \cite{journals/talg/JansenMR19}.
More precisely we use the same algorithm as described in Lemma 2.5 of that paper:

\begin{lemma}[Lemma 2.5 of \cite{journals/talg/JansenMR19}]
	Let $(\bar{x}, \bar{y})$ be a basic solution to the linear program, such that the total number off fractional scheduled narrow jobs and used configurations is bounded by $K$. 
	There is a solution which places the jobs integrally and has a makespan of at most
	$(1+\Oh(\eps))P(\bar{x}) + (1+\eps + K)p_{\max}$-
\end{lemma}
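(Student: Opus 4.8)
Since this statement is quoted verbatim from~\cite{journals/talg/JansenMR19} (their Lemma~2.5), the plan is to reproduce the rounding procedure of Jansen, Maack and Rau, so I sketch how that construction goes. The input is the basic solution $(\bar x,\bar y)$, whose non-zero components consist of at most $K$ generalized configurations $(C,w)$ with positive processing time together with at most $K$ jobs that are cut across several windows; in particular the number of non-empty layers, the number of distinct windows, and the number of these fractionally scheduled jobs are each bounded by the number of used configurations, hence by $K$. First I would set the fractional jobs aside and, inside every layer $s$, stack the generalized configurations it uses in an arbitrary order; this yields, for each $s\in\startPoints_{\LLS}$, one time interval of length $\pTs{\bar x}$ during which the multiset $C$ of wide-job slots and the narrow-job window $w$ are fixed.

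Next I would make the wide part integral. Equation~(\ref{eq:APTAS:SRCS:wideJobs}) guarantees that, summed over the stacked configurations, the total slot-time reserved for each wide size class equals its processing time, i.e.\ the combined length of the original jobs in that class; I would place those jobs consecutively into the reserved slots, permuting the resource positions inside each configuration so that a job in progress keeps its position whenever its class survives into the next stacked configuration, and whenever a class disappears at a configuration boundary I remove the at most one original job it currently has in progress and put it aside. The narrow jobs routed to a window $w$ I would pack greedily into the box of processing time $\pTs{w}$, $m(w)$ machines and $\jR{w}$ resources; because narrow jobs are thin in the resource dimension (every narrow job satisfies $\jR{j}<\eps R$), a next-fit / NFDH-type packing leaves only an $\Oh(\eps)$-fraction of the resource dimension idle per level, so equations~(\ref{eq:APTAS:SRCS:windowProcessingTime}) and~(\ref{eq:APTAS:SRCS:windowResourceRequirement}) let all of them fit after stretching the window by a factor $(1+\Oh(\eps))$ and removing at most one unfinished top level. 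A careful accounting of all the jobs removed in these two steps, together with the at most $K$ fractional jobs set aside at the start, bounds their number by $K$, each of processing time at most $\pmax$; placing them on top of the schedule one block at a time adds at most $(K+1)\pmax$, the extra $\eps\pmax$ absorbing the interaction between the $(1+\Oh(\eps))$ stretch and the $\pmax$-sized pieces, and the multiplicative stretch applied layerwise turns the total processing time $P(\bar x)$ into $(1+\Oh(\eps))P(\bar x)$. Altogether this gives the claimed makespan $(1+\Oh(\eps))P(\bar x)+(1+\eps+K)\pmax$.

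The step I expect to be the main obstacle is the narrow-jobs-in-a-window packing: a naive Garey--Graham list schedule inside a window would cost a factor close to three, since the resource term of that bound already contributes $2\pTs{w}$, so one genuinely needs the thinness of the narrow jobs and an area/NFDH argument whose waste is $\Oh(\eps)$ rather than a constant. The second delicate point is the charging: the removals from the class-preserving wide-slot routing, the unfinished window levels, and the fractional jobs have to be combined so that their total number stays within the budget $K$ (rather than scaling, say, with the number of configurations times $1/\eps$); once the routing is set up so that at most one class disappears per configuration boundary and at most one level is lost per window, this counting and the final "schedule everything removed on top" step are routine.
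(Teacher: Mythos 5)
The paper does not prove this lemma: it cites it verbatim as Lemma~2.5 of~\cite{journals/talg/JansenMR19} and defers the proof entirely to that reference. You correctly noticed this, and what you present is a reconstruction of the argument from the referenced paper rather than something to compare against a proof found here. So there is no ``paper's own proof'' to check your sketch against within this source.

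That said, your sketch has the right shape and captures the key moves one expects in such a rounding argument: counting the non-zero components of a basic solution to bound the number of used generalized configurations and fractional narrow jobs by $K$; stacking configurations layerwise and routing wide jobs through persistent resource slots so that at most one job is cut per configuration boundary; packing narrow jobs inside windows with an area/NFDH-style argument that exploits $\jR{j} < \eps R$ to keep the per-level waste at $\Oh(\eps)$ rather than a constant; and collecting all set-aside jobs into a block of makespan $\Oh(K)\pmax$ on top. You also correctly flag where the real work lives (the window packing must beat the naive Garey--Graham factor, and the removal count must not blow up to $K/\eps$). The bookkeeping you give for $(1+\eps+K)\pmax$ is plausible but informal; a full verification would require the actual text of Lemma~2.5's proof in~\cite{journals/talg/JansenMR19}, which this paper does not reproduce.
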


Since the total number of configurations and fractionally scheduled small jobs is bound by $\Oh(1/\eps^3\delta)$ in the given solution to the linear program, 
and $p_{\max} \leq \mu T \leq \delta\eps^4T$, the schedule is extended by at most $\Oh(\eps)T$ when scheduling the jobs integrally.

\section{Conclusion}
In this paper, we presented an algorithm for \acl{resource} with absolute approximation ratio $(3/2 +\eps)$, which closes the gap between inapproximation result $3/2$ and the best algorithm.
Furthermore, we presented an AEPTAS for this problem.

It would be interesting to see if the techniques used in these algorithms can be extended to the moldable case of the problem; i.e., the case where the processing time of a job depends on the number of resources that are assigned to it.

\bibliography{lowerBound}

\end{document}